\DeclareMathAccent{\wtilde}{\mathord}{largesymbols}{"65}
\newcommand{\NN}{\mathbb{N}}
\newcommand{\ZZ}{\mathbb{Z}}
\newcommand{\RR}{\mathbb{R}}
\newcommand{\uf}{\underline{f}}
\newcommand{\oTM}{\overline{TM}}
\newcommand{\uTM}{\underline{TM}}
\newcommand{\of}{\overline{f}}
\newcommand{\hR}{\hat{R}}
\newcommand{\hW}{\hat{W}}
\newcommand{\hk}{\hat{k}}
\newcommand{\hx}{\hat{x}}
\newcommand{\hX}{\hat{\X}}
\newcommand{\hY}{\hat{\Y}}
\let\leqx\leqslant
\newcommand{\doasympleqx}{%
	\hbox{\ooalign{%
			\noalign{\kern.25ex}
			$\leqslant$\cr
			\noalign{\kern1.25ex}
			\smash{$\sim$}\cr
	}}%
}
\newcommand{\doasympdomleq}{%
	\hbox{\ooalign{%
			\noalign{\kern.25ex}
			$\preccurlyeq$\cr
			\noalign{\kern1.25ex}
			\smash{$\sim$}\cr
	}}%
}
\newcommand{\doasympasympdomleq}{%
	\hbox{\ooalign{%
			\noalign{\kern.25ex}
			$\preccurlyeq$\cr
			\noalign{\kern1.25ex}
			\smash{$\sim$}\cr
			\noalign{\kern0.5ex}
			\smash{$\sim$}\cr
	}}%
}
\newcommand{\av}{{\rm{av}}}
\let\leqx\leqslant
\newcommand{\doasympgeqx}{%
	\hbox{\ooalign{%
			\noalign{\kern.25ex}
			$\geqslant$\cr
			\noalign{\kern1.25ex}
			\smash{$\sim$}\cr
	}}%
}
\newcommand{\be}{\begin{equation}}
\newcommand{\ee}{\end{equation}}
\newcommand{\doasympasympleqx}{%
	\hbox{\ooalign{%
			\noalign{\kern.25ex}
			$\leqx$\cr
			\noalign{\kern1.25ex}
			\smash{$\sim$}\cr
			\noalign{\kern0.5ex}
			\smash{$\sim$}\cr
	}}%
}
\newtheorem{Theorem}{Theorem}
\newtheorem{Definition}[Theorem]{Definition}
\newtheorem{Lemma}[Theorem]{Lemma}
\newtheorem{Corollary}[Theorem]{Corollary}
\newtheorem{Remark}[Theorem]{Remark}
\def\C{{\mathcal C}}
\def\M{{\mathcal M}}
\def\P{{\mathcal P}}
\def\Q{{\mathcal Q}}
\def\R{{\mathcal R}}
\def\X{{\mathcal X}}
\def\Y{{\mathcal Y}}
\def\hX{\hat{\mathcal X}}
\def\hY{\hat{\mathcal Y}}
\def\NN{{\mathbb N}}
\def\RR{{\mathbb R}}
\def\ZZ{{\mathbb Z}}
\def\CH{{\mathcal CH}(\X,\Y)}
\def\CHl{{\mathcal CH}(\X_l,\Y_l)}
\def\CHlp{{\mathcal CH}(\X_l',\Y_l')}
\def\CHc{{\mathcal CH}_c(\X,\Y)}
\newcommand{\Eex}{E_{\text{ex}}}
\newcommand{\Ex}{E_{\text{x}}}
\numberwithin{equation}{section}
\renewcommand{\@biblabel}[1]{#1\hfill \hspace{-0.2cm}}
\begin{document}

\title{Computability of the Channel Reliability Function and Related Bounds}

\author{%
  Holger Boche\affil{1,3,4,5}
  and
  Christian Deppe\affil{2,3,}\corrauth
}

\shortauthors{H. Boche and C. Deppe}

\address{%
  \addr{\affilnum{1}}{Chair of Theoretical Information Technology, Technical University of Munich, Germany}
  \addr{\affilnum{2}}{Institute for Communications Technology, Technische Universität Braunschweig, Germany}
  \addr{\affilnum{3}}{6G-life, 6G research hub, Germany}
\addr{\affilnum{4}}{Munich Quantum Valley (MQV), Munich, Germany}
\addr{\affilnum{5}}{Munich Center for Quantum Science and Technology, Munich, Germany}}

\corraddr{christian.deppe@tu-bs.de; Tel: +49 531 391 - 2495, Fax: +49 531 391 - 5192.
}

\begin{abstract}
The channel reliability function is an important tool that characterizes the reliable transmission of messages over communication channels. For many channels, only upper and lower bounds of the function are known. We analyze the computability of the reliability function and its related functions. We show that the reliability function is not a Turing computable performance function. The same also applies to the functions related to the sphere packing bound and the expurgation bound. Furthermore, we consider the $R_\infty$ function and the zero-error feedback capacity, since they play an important role in the context of the reliability function. Both the $R_\infty$ function and the zero-error feedback capacity are not Banach Mazur computable.
\end{abstract}

\keywords{
\textbf{Turing computability, channel reliability function, zero-error feedback capacity}
\newline
\textbf{Mathematics Subject Classification:} 94-08}

\maketitle

\section{Introduction}
In \cite{S48}, C. Shannon laid the foundations of information theory by characterizing important mathematical properties of communication channels.
 Given a transmission rate $R$ less than the capacity $C$ of the channel, the probability of an erroneous decoding with respect to an optimal code decreases exponentially fast for increasing code-length \(n\in\NN\). Shannon introduced the channel reliability function \(E(R)\) as the exponent of this exponential decrease in dependence of the transmission rate \(R\).

It is a high priority goal in information theory to find a closed form expression for the channel reliability function. This formula should be computable and completely determined by the parameters of the communication task. Of course, we must specify what a closed form expression is. In \cite{C99} by Chow and in \cite{BC13} by Borwein and Crandall different approaches to define closed form expressions are discussed. In all the approaches in \cite{C99} and \cite{BC13}, the different forms representations are always satisfy the properties that the corresponding functions can be computed algorithmically using a digital computer. This can be done in a very precise manner, depending on the inputs from their domain of definition.

Shannon's characterization of the capacity for the transmission of messages via the discrete memoryless channel (DMC) in \cite{S48}, Ahlswede's characterization of the capacity for the transmission of messages via the multiple access channel in \cite{A73}, and Ahlswede's and Dueck's characterization of the identification capacity for DMCs in \cite{AD89} are important examples of closed-form solutions through elementary functions. These are all
examples of computability of the corresponding performance functions
in the above sense. The precise meaning
of computability defined by Turing will be included
later in section~\ref{Definitions}.
Lovász's characterization of the zero error capacity for the pentagram is also an example of a closed form number corresponding to Chow's definition \cite{C99} and can be computed algorithmically, 
and this is desirable as well. However, for a cyclical heptagon, the characterization is still pending for the zero error capacity. It is also unclear whether the zero error capacities of DMCs even assume computable numbers as values for computable channels. Furthermore, it is unclear whether the broadcast capacity region can be algorithmically computed. 

The Lovász $\vartheta$-function of graphs was examined in \cite{L79} from three distinct research angles that refer to various graph invariants. This exploration led to the derivation of new insights into the Shannon capacity of graphs, observations on cospectral and nonisomorphic graphs, and bounds on graph invariants, while also serving as a tutorial touching on zero-error information theory and algebraic graph theory. Additional observations on the Lovász $\vartheta$-function have been made by the author in \cite{Igal}.

In this paper, we give a negative answer to the question of whether the channel reliability function and several related bounds are algorithmically computable by Turing machines.

A great deal of research has been conducted on the topic of the channel reliability function so far. However, many problems regarding its behavior are still open (see surveys \cite{G68} and \cite{HHH08}). In fact, a characterization of the channel reliability function is not even known for binary-input-binary-output channels. This is why many efforts have been made to find corresponding computable lower and upper bounds (see \cite{B87, E55, F61}).

It is difficult to determine the behavior of the channel reliability function over the entire interval (0, C). There are approaches that have tried to algorithmically compute the reliability function, meaning they consider sequences of upper and lower bounds. 
The first paper in this direction was \cite{SGB67}, by Shannon, Gallager and Berlekamp.
It is now interesting to asked whether it is possible to compute the reliability function in this way. To formulate this, we use the theory of Turing computability \cite{T36}. In general, a function is Turing computable if an algorithm can be formulated for it. The strongest model for computability is the Turing machine. Turing machines make the concepts of algorithms and computability mathematically comprehensible, that is, they formalize these concepts. In contrast to a physical computer, a Turing machine is a purely mathematical object and can be examined using mathematical methods. 
It is important to note that the Turing machine describes the ultimate performance limit that can be achieved by today's digital computers, and even by super computers.
A Turing machine represents an algorithm or a program. A computation consists of the step-by-step manipulation of symbols or characters, which are written to and read from a memory tape, according to certain rules. Strings of these symbols can be interpreted in different ways, including numbers. In order to perform computations on abstract sets, the elements of such a set have to be encoded into the strings of symbols on the tape. This way, Turing computability can be defined for the real and complex numbers, for example. 

Since the publication of \cite{E55}, \cite{B87} it has always been of interest in information theory to use digital computers to compute approximations of the capacity of channels or channel reliability functions. The computation of the channel capacity of discrete memoryless channels (DMC) is a convex optimization problem, and an algorithm to approximate the capacity of a DMC on digital computers was published in 1972 in \cite{N1} and \cite{N2} independently.
Even for binary symmetric channels with rational crossover probabilities
(except for the case of \(p = \sfrac{1}{2}\)), the channel capacity is a transcendental number. Hence, despite the simplicity of this type of channels, their capacity can only be approximated with finite accuracy by a digital computer.
Compared to the problem of computing the channel capacity, it is a much more difficult 
task to determine the behavior of the channel reliability function on the entire interval \((0,C) \subset \RR\); a common approach is to consider sequences of upper and lower bounds on \(E(R)\). The first results in this direction were published in \cite{SGB67}.
In the present work, we investigate whether it is possible to compute the channel reliability function in this way, using a mathematically rigorous formalization of computability. In particular, our analysis is based on the theory of Turing machines and recursive functions.

Often, there does not exists a direct characterization of the behaviour of some general function on some abstract set in terms of an algorithm on a Turing machine. Hence, a common approach is to successively approximate the function by a sequence of computable upper and lower bounds for which an algorithm is available.
 Then one can ask the weaker question of whether it is possible to approximate the function in a computable way. For this we need computable sequences of computable upper and lower bounds. This analysis is also necessary for the reliability function and we have carried this out. Unfortunately, our results prove to be 
negative. The reliability function is not a Turing computable performance function as a function of the channel as an input.
We furthermore consider the $R_\infty$ function, the function for the sphere packing bound, the function for the expurgation bound and the zero-error feedback capacity, all of which are closely related to the reliability function. We consider all of these functions as functions of the channel.

As envisioned, the sixth generation (6G) of mobile networks will provide a variety of new features \cite{FB21}. The new features impose new challenges on the design of wireless communication systems. In particular, the Tactile Internet will allow not only the control of data, but also of physical and virtual objects \cite{FB21}. With such applications comes the need to address the trustworthiness of the system and its services \cite{BSPF22, FB22}.
6G will impose more diverse and challenging quality-of-service (QoS) requirements
on the network resilience and reliability, service availability,
and delay \cite{FB21}. The channel reliability function plays an important role in the reliability analysis and delay performance analysis of communication systems. It is interesting to show if the reliability analysis and delay performance analysis of communication systems can be verified automatically on digital hardware \cite{FB22}. Therefore, it is interesting to analyse channel reliability function with respect to Turing computability. The problem of Turing computability of performance functions is a central problem in information theory, since closed formulas are only known for very few performance functions. It is therefore also important to calculate corresponding performance functions on available computers with provable performance in order to verify the strict requirements for future communication systems \cite{FB21, FB22}.

The structure of the paper is as follows. We start in Section~\ref{Definitions} with the basic definitions and known statements that we need. 
In Section~\ref{sphere} we first consider the $R_\infty$ function. We consider the decidability of connected sets with the $R_\infty$ function and show that only an approximation from below is possible. This has consequences for the sphere packing bound and we show that this is not a Turing computable performance function. In Section~\ref{reliability} we then consider the reliability function and show that it is also not a Turing computable performance function. We can show the same for the expurgation bound. In Section~\ref{feedback} we consider the zero-error feedback capacity. It is closely related to the $R_\infty$ function. First we answer a question for the zero-error capacity with feedback (which Alon asked in \cite{AL06}) for the case without feedback (we examined this in \cite{BD20DAM}). Then we show that the zero-error feedback capacity is not Banach-Mazur computable. Furthermore, the zero-error feedback capacity cannot be approximated by computable increasing sequences of computable functions. We characterize the superadditivity of the zero-error feedback capacity and show that the $R_\infty$ function is additive. 
In section~\ref{expur} we analyze the behavior of the expurgation-bound rates.
In conclusion, we summarize what our results mean for the channel reliability function.
Our results show that in general, there cannot be a simple recursive closed form formula for the channel reliability function in a very precise interval.

Some results from this paper were presented at the IEEE International Symposium on Information Theory in Espoo, as noted in \cite{boche2022computability}.

\section{Definitions and Basic Results}\label{Definitions}

\subsection{Basic Concepts from Computability Theory}
In this section we give the basic definitions and results from computability theory that we need for this work.
We start with the most important definitions of computability.
To define computability, we use the concept of a Turing machine \cite{T36}. 

Turing Machines are a mathematical model of what we intuitively understand as
computation machines. In this sense, they yield an abstract idealization
of today's real-world computers. Any algorithm that can be executed by a
real-world computer can, in theory, be simulated by a Turing machine, and
vice versa. In contrast to real-world computers, however, Turing Machines
are not subject to any restrictions regarding energy consumption, computation
time or memory size. All computation steps on a Turing machine are
assumed to be executed with zero chance of error.

Recursive functions, more specifically referred to as \emph{\(\mu\)-recursive functions}, form a special subset of the set \(\bigcup_{n=0}^{\infty} \big\{ f : \NN^{n} \hookrightarrow \NN \big\}\), where we use the symbol "\(\hookrightarrow\)" to denote a \emph{partial mapping}. The set of recursive functions characterizes the notion of computability
through a different approach. Turing machines
and recursive functions are equivalent in the following sense:
a function \(f : \NN^{n} \hookrightarrow \NN\) is computable by a Turing machine if and only if it is a partial recursive function.

In the following, we will introduce some definitions from \emph{computable analysis} \cite{W00, So87, PoRi17}, which we will subsequently apply.

\begin{Definition}\label{ber}
A sequence of rational numbers $\{r_n\}_{n\in\NN}$ is called a computable sequence if  there  exist recursive  functions $a,b,s:\NN\to\NN$ 
with $b(n)\not = 0$ for all $n\in\NN$ and 
\[
r_n= (-1)^{s(n)}\frac {a(n)}{b(n)},\ \    n\in\NN. 
\]
\end{Definition}
\begin{Definition}\label{defeff}
We say that a computable sequence $\{r_n\}_{n\in\NN}$  of rational numbers converges
effectively, i.e., computably, to a number $x$, if a recursive function $a:\NN\to\NN$ exists such that 
$|x-r_n|<\frac 1{2^N}$ for all $N\in\NN$ and all 
$n\in\NN$ with $n\geq a(N)$ applies.
\end{Definition}
We can now introduce computable numbers.
\begin{Definition}\label{compreal}
A real number $x$ is said to be computable if there exists a computable sequence of rational numbers $\{r_n\}_{n\in\NN}$, such that $|x-r_n|<2^{-n}$ 
for all $n\in\NN$. We denote the set of computable real numbers by $\RR_c$.
\end{Definition}
Next we need suitable subsets of the natural numbers.
\begin{Definition}
A set $A\subset\NN$ is called recursive if there 
exists a recursive function $f$, such that $f(x)=1$ if $x\in A$ and $f(x)=0$ if $x\in A^c$.
\end{Definition}
\begin{Definition} A set $A\subset \NN$ is recursively enumerable if there exists a recursive function whose domain is exactly $A$.
\end{Definition}
\begin{Remark}
For the definition of recursive and partial recursive functions see \cite{W00}. Recursive functions $f:\NN\to\NN$ are the building blocks to develop the framework for computing theory on rational numbers, on real numbers and on related functions defined over these number fields. This theory captures exactly what can be done in theory with digital computers on these number fields. We will then introduce the concept of computable performance functions on the basis of computability theory. It is important to note that computability theory formalizes as computable exactly what is computable with perfect digital computers.
\end{Remark}

\subsection{Basic Concepts of Information Theory}

To define the reliability function and its related functions, we first need the definition
of a discrete memoryless channel. In the theory of transmission, the receiver must be in a position to
successfully decode all the messages transmitted by the sender.

Let $\X$ be a finite alphabet. We denote the set of probability distributions by $\P(\X)$.
We define the set of computable
probability distributions $\P_c(\X)$ as the set of all probability distributions $P\in\P(X)$ such that $P(x)\in \RR_c$ for all 
$x\in\X$.
Furthermore, for finite alphabets $\X$ and $\Y$, let $\CH$ be the set of all conditional probability
distributions (or channels) $P_{Y|X} : \X \to \P(\Y)$.
$\CHc$ denotes the set of all computable conditional probability
distributions, i.e., $P_{Y|X} (\cdot|x) \in \P_c(\Y)$ 
for every $x\in\X$.

 Let $M\subset\CHc$. We call $M$ semi-decidable if and
    only if there is a Turing machine $TM_M$ that either stops or computes forever, depending on whether $W\in M$ is true. This means $TM_M$ accepts exactly the elements of $M$ and for an input $W\in M^c=\CHc\setminus M$, computes forever.

\begin{Definition}
	A discrete memoryless channel (DMC) 
	is a triple $(\X,\Y,W)$, where $\X$ is the finite input alphabet, 
	$\Y$ is the finite output alphabet, and 
	$W(y|x)\in\CH$ with $x\in\X$, $y\in\Y$.
	The probability that a sequence $y^n\in\Y^n$ is received if 
	$x^n\in\X^n$ was sent is defined by
	$$
	W^n(y^n|x^n)=\prod_{j=1}^n W(y_j|x_j).
	$$
\end{Definition}

\begin{Definition}
    A (deterministic) block code $\C(n)$ with rate $R$ and block length $n$ consists of 
    \begin{itemize} 
    \item a message set $\M=\{ 1,2,...,M \}$ with $M=2^{nR}\in\NN$,
    \item an encoding function $e:\M\to \X^n$,
    \item a decoding function $d:\Y^n\to\M$.
    \end{itemize}
    We call such a code an $(R,n)$-code.
\end{Definition}


\begin{Definition}\mbox{}
Let $(\X,\Y,W)$ be a DMC. By $\C(n)$, we denote a code with block length $n$ and message set $\M$.
\begin{enumerate}
    \item 
    The individual message probability of error is defined by the conditional probability of error, given that message $m\in\M$ is transmitted: \[
    P_{e}(\C(n),W,m)=Pr\{d(Y^n)\neq m|X^n=e(m)\}.
    \]
    \item We define the average probability of error by \[
    P_{e,\av}(\C(n),W)=\frac 1{|\M|}\sum_{m\in\M} P_{e}(\C(n),W,m).\]
    $P_{e,\av}(W,R,n)$ denotes the minimum error probability 
    $P_{e,\av}(\C(n),W)$ over all codes $\C(n)$ of block length $n$ and with message set $\M=2^{nR}$.
    \item We define the maximal probability of error by \[
    P_{e,\max}(\C(n),W)=\max_{m\in\M} P_{e}(\C(n),W,m).\]
    $P_{e,\max}(W,R,n)$ denotes the minimum error probability 
    $P_{e,\max}(\C(n),W)$ over all codes $\C(n)$ of block length $n$ and with message set $\M=2^{nR}$.
    \item The Shannon capacity for a channel $W\in\CH$ is defined by
    \[
    C(W):=\sup \{R:\lim_{n\to\infty} P_{e,\max}(W,R,n)=0\}.
    \]
    \item The zero-error capacity for a channel $W\in\CH$ is defined by
    \[
    C_0(W):=\sup \{R:P_{e,\max}(W,R,n)=0 \text{ for some } n\}.
    \]
    \end{enumerate}
\end{Definition}
\begin{Remark}
For $R$ with $C_0(W)<R<C(W)$, there exists $A(W,R), B(W,R)\in\RR^+$, such that
\[
2^{-nA(W,R)+o(1)}\leq R_{e,\max}(W,R,n)\leq 2^{-nB(W,R)+o(1)}.
\]
\end{Remark}
We also define the discrete memoryless channel with noiseless feedback (DMCF). By this we mean that in addition to the DMC there exists a return channel which sends the element of $\Y$ actually received back from the receiving point to the transmitting point. It is assumed that this information is received at the transmitting point before the next letter is sent, and can therefore be used for choosing the next letter to be sent. We assume that this feedback is noiseless. 
We denote the feedback capacity of a channel $W$ by $C^{FB}(W)$ and the zero-error feedback capacity by $C_0^{FB}(W)$.
Shannon proved in \cite{S56} that $C(W)=C^{FB}(W)$. This is in general not
true for the zero-error capacity. We will see that the zero-error (feedback) capacity is related to the reliability function, which we analyze in this paper. It is defined as follows.

\begin{Definition}
The channel reliability function (error exponent) is defined by
\be\label{eqreliability} 
E(W,R)=\limsup_{n\to\infty} -\frac 1n \log_2 P_{e,\max} (W,R,n).
\ee  
\end{Definition}
\begin{Remark}
We make use of the common convention that \(\log_2 0 := -\infty\).
\end{Remark}
\begin{Remark}
We need the $\limsup$ in \eqref{eqreliability} because it is not known whether the limit value, i.e. the limts on the right-hand side of \eqref{eqreliability}, exist.
\end{Remark}
The first simple observation is that for $R > C(W)$, we have $E(W,R) = 0$ and if $C_0(W)>0$ for
$0\leq R < C_0(W)$, we have $E(W,R) = +\infty$.
One well-known upper bound is the sphere packing bound, 
which can be defined as follows (see \cite{B87}).
\begin{Definition}
Let $\X,\Y$ be finite alphabets and $(\X,\Y,W)$ be a DMC. Then for all $R\in (0,C(W))$, we define the sphere packing bound function 
\be
E_{SP}(W,R)=\sup_{\rho>0}\max_{P\in\P(\X)} \left(-\log\sum_y\left(\sum_x P(x)W(y|x)^{\frac 1{1+\rho}}\right)^{1+\rho}-\rho R\right).
\ee
\end{Definition}

\begin{Theorem}[Fano 1961, Shannon, Gallager, Berlekamp 1967]
For any DMC $W$ and for all $R\in (0,C(W))$, it holds
\[
E(W,R) \leq E_{sp}(W,R).
\]
\end{Theorem}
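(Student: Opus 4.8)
The plan is to follow the sphere-packing / hypothesis-testing argument of Shannon, Gallager and Berlekamp: lower-bound the error probability of a single, suitably chosen codeword in an arbitrary code, and then identify the resulting exponent with $E_{sp}$. It is convenient to write $E_0(\rho,P):=-\log\sum_y\bigl(\sum_x P(x)W(y\mid x)^{1/(1+\rho)}\bigr)^{1+\rho}$, so that $E_{sp}(W,R)=\sup_{\rho>0}\max_{P\in\P(\X)}\bigl(E_0(\rho,P)-\rho R\bigr)$.

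First I would fix $R\in(0,C(W))$ and let $\C(n)$ be any code of block length $n$ with message set $\M$, $|\M|=M\geq 2^{nR}$, and pairwise disjoint decoding sets $D_1,\dots,D_M\subseteq\Y^n$. For \emph{any} probability distribution $f$ on $\Y^n$ we have $\sum_m f(D_m)\leq 1$, hence some message $m^\ast$ satisfies $f(D_{m^\ast})\leq 1/M\leq 2^{-nR}$; writing $x^n=e(m^\ast)$, the correct-decoding probability obeys $W^n(D_{m^\ast}\mid x^n)\geq 1-P_{e,\max}(\C(n),W)$. So $D_{m^\ast}$ is almost certain under $W^n(\cdot\mid x^n)$ but has $f$-measure at most $2^{-nR}$, and everything reduces to the Neyman--Pearson question of how large $W^n(D^c\mid x^n)$ must be when $f(D)\leq 2^{-nR}$. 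The good choice of $f$ is a product $f=g_{\hat P,\rho}^{\times n}$, where $\hat P$ is the empirical type of $x^n$, the parameter $\rho>0$ is free, and $g_{P,\rho}$ is the probability distribution on $\Y$ obtained by normalizing $y\mapsto\bigl(\sum_x P(x)W(y\mid x)^{1/(1+\rho)}\bigr)^{1+\rho}$. Grouping the coordinates of $x^n$ by symbol value turns all relevant single-letter quantities into functions of $\hat P$ alone, and since there are at most $(n+1)^{|\X|}$ types it suffices to treat a fixed type, at the cost of a sub-exponential factor only.

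For a codeword of type $\hat P$ I would then invoke the Shannon--Gallager--Berlekamp estimate for binary hypothesis testing --- a Chernoff bound on the appropriately tilted measure together with a Chebyshev step controlling the second-order fluctuations of the log-likelihood ratio --- to obtain
\[
P_{e,\max}(\C(n),W)\ \geq\ \exp\!\Bigl\{-n\bigl[\sup_{\rho>0}\bigl(E_0(\rho,\hat P)-\rho(R-o_1(n))\bigr)+o_2(n)\bigr]\Bigr\},
\]
with $o_1(n),o_2(n)\to 0$ uniformly over codes. Since $\sup_{\rho>0}\bigl(E_0(\rho,\hat P)-\rho R'\bigr)\leq E_{sp}(W,R')$ for every type $\hat P$, this yields $P_{e,\max}(W,R,n)\geq\exp\{-n[E_{sp}(W,R-o_1(n))+o_2(n)]\}$. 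Inserting this into \eqref{eqreliability}, letting $n\to\infty$, and using that $E_{sp}(W,\cdot)$ is convex and non-increasing on $(0,C(W))$ --- hence continuous there, so $E_{sp}(W,R-o_1(n))\to E_{sp}(W,R)$ --- gives $E(W,R)\leq E_{sp}(W,R)$.

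The hard part is the hypothesis-testing estimate with the \emph{sharp} exponent: replacing it by the crude bound coming from the relative entropy $D(W^n(\cdot\mid x^n)\,\|\,f)$ gives only a strictly weaker inequality, and recovering the full $\sup_{\rho>0}$ form requires the delicate second-moment control of the tilted log-likelihood ratio that is the technical core of the SGB argument. A secondary issue is the bookkeeping of the vanishing terms --- the loss in passing from the true type of $x^n$ to the optimizing $P$, the polynomial type-count, and the $o_1(n)$-shift of the rate, each of which has to stay inside $(0,C(W))$ and to disappear in the limit --- and this is precisely where the continuity of $E_{sp}$ in $R$ enters.
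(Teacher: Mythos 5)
The paper does not prove this theorem at all: it is quoted as a classical result, attributed to Fano (1961) and Shannon--Gallager--Berlekamp (1967), so there is no internal proof to compare yours against. Your sketch is the standard sphere-packing argument from those references: pass to a single codeword whose decoding set has small measure under an auxiliary product distribution, reduce to binary hypothesis testing between $W^n(\cdot\mid x^n)$ and the tilted output distribution, invoke the SGB sharp lower bound on the two error probabilities, and clean up with the type-counting and the continuity of $E_{sp}(W,\cdot)$ on $(0,C(W))$. That is the right route, and your remarks about why the crude divergence bound is insufficient and where the continuity of $E_{sp}$ is needed are accurate.

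Two points deserve attention. First, as written the argument is circular in its ordering: you fix $f=g_{\hat P,\rho}^{\times n}$ with $\hat P$ the type of $x^n=e(m^\ast)$, but $m^\ast$ is only selected \emph{after} $f$ is fixed, via $\sum_m f(D_m)\leq 1$. The standard repair, which your type-counting remark gestures at but should be made explicit, is to first extract a fixed-composition subcode containing at least $M/(n+1)^{|\X|}$ codewords (so the rate loss is $o(1)$), and only then define $f$ from that common composition and run the pigeonhole argument inside the subcode. Second, the core of the proof --- the hypothesis-testing estimate with the exact exponent $\sup_{\rho>0}(E_0(\rho,\hat P)-\rho R)$, obtained from the tilted measure together with the second-moment (Chebyshev) control of the log-likelihood ratio --- is invoked rather than proved; you acknowledge this honestly, and for a theorem of this provenance citing it is legitimate, but one should be clear that this lemma, not the surrounding bookkeeping, is where essentially all the work of Fano and Shannon--Gallager--Berlekamp lies.
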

The sphere packing upper bound is an important upper bound. The following two lower bounds of the reliability function are also very important. In \cite{G65} the random coding bound was defined as follows:
\begin{Definition}
Let $\X,\Y$ be finite alphabets and $(\X,\Y,W)$ be a DMC. Then for all $R\in (0,C(W))$, we define the random coding bound function 
\begin{align}
E_{\text{r}}(W,R) & = \max_{0\leq \rho \leq 1 } E_0(W,\rho)-\rho R,\ {\rm where} \label{eq:er_std}\\
	E_0(\rho) & = \max_{P\in\P(\X)}  \left[ -\log \sum_y \left(\sum_x P(x) W(y|x)^{1/{(1+\rho)}}\right)^{1+\rho} \right] \,.
\end{align}
\end{Definition}
\begin{Theorem}
Let $\X,\Y$ be finite alphabets and $(\X,\Y,W)$ be a DMC; then
\[
E(W,R)\geq E_r(W,R).
\]
\end{Theorem}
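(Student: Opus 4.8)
The statement is Gallager's random coding bound, and the plan is to prove it by the classical random coding argument, followed by an expurgation step that converts a bound on the \emph{average} error probability into one on the \emph{maximal} error probability (which is what enters the definition of $E(W,R)$). Fix $R\in(0,C(W))$, a parameter $\rho\in[0,1]$, and an input distribution $P\in\P(\X)$. For each block length $n$ I would draw a random codebook with $N=2\cdot 2^{nR}$ codewords $X_1^n,\dots,X_N^n$, independently, each according to the product law $P^n(x^n)=\prod_{j=1}^n P(x_j)$, and decode by maximum likelihood.

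The core estimate is the Gallager bound on the individual error probability: for every message $m$ and every $\rho\in[0,1]$,
\[
P_{e}(\C(n),W,m)\leq\sum_{y^n}W^n(y^n|X_m^n)\left(\sum_{m'\neq m}\left(\frac{W^n(y^n|X_{m'}^n)}{W^n(y^n|X_m^n)}\right)^{\!\frac{1}{1+\rho}}\right)^{\!\rho},
\]
which holds because a maximum-likelihood error for message $m$ forces the inner sum to be at least $1$. Taking the expectation over the codebook, using that the codewords are i.i.d.\ and that $t\mapsto t^{\rho}$ is concave for $\rho\le 1$ (so that the expectation may be pulled inside the power by Jensen's inequality), one obtains
\[
\mathbb{E}\,P_{e,\av}(\C(n),W)\leq (N-1)^{\rho}\sum_{y^n}\left(\sum_{x^n}P^n(x^n)\,W^n(y^n|x^n)^{\frac{1}{1+\rho}}\right)^{\!1+\rho}.
\]
Since the channel is memoryless and $P^n$ is a product distribution, the sum over $y^n$ factorizes into an $n$-th power, so the right-hand side equals $(N-1)^{\rho}\,2^{-nE_0(\rho,P)}$, where $E_0(\rho,P):=-\log\sum_y(\sum_x P(x)W(y|x)^{1/(1+\rho)})^{1+\rho}$, so that $E_0(\rho)=\max_{P}E_0(\rho,P)$. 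Bounding $(N-1)^{\rho}$ by a constant $c_\rho$ (independent of $n$) times $2^{n\rho R}$, I conclude that there is a deterministic codebook $\C(n)$ with $N$ codewords satisfying $P_{e,\av}(\C(n),W)\le c_\rho\,2^{-n(E_0(\rho,P)-\rho R)}$.

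Now I expurgate: at most half of the $N$ codewords can have individual error probability exceeding twice the average, so discarding those leaves an $(R,n)$-code $\C'(n)$ with $P_{e,\max}(\C'(n),W)\le 2c_\rho\,2^{-n(E_0(\rho,P)-\rho R)}$. Hence $P_{e,\max}(W,R,n)\le 2c_\rho\,2^{-n(E_0(\rho,P)-\rho R)}$ for every $n$, so that
\[
-\frac1n\log_2 P_{e,\max}(W,R,n)\geq E_0(\rho,P)-\rho R-\frac{\log_2(2c_\rho)}{n}.
\]
Letting $n\to\infty$ yields $E(W,R)\ge E_0(\rho,P)-\rho R$ (and if this quantity is negative the inequality is trivial, since $E(W,R)\ge 0$ always, the error probability being at most $1$). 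Finally, since the term $-\rho R$ does not depend on $P$, I may first maximize over $P\in\P(\X)$, replacing $E_0(\rho,P)$ by $E_0(\rho)$, and then over $\rho\in[0,1]$, to obtain $E(W,R)\ge\max_{0\le\rho\le1}\big(E_0(\rho)-\rho R\big)=E_r(W,R)$.

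I expect the only genuinely delicate point to be the derivation of the Gallager bound together with the interchange of expectation and the $\rho$-th power via Jensen's inequality, valid uniformly for all $\rho\in[0,1]$; once that inequality and the ensuing factorization over the $n$ channel uses are in place, the expurgation count, the comparison of the rates $R$ and $R+\tfrac1n$, and the passage to the $\limsup$ are all routine.
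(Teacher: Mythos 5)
Your argument is correct: it is the classical Gallager random-coding derivation (ML decoding, the Gallager union-type bound, Jensen's inequality for $\rho\in[0,1]$, single-letter factorization over the memoryless uses) followed by the standard expurgation of the worse half of a $2\cdot 2^{nR}$-word codebook to pass from average to maximal error probability, which is exactly what the definition of $E(W,R)$ via $P_{e,\max}$ requires. The paper itself gives no proof of this theorem — it is quoted as a known result of Gallager \cite{G65} — so there is nothing to compare beyond noting that your write-up is the standard argument underlying that citation, and it is sound, including the handling of the $\limsup$ and the trivial case $E_0(\rho,P)-\rho R<0$.
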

Gallager also defined in \cite{G65} the $k$-letter expurgation bound as follows:
\begin{Definition}
Let $\X,\Y$ be finite alphabets and $(\X,\Y,W)$ be a DMC, then for all $R\in (0,C(W))$ we define the $k$-letter expurgation bound function
\begin{align}
\Eex(W,R,k)&= \sup_{\rho\geq 1} \Ex(\rho,k)-\rho R\label{eq:def-Eex^n} \\
\Ex(\rho,k) & =-\frac{\rho}{k}\log\min_{P_{X^k}\in\P(\X^k)}Q^k(\rho,P_{X^k})\\
Q^k(\rho,P_{X^k}) & =\sum_{x^k,x^{'k}}
 P_{X^k}(x^k)P_{X^k}(x'^k)g_k(x^k,x'^k)^{1\over\rho}\\
g_k(x^k,x'^k) &=\sum_{y^k}\sqrt{W^k(y^k|x^k)W^k(y^k|x'^k)}.\label{eq:defgn}
\end{align}
\end{Definition}
\begin{Theorem}
Let $\X,\Y$ be finite alphabets and $(\X,\Y,W)$ be a DMC. Then for all $R\in (0,C(W))$, we have
\be
E(W,R) \geq  \lim_{k\to\infty} \Eex (W,R,k) \label{eq:ex_inf}.
\ee
\end{Theorem}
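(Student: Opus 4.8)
The plan is to prove the stronger finite-level statement $E(W,R)\ge\Eex(W,R,k)$ for every fixed $k\in\NN$; the claim then follows at once, since $\lim_{k\to\infty}\Eex(W,R,k)\le\sup_{k\in\NN}\Eex(W,R,k)\le E(W,R)$. The only ingredient is Gallager's expurgated random-coding argument \cite{G65}, applied not to $W$ itself but to the memoryless super-channel $W^{k}$ viewed as a channel with input alphabet $\X^{k}$ and output alphabet $\Y^{k}$.

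Fix $k$ and $\rho\ge1$, and pick $P_{X^k}\in\P(\X^{k})$ attaining the minimum in the definition of $\Ex(\rho,k)$, so that $Q^{k}(\rho,P_{X^k})=2^{-\frac{k}{\rho}\Ex(\rho,k)}$. Since $E(W,R)=\limsup_{n\to\infty}\bigl(-\tfrac1n\log P_{e,\max}(W,R,n)\bigr)$ is at least the $\limsup$ taken along any subsequence, it suffices to treat block lengths $n=Nk$ with $N\in\NN$. For such an $n$, write $\X^{n}=(\X^{k})^{N}$ and generate $M'=2\cdot 2^{nR}$ random codewords, mutually independent, each an $N$-fold product of blocks drawn independently from $P_{X^k}$. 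Because $W^{n}$ factorizes over the $N$ blocks, the Bhattacharyya coefficient factorizes, $g_{n}(x^{n},x'^{n})=\prod_{l=1}^{N}g_{k}(x^{(l)},x'^{(l)})$, whence for two distinct random codewords $\mathbb{E}\bigl[g_{n}(X_i,X_j)^{1/\rho}\bigr]=Q^{k}(\rho,P_{X^k})^{N}$; equivalently $Q^{n}(\rho,P_{X^k}^{\otimes N})=Q^{k}(\rho,P_{X^k})^{N}$.

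For a fixed codebook and maximum-likelihood decoding, the pairwise error probability obeys the Bhattacharyya bound $\Pr(i\to j)\le g_{n}(x_i,x_j)$, so the union bound together with the inequality $(\sum_j a_j)^{1/\rho}\le\sum_j a_j^{1/\rho}$ (valid since $\rho\ge1$) gives, for every codeword $i$, $P_{e}(i)^{1/\rho}\le\sum_{j\ne i}g_{n}(x_i,x_j)^{1/\rho}$. Averaging over the ensemble yields a codebook with $\frac1{M'}\sum_{i}\sum_{j\ne i}g_{n}(x_i,x_j)^{1/\rho}\le(M'-1)\,Q^{k}(\rho,P_{X^k})^{N}$, so by Markov's inequality at least $M'/2=2^{nR}$ of its codewords satisfy $\sum_{j\ne i}g_{n}(x_i,x_j)^{1/\rho}\le 2(M'-1)\,Q^{k}(\rho,P_{X^k})^{N}$. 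Keeping exactly those codewords and decoding the resulting $(R,n)$-code by maximum likelihood — which only shrinks the relevant union bounds — gives
\[
P_{e,\max}(W,R,n)\le\bigl(4\cdot 2^{nR}\,Q^{k}(\rho,P_{X^k})^{N}\bigr)^{\rho}=4^{\rho}\,2^{-n\left(\Ex(\rho,k)-\rho R\right)},
\]
where we used $Nk=n$. Hence $-\tfrac1n\log P_{e,\max}(W,R,n)\ge\Ex(\rho,k)-\rho R-\tfrac{2\rho}{n}$; letting $n=Nk\to\infty$ gives $E(W,R)\ge\Ex(\rho,k)-\rho R$, and taking the supremum over $\rho\ge1$ gives $E(W,R)\ge\Eex(W,R,k)$. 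As $k$ is arbitrary, $E(W,R)\ge\lim_{k\to\infty}\Eex(W,R,k)$. (In the degenerate case $Q^{k}(\rho,P_{X^k})=0$ the construction forces $P_{e,\max}(W,R,n)=0$, i.e.\ $E(W,R)=+\infty$, so the inequality is trivial.)

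The Bhattacharyya, union, and Markov estimates are routine. The step demanding the most care is the reduction itself: realizing a length-$n$ code for $W$ as $N=n/k$ independent uses of the memoryless channel $W^{k}$, so that the quantity $Q^{k}$ from the Definition appears with exactly the exponent $n/k$, together with the (standard, but worth stating) observation that restricting the $\limsup$ defining $E(W,R)$ to $n\in k\NN$ cannot increase it, which spares us any padding argument for block lengths not divisible by $k$.
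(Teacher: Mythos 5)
Your proposal is correct, and in substance it fills in exactly what the paper leaves implicit: the paper disposes of the theorem with the single remark that the inequality ``follows from Fekete's lemma,'' i.e.\ it takes the classical per-$k$ expurgated bound $E(W,R)\geq \Eex(W,R,k)$ as known (Gallager) and uses superadditivity of $-\log\min_{P}Q^k(\rho,P)$ (via product distributions) to guarantee that $\Eex(W,R,k)$ converges to its supremum, so the per-$k$ inequalities pass to the limit. You instead prove the per-$k$ bound from scratch by running Gallager's random-coding-plus-expurgation argument on the super-channel $W^{k}$ (with the correct normalization, so that $Q^{k}(\rho,P_{X^k})$ appears with exponent $N=n/k$ and the restriction to $n\in k\NN$ is harmless for the $\limsup$), and you bypass Fekete entirely by the observation $\lim_{k}\Eex(W,R,k)\leq\sup_{k}\Eex(W,R,k)\leq E(W,R)$. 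This buys self-containedness on the coding-theoretic side, at the price of tacitly assuming the limit exists (which is precisely what Fekete's lemma is invoked for in the paper); since the theorem statement already writes the limit, this is not a gap. Two cosmetic points: your ``degenerate case'' $Q^{k}(\rho,P_{X^k})=0$ never occurs, because the diagonal terms give $Q^{k}(\rho,P)\geq\sum_{x^k}P(x^k)^2>0$; and the bookkeeping with $M'/2=2^{nR}$ silently assumes $2^{nR}\in\NN$, the same integrality convention the paper itself uses in its code definition, so no harm is done.
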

The inequality in~\eqref{eq:ex_inf} follows from Fekete's lemma.
The following Theorem is also well known (see \cite{HHH08}).
\begin{Theorem}
If the capacity $C(W)$ of a channel is positive, then for sufficiently small values of $R$ we have
\[
E_{ex}(W,R)>E_r(W,R).
\]
\end{Theorem}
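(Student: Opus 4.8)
The plan is to reduce the comparison to the single rate $R=0$ and then propagate a strict gap to a small interval $(0,\varepsilon)$. Write $E_{ex}(W,R):=\lim_{k\to\infty}\Eex(W,R,k)$, which by Fekete's lemma (the superadditivity invoked after~\eqref{eq:ex_inf}) equals $\sup_k\Eex(W,R,k)$, so $E_{ex}(W,R)\ge\Eex(W,R,1)=\sup_{\rho\ge1}(\Ex(\rho,1)-\rho R)$; abbreviate $g_1(x,x')=\sum_y\sqrt{W(y|x)W(y|x')}$, so $0\le g_1(x,x')\le1$ with $g_1(x,x')=1$ iff $W(\cdot|x)=W(\cdot|x')$ and $g_1(x,x)=1$. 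I would first record the standard facts about $E_0$ and $E_r$: $E_0$ is non-decreasing on $[0,1]$ with $E_0(0)=0$, so $E_r(W,R)\le E_0(1)$ for all $R\ge0$ and $E_r(W,0)=E_0(1)$; expanding the square shows $E_0(1)=\max_{P\in\P(\X)}\bigl(-\log\sum_{x,x'}P(x)P(x')g_1(x,x')\bigr)=\Ex(1,1)$; and $0<E_0(1)<\infty$, the strict positivity because $C(W)>0$ forces two input letters with distinct output distributions, so $g_1<1$ on that pair.

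The core step is to bound $E_{ex}(W,0)$ from below, \emph{strictly} above $E_0(1)$. Let $P^{*}$ attain the maximum in the identity above. Bounding the minimisation over input distributions in $\Ex(\rho,1)$ by its value at $P^{*}$ and then letting $\rho\to\infty$, using $g_1(x,x')^{1/\rho}=1+\rho^{-1}\ln g_1(x,x')+O(\rho^{-2})$, I would obtain $\Eex(W,0,1)=\sup_{\rho\ge1}\Ex(\rho,1)\ge\lim_{\rho\to\infty}\bigl(-\rho\log\sum_{x,x'}P^{*}(x)P^{*}(x')g_1(x,x')^{1/\rho}\bigr)=-\sum_{x,x'}P^{*}(x)P^{*}(x')\log g_1(x,x')=:L$ (with $L=+\infty$ if some $g_1$ vanishes on $\supp P^{*}$). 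Then I would show $L>E_0(1)$: Jensen's inequality for the convex map $-\log$ gives $L\ge-\log\sum_{x,x'}P^{*}(x)P^{*}(x')g_1(x,x')=E_0(1)$, and equality would force $g_1$ constant on $\supp(P^{*})\times\supp(P^{*})$; since $g_1(x,x)=1$ that constant must be $1$, i.e.\ all letters in $\supp P^{*}$ induce the same output distribution, whence $E_0(1)=0$, contradicting the previous paragraph. Hence $E_{ex}(W,0)\ge L>E_0(1)=E_r(W,0)$.

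It remains to pass from $R=0$ to small $R>0$. If $L=+\infty$, then some pair $x,x'$ of input letters has disjoint output supports ($g_1(x,x')=0$), and the uniform law on $\{x,x'\}$ already gives $\Ex(\rho,1)\ge-\rho\log\tfrac12=\rho$, so $\Eex(W,R,1)=+\infty$ for every $R<1$ and the inequality is immediate. If $L<\infty$, set $\delta=L-E_0(1)>0$, choose a finite $\rho_0$ with $-\rho_0\log\sum_{x,x'}P^{*}(x)P^{*}(x')g_1(x,x')^{1/\rho_0}>E_0(1)+\tfrac{\delta}{2}$, and then for every $R$ with $0<R<\delta/(2\rho_0)$ (in particular $R<C(W)$) one gets $E_{ex}(W,R)\ge\Ex(\rho_0,1)-\rho_0R>E_0(1)\ge E_r(W,R)$.

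I expect the delicate point to be the core step: the limit $\lim_{\rho\to\infty}(-\rho\log\sum P^{*}P^{*}g_1^{1/\rho})=L$ must be handled with care in the presence of zero entries $g_1(x,x')=0$ (where $L=+\infty$), and — more importantly — the case analysis showing Jensen's inequality $L\ge E_0(1)$ is \emph{strict} is precisely where $C(W)>0$ enters, as it rules out an optimal input distribution supported on mutually equivalent letters. The first-paragraph reductions (monotonicity of $E_0$, the square-expansion identity, $E_r(W,0)=E_0(1)$) are routine and only need a reference to Gallager.
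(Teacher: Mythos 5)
The paper does not actually prove this statement: it is quoted as a ``well known'' classical result (it goes back to Gallager's 1965 paper, cf.\ \cite{G65,SGB67}), so there is no in-paper argument to compare against. Your proof is a correct, essentially self-contained reconstruction of the standard argument, and it is the right route: reduce to the single-letter expurgated bound, use the square-expansion identity $E_0(1)=\max_P\bigl(-\log\sum_{x,x'}P(x)P(x')g_1(x,x')\bigr)=\Ex(1,1)$ together with monotonicity of $E_0$ to get $E_r(W,R)\le E_0(1)$, and then beat $E_0(1)$ by letting $\rho\to\infty$ in $\Ex(\rho,1)$ evaluated at the maximizing $P^{*}$, where strict convexity of $-\log$ (Jensen with equality forcing $g_1\equiv 1$ on $\supp P^{*}\times\supp P^{*}$, hence $E_0(1)=0$, contradicting $C(W)>0$) yields the strict gap $L>E_0(1)$; the passage to small positive $R$ via a finite $\rho_0$ and $R<\delta/(2\rho_0)$, and the separate treatment of the degenerate case $g_1=0$ on the support (where the exponent is infinite), are both handled correctly. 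Only two cosmetic remarks: the theorem's quantities are defined on $(0,C(W))$, so the final choice of $R$ should explicitly be $R<\min\{\delta/(2\rho_0),\,C(W)\}$ (harmless, since ``sufficiently small'' permits this), and the appeal to Fekete's lemma to identify $\lim_k\Eex(W,R,k)$ with $\sup_k\Eex(W,R,k)$ is only needed if one reads $E_{ex}(W,R)$ as the limiting ($k\to\infty$) bound; your lower bound through the $k=1$ term covers either reading.
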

$E_{ex}(W,R,k)$ is the expurgation bound for $k$-letter channel use. $R_k^{ex}(W)$ is the infimum of all rates $\underline{R}$ such that 
$E_{ex}(W,\underline{R},k)$ is finite on the open interval $(\underline{R}, C(W))$.
Therefore, $E_{ex}(W,R,k)$ is defined on the open interval $(R^{ex}_k(W),C(W))$.
It holds \cite{HHH08} that
\[
R_k^{ex}(W)\leq R_{k+1}^{ex}(W),\ \ W\in\CH 
\]
and
\[
\lim_{k\to\infty} R_k^{ex}(W)=C_0(W).
\]
The smallest value of R, at which the convex curve $E_{sp}(W,R)$ meets its supporting line of slope -1, is called the critical rate, and is denoted by $R_{crit}$ \cite{HHH08}.
For the certain interval $[R_{crit},C]$, the random coding lower bound corresponds to the sphere packing upper bound. The channel reliability function is therefore known for this interval. 
The channel reliability function is generally not known for the interval $[0,R_{crit}]$.
For the interval $[0,R_{crit}]$ there are also better lower bounds than the random coding lower bound.
$R_{\infty}(W)$  is the infimum of all rates $\underline{R}$ such that  $E_{sp}(W,\underline{R})$ is finite on the open interval $(\underline{R}, C(W))$.
$C_0(W) \leq R_{\infty}(W)$ applies if $C_0(W)>0$. 
The following representation of $R_\infty$ exists (see \cite{HHH08}): 
\be
R_\infty(W)=\min_{Q\in\P(\X)}\max_x\log\frac 1{\sum_{y:W(y|x)>0}Q(y)}.
\ee
There are alphabets $\X, \Y$ and channels 
$W\in\CH$ with $C_0(W) = 0$ and $R_\infty> 0$. 
Furthermore, for the zero-error feedback capacity $C^{FB}_0$, $C^{FB}_0(W)=R_\infty(W)$ if $C_0(W)> 0$. 
    If $C_0(W) = 0$, then there is a channel $W$ 
    with $C^{FB}_0(W) = 0$ and $R_\infty> 0$ (see \cite{HHH08}).







For the zero-error feedback capacity, the following is known.
\begin{Theorem}[Shannon 1956, \cite{S56}]
Let $W\in\CH$, then
\be
C_0^{FB}=\left\{\begin{array}{ll} 0 & \text{if}\  C_0(W)=0\\
\max_{P\in\P(\X)}\min_y \log_2\frac 1{\sum_{x:W(y|x)>0}P(x)} & \text{otherwise}. \end{array}\right.
\ee
\end{Theorem}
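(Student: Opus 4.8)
The plan is to reduce the statement to the combinatorial confusability structure of $W$ and then prove matching upper and lower bounds, splitting on whether $C_0(W)$ vanishes. For $y\in\Y$ write $B_y=\{x\in\X:W(y|x)>0\}$ and for $x\in\X$ write $S_x=\{y\in\Y:W(y|x)>0\}$, and let $G$ be the graph on $\X$ with $x\sim x'$ whenever $S_x\cap S_{x'}\neq\emptyset$. I would first recall the standard equivalence: $C_0(W)>0$ iff $G$ is incomplete, iff some two inputs have disjoint supports (a zero-error code of positive rate exists iff for some $n$ the channel $W^n$ has two non-confusable input words, confusability for $W^n$ being the $n$-fold strong product of the single-letter relation, so this happens exactly when $G$ has two non-adjacent vertices). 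I would also note $C_0^{FB}(W)\ge 0$ always, and that when $C_0(W)>0$ the right-hand side of the stated formula is strictly positive: taking $P$ uniform on a disjoint-support pair $\{x_1,x_2\}$ gives $P(B_y)\le\tfrac12$ for every $y$ (since $x_1,x_2\in B_y$ would mean $y\in S_{x_1}\cap S_{x_2}$), hence $\min_y\log_2\tfrac1{P(B_y)}\ge 1$. So the two branches of the formula are mutually consistent. In the case $C_0(W)=0$ the graph $G$ is complete, and I claim every zero-error feedback code has a single message, so $C_0^{FB}(W)=0$: given messages $m\neq m'$ with deterministic adaptive encoders, I would build an output word $y^n$ letter by letter, at step $t$ letting the two encoders output $x_t,x_t'$ as functions of the fixed history $y^{t-1}$ and choosing $y_t\in S_{x_t}\cap S_{x_t'}$ (nonempty since $x_t=x_t'$ or $x_t\sim x_t'$); then $y^n$ has positive probability under both message strategies, so the decoder cannot be correct on both.

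\textbf{Converse for $C_0(W)>0$.} Put $\beta=\min_{P\in\P(\X)}\max_{y\in\Y}P(B_y)$; since $\min_y\log_2\tfrac1{P(B_y)}=-\log_2\max_y P(B_y)$, the value claimed by the formula is $-\log_2\beta$, with $0<\beta<1$ by the consistency remark above. Given an arbitrary length-$n$ zero-error feedback code for $M$ messages, I would run a fictitious adversary: set $\mathcal M_0=\{1,\dots,M\}$, and given the history $y^{t-1}$ and the surviving set $\mathcal M_{t-1}$ (known to both parties through the feedback), let the encoder's assignment $m\mapsto x_t(m;y^{t-1})$ partition $\mathcal M_{t-1}$ into classes of relative sizes $P_t(\cdot)$, let the adversary pick $y_t$ maximizing $P_t(B_{y_t})$, and set $\mathcal M_t=\{m\in\mathcal M_{t-1}:x_t(m;y^{t-1})\in B_{y_t}\}$. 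Every surviving message can produce the partial output $y^t$, and $|\mathcal M_t|=\big(\max_y P_t(B_y)\big)\,|\mathcal M_{t-1}|\ge\beta\,|\mathcal M_{t-1}|$, so $|\mathcal M_n|\ge\beta^n M$. A zero-error code cannot have two distinct messages producing a common output word, so $\beta^n M\le 1$, forcing $R\le-\log_2\beta$; hence $C_0^{FB}(W)\le-\log_2\beta$.

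\textbf{Achievability for $C_0(W)>0$.} Fix $\delta>0$, pick $P^\ast$ attaining $\max_y P^\ast(B_y)=\beta$, with support $T$, and use a nested-partition scheme. Keeping a common surviving set $\mathcal M_{t-1}$, order it canonically and cut it into consecutive blocks $(\mathrm{blk}_x)_{x\in T}$ with $|\mathrm{blk}_x|\le P^\ast(x)|\mathcal M_{t-1}|+1$; the sender transmits the symbol $x$ of the block containing the true message $m^\ast$, and on receiving $y$ the receiver (knowing $x\in B_y$) updates to $\mathcal M_t=\bigcup_{x'\in B_y\cap T}\mathrm{blk}_{x'}$, which still contains $m^\ast$ and has size $\le\beta\,|\mathcal M_{t-1}|+|T|$. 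With $M=\lceil 2^{n(-\log_2\beta-\delta)}\rceil$, running this for $n_1=n-n_2$ steps gives $|\mathcal M_{n_1}|\le\beta^{n_1}M+|T|/(1-\beta)$, and the first term is $2^{-n\delta+O(1)}$, so $|\mathcal M_{n_1}|$ stays below a fixed constant $K$ for all large $n$; the remaining $n_2=\lceil\log_2 K\rceil$ channel uses then resolve the $\le K$ survivors with zero error, using the disjoint-support pair supplied by $C_0(W)>0$ to send one reliable bit per use. This produces zero-error feedback codes of rate tending to $-\log_2\beta-\delta$, and letting $\delta\to 0$ gives $C_0^{FB}(W)\ge-\log_2\beta$, completing the argument.

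\textbf{Expected main obstacle.} The crux is the converse together with the case split. The adversary bound $|\mathcal M_n|\ge\beta^n M$ degenerates precisely when $\beta=1$, which is exactly the regime $C_0(W)=0$; so that case genuinely requires the separate completeness argument and cannot be recovered from the generic estimate. One also has to be careful that the adversary selects $y_t$ as a function of $y^{t-1}$ alone, so that the constructed $y^n$ is a legitimate output sequence against an \emph{adaptive} encoder, and that the induced distributions $P_t$ in the converse are genuine distributions on $\X$ at every step. By contrast, the achievability side --- integer rounding of the block sizes and the constant-size endgame --- is routine.
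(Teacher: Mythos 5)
The paper does not prove this statement at all: it is quoted as a known result and attributed to Shannon's 1956 paper \cite{S56}, so there is no in-paper argument to compare yours against. Your blind proof is, as far as I can check, correct, and it is essentially a reconstruction of Shannon's original argument: the converse via an adversarial list-reduction (at each step the surviving-message set shrinks by a factor at most $\max_y P_t(B_y)\ge\beta=\min_P\max_y P(B_y)$, and zero error forces $|\mathcal{M}_n|\le 1$), the achievability via proportional partitioning of the common list according to an optimal $P^\ast$ with a constant-size endgame resolved through a disjoint-support input pair, and the separate completeness argument for $C_0(W)=0$, which is genuinely needed since the bound degenerates at $\beta=1$. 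Your identification $\max_P\min_y\log_2\frac{1}{P(B_y)}=-\log_2\beta$ also matches the quantity $\Psi_{FB}$ the paper uses later (note the paper's displayed definition of $\Psi_{FB}$ with $\min_P\min_y$ is a typo; its minimax computation confirms $\min_P\max_y$). Two small points you should make explicit for completeness: the reduction to deterministic adaptive encoders (for zero error, any positive-probability realization of a randomized strategy must itself be zero-error, so randomization cannot help), and the fact that in the achievability scheme the sender can track the receiver's surviving list exactly because the feedback is noiseless, so the partition at each step is common knowledge; both are routine but belong in a full write-up.
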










\subsection{Lower and Upper Bounds on the Reliability Function for the Typewriter Channel}

As mentioned before, Shannon, Gallager and Berlekamp assumed in \cite{SGB67} that the expurgation is bound tight. Katsman, Tsfasman and Vladut have given in \cite{KTV92} a counterexample for the symmetric $q$-ary channel when $q\geq 49$. Dalai and Polyanskiy have found in \cite{DP18} a simpler counterexample. 
They have shown that the conjecture is already wrong for the $q$-ary typewriter channel for $q\geq 4$. We would like to briefly present their results here.
\begin{Definition}
Let $\X=\Y=\ZZ_q$ and $0\leq \epsilon\leq \frac 12$.
The typewriter channel $W_\epsilon$ is
defined by
\begin{equation}
W_\epsilon(y|x)=
\begin{cases}
1-\epsilon & y=x \\
\epsilon & y = x + 1 \mod q.
\end{cases}
\end{equation}
The extension of the channel $W_\epsilon^n$ is defined by
\begin{equation}
W_\epsilon^n(y^n|x^n)=\prod_{k=1}^n W_\epsilon(y_i|x_i).
\end{equation}
\end{Definition}
For the reliability function of this channel, the interval $(C_0(W_\epsilon),C(W_\epsilon))$
is of interest. The capacity of a typewriter channel $W_\epsilon$ has the formula
\[
C(W_\epsilon)=\log(q)-h_2(\epsilon),
\]
where $h_2$ is the binary entropy function. Shannon showed in \cite{S56} that
$C_0(W_\epsilon)$ is positive if $q\geq 4$. He showed that for even $q$, it holds that
$C_0(W_\epsilon)=\log\left(\frac q2\right)$. It is difficult to get a formula for odd $q$. Lovász proved in \cite{L79} that Shannon's lower bound 
for $q=5$: $C_0(W_\epsilon)=\log\sqrt{5}$ is tight. For general odd $q$, Lovász proved
\[
C_0(W_\epsilon)\leq\log \frac {\cos(\pi q)}{1+\cos(\pi q)}q.
\]
It is only known for $q = 5$ that this bound is tight. In general this is not true. For special q there are special results in \cite{L79, BMR71, B03}.

Dalai and Polyanskiy give upper and lower bounds on the reliability function in \cite{DP18}. They observed that the zero-error capacity of the pentagon can be determined by a careful study of the expurgated bound. 

They present an improved lower bound for the case of even and odd
$q$, showing that it also is a precisely shifted version of the expurgated bound for the BSC. Their result also provides a new
elementary disproof of the conjecture suggested in \cite{SGB67}, that the expurgated bound is asymptotically tight when computed on arbitrarily large blocks.
Furthermore, in \cite{DP18} Dalai and Polyanskiy present a new upper bound for the case of odd $q$
based on the minimum distance of codes. They  use Delsarte's linear programming method \cite{D73} (see also \cite{S79}), 
combining the construction used by Lov\'asz \cite{L79} for bounding the graph
capacity with the construction used by McEliece-Rodemich-Rumsey-Welch \cite{MR77} for bounding the
minimum distance of codes in Hamming spaces. 
In the special case $\epsilon=1/2$, they give another improved upper
bound for the case of odd $q$, 
following the ideas of Litsyn~\cite{L99} and Barg-McGregor~\cite{BM05}, which in turn are based on estimates for the spectra of codes originated in
Kalai-Linial~\cite{KL95}.

\subsection{Computable Channels and Computable Performance Functions}

We need further basic concepts for computability.
We want to investigate the function $E(W,R)$ and the upper bounds like $E_{sp}(W,R)$ and $E_{ex}(W,R)$ for $k\in\NN$ as functions of $W$ and $R$. These functions are generally only well defined for fixed channels $W$ on sub-intervals of $[0,C(W)]$ as functions depending on $R$.
For example, for $W\in\CH$ with $C_0(W)> 0$, $E(W,R)$ is infinite for $R <C_0(W)$. Hence, 
$E(W,R)$ must be examined and computed as a function of $R$ on the interval $(C_0(W),C(W)]$. 
Similar statements also apply to the other functions that have already been introduced. We now fix non-trivial alphabets $\X,\Y$ and the corresponding set $\CHc$ of the computable channels and $R\in\RR_c$.

\begin{Definition}[Turing computable channel function]\label{Tcomp}
We call a function $f:\CHc\to\RR_c$ a Turing computable channel function if there is a 
Turing machine that converts any program for the representation of $W\in\CHc$, 
$W$ arbitrary into a program for the computation of $f(W)$, that is, 
$f(W)=TM_f(W)$, $W\in\CHc$.
\end{Definition}
We want to determine whether there is a closed form for the channel reliability function. For this we need the following definition, which we discuss in more details in Remark~\ref{R24} below.
\begin{Definition}[Turing computable performance function]\label{performance} 
Let $\bot$ be a symbol. We call a function $F:\CH_c\times\RR_c^+\to \RR_c\cup \{\bot\}$ a Turing computable performance function if there are two Turing computable channel functions $\uf$ and $\of$ with $\uf(W)\leq\of(W)$ for $W\in\CHc$, and a Turing machine $TM_F$, which is defined for input $R\in\RR_c^+$ and $W\in\CHc$.
The Turing machine $TM_F$ stops for the variables $R\in\RR_c^+$ and $W\in\CHc$ and any representation for $W$ and $R$ as input if 
and only if 
$R\in (\uf(W),\of(W))$
and the Turing machine $TM_F$ delivers $F(W,R)=TM_F(W,R)$.
If $R\not\in (\uf(W),\of(W))$,
then $TM_F$ does not stop.
\end{Definition}
\begin{Remark}\label{R24}
The requirement for function $F:\CH_c\times\RR_c^+\to \RR_c\cup \{\bot\}$ to be a Turing-computable performance function is relatively weak. For example, let's take $W$ and $R$ as inputs. Then the interval $(\uf(W),\of(W))$ is computed first. If $R$ is now in the interval $((\uf(W),\of(W))$, then the Turing machine $TM_F$ must stop for the input $(W, R)$ and deliver the result for $F(W,R)$. We impose no requirements on the behavior of the Turing machine for input $W$ and $R\not\in (\uf(W),\of(W))$. In particular, the Turing machine $TM_F$ does not have to stop for the input $(W, R)$ in this case.

Take, for example, any Turing-computable function $G:\CH_c\times\RR_c^+\to \RR_c\cup \{\bot\}$ with the corresponding Turing machine $TM_G$. Furthermore, let $\uTM:\CHc\to\RR_c$ and $\oTM:\CHc\to \RR_c$ be any two TMs, so that $\uTM(W)\leq\oTM(W)$ always holds for all 
$W\in\CHc$. Then the following Turing machine $TM:\CHc\times \RR_c\to \RR_c\cup \{\bot\}$ defines a Turing-computable performance function.
\begin{enumerate}
\item  For any input $W\in\CHc$ and $R\in\RR_c$, first compute $\uf(W)=\uTM(W)$ and 
$\of(W)=\oTM(W)$.
\item Compute the following two tests in parallel: 
\begin{enumerate}
    \item Use the Turing machine $TM_{>\uf(W)}$ and test $R>\uf(W)$ using $TM_{>\uf(W)}$ for input $R\in\RR_c$. 
    \item Use the Turing machine $TM_{<\of(W)}$ and test $R <\of(W)$ using $TM_{<\of(W)}$  for input $R\in\RR_c$.
\end{enumerate} Let these two tests run until both Turing machines stop. If both Turing machines stop in 2, then compute $G (W, R)$ and set $TM(W,R) = G(W,R)$.
\end{enumerate}
$TM$ actually generates a Turing computable performance function and the Turing machine $TM$ stops for the input $(W, R)$ if and only if $R\in (\uf(W),\of(W))$  applies. Then it gives the value $G(W,R)$ as output. This follows from the fact that the Turing machine $TM_{>\uf(W)}$ stops for input $R\in\RR_c$ if and only if $R>\uf(W)$.
The second Turing machine $TM_{<\of(W)}$ from 2. stops exactly when $R <\of(W)$, i.e. the Turing machine $TM$ in 2., which simulates $TM_{>\uf(W)}$ and $TM_{<\of(W)}$ in parallel, stops exactly when $R\in(\uf(W),\of(W))$ applies.
\end{Remark}


\begin{Remark}
With the above approach we can try, for example, to find upper and lower bounds for the channel reliability function by allowing general Turing-computable functions $G:\CH_c\times\RR_c^+\to \RR_c\cup \{\bot\}$ and  algorithmically determine the interval from $\RR_c^+$ for which the function $G(W,\cdot)$ delivers lower or upper bounds for the channel reliability function.
\end{Remark}

\begin{Definition}[Banach Mazur computable channel function]
We call $f:\CHc\to\RR_c$ a Banach Mazur computable channel function if every computable 
sequence $\{W_r\}_{r\in\NN}$ from $\CHc$ is mapped by $f$ into a computable sequence from $\RR_c$.
\end{Definition}

For practical applications, it is necessary to have performance functions which satisfy Turing computability. Depending on $W$, the channel reliability function or the bounds for this function should be computed. This computation is carried out by an algorithm that also receives $W$ as input. This means that the algorithm should also be recursively dependent on $W$, because otherwise a special algorithm would have to be developed for each $W$ (depending on $W$ but not recursively dependent), since the channel reliability function for this channel, or a bound for this function, is computed.

It is now clear that when defining the Turing computable performance function, the Turing computable channel functions $\uf,\of$ cannot be dispensed with, because the channel reliability function depends on the specific channel and the permissible rate region for which the function can be computed. 
For $\of$, one often has the representation $\of(W)=C(W)$ with $W\in\CHc$. For 
$\uf$, the choice $\uf(W)=C_0(W)$ with $W\in\CHc$ for the channel reliability function 
is a natural choice, because the channel reliability function is only useful for this interval. (We note that we showed in \cite{BD20DAM} that \(C_0(W)\) is not Turing computable in general.)

For the Turing computability of the channel reliability function or corresponding upper and lower bounds, it is therefore a necessary condition that the dependency of the relevant rate intervals on $W$ is Turing computable, that is, recursive.

\begin{Remark}
    As already mentioned in the introduction, very few closed form representations for the performance functions are known in information theory. Even for simple scenarios of secure message transmission over the wiretap channel with an active jammer, no closed form solutions are known (see \cite{N4, Schaefer15, N3}. With the existing methods developed, information theory provides convergent multiletter sequences for determining the capacity. Although these sequences make it possible to investigate important properties of the capacity (see \cite{N5,N6,N4}), they are not yet suitable for calculating the capacity numerically. The reason is that the Fekete lemma is used to prove the existence of the limit of these sequences. It was shown in \cite{N7} that the Fekete lemma is not effective. 
    
But also finding simple optimizer of performance functions is generally not algorithmically solvable \cite{N8,N9}. For example, one can use the Blahut-Arimoto algorithm to compute an infinite sequence of input distributions that always converges to an optimal distribution, but one cannot stop the process of computing the infinite sequence depending on a reliable approximation error (see \cite{N8,N9}).
\end{Remark}

\section{Results for the Rate Function $R_\infty$ and Applications on the Sphere Packing Bound}\label{sphere}

In this section we consider the $R_\infty$ function and the consequences for the sphere packing bound. We show that this is not a Turing computable performance function.
We already see that for $R_\infty$ we have the representation
\be
R_\infty(W)=\min_{Q\in\P(\Y)}\max_{x\in\X}\log_2 \frac 1{\sum_{y:W(y|x)>0}Q(y)}.
\ee
Therefore we have
\begin{eqnarray*}
R_\infty(W)&=&\min_{Q\in\P(\Y)}\max_{x\in\X}\log_2 \frac 1{\sum_{y:W(y|x)>0}Q(y)}\\
&=& \min_{Q\in\P(\Y)}\log_2 \frac 1{\min_{x\in\X}\sum_{y:W(y|x)>0}Q(y)}\\
&=& \log_2 \min_{Q\in\P(\Y)} \frac 1{\min_{x\in\X}\sum_{y:W(y|x)>0}Q(y)}\\
&=& \log_2  \frac 1{\max_{Q\in\P(\Y)}\min_{x\in\X}\sum_{y:W(y|x)>0}Q(y)}\\
&=& \log_2 \frac 1{\Psi_\infty(W)},
\end{eqnarray*}
where 
\begin{align}\label{eq:PsiInfty}
\Psi_\infty(W)=\max_{Q\in\P(\Y)}\min_{x\in\X}\sum_{y:W(y|x)>0}Q(y).
\end{align}

In summary, the following holds true:
Let $\X,\Y$ be arbitrary non-trivial finite alphabets, then for $W\in\CHc$ 
\be
\R_\infty(W)=\log_2 \frac 1{\Psi_\infty(W)}.
\ee
\begin{Lemma}\label{LBasic}
It holds that
\[
R_\infty: \CHc\to\RR_c.
\]
\end{Lemma}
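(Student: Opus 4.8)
The plan is to show that $R_\infty(W) = \log_2 \frac{1}{\Psi_\infty(W)}$ is a computable real number for every computable channel $W \in \CHc$, by establishing that $\Psi_\infty(W) \in \RR_c$ and that $\Psi_\infty(W) > 0$, so that taking the reciprocal and the logarithm preserves computability. The key observation is that $\Psi_\infty(W) = \max_{Q\in\P(\Y)}\min_{x\in\X}\sum_{y:W(y|x)>0}Q(y)$ depends on $W$ only through the \emph{supports} of the rows $W(\cdot|x)$, i.e. through the zero-pattern of $W$. However, the support of a row of a computable channel is \emph{not} something a Turing machine can extract from a program for $W$ (deciding whether a computable real equals $0$ is undecidable), so the argument must be run slightly more carefully: I would fix the combinatorial data first.

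First I would note that for a fixed finite bipartite ``zero-pattern'' — that is, a fixed choice of sets $S_x \subseteq \Y$ for each $x\in\X$ — the quantity $\psi(\{S_x\}) := \max_{Q\in\P(\Y)}\min_{x\in\X}\sum_{y\in S_x}Q(y)$ is the value of a linear program with rational data, hence is a rational number (in fact an explicitly computable one, uniformly in the combinatorial pattern). Next, for a given computable channel $W$, the true support pattern is $S_x = \{y : W(y|x) > 0\}$. The point is that although we cannot compute this pattern, $\Psi_\infty(W)$ equals $\psi(\{S_x\})$ for \emph{that particular} pattern, and this is a fixed rational number. So $\Psi_\infty(W)$ is rational, in particular computable as a real number, for every $W\in\CHc$. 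Moreover $\Psi_\infty(W) \geq 1/|\Y| > 0$: taking $Q$ uniform on $\Y$ gives $\sum_{y\in S_x} Q(y) = |S_x|/|\Y| \geq 1/|\Y|$ since each row of a channel has nonempty support. Hence $\log_2\frac{1}{\Psi_\infty(W)}$ is a well-defined nonnegative real.

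It remains to argue that this real number is \emph{computable} in the sense of Definition~\ref{compreal}, i.e. that a rational approximating sequence converging with the prescribed rate exists. Since $\Psi_\infty(W)$ is a rational number lying in $[1/|\Y|, 1]$, the constant sequence equal to $\Psi_\infty(W)$ trivially converges effectively to it; then $1/\Psi_\infty(W)$ is again a fixed rational in $[1,|\Y|]$, and $\log_2$ of a positive computable real is a computable real (this is a standard closure property of $\RR_c$: $\log_2$ is computable on the computable reals bounded away from $0$, and one may also simply invoke that $\log_2$ of a rational in a compact interval is a computable real). Therefore $R_\infty(W)\in\RR_c$ for every $W\in\CHc$, which is exactly the assertion of the lemma. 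I would emphasize that the lemma only claims membership $R_\infty(W)\in\RR_c$ for each individual $W$, and does \emph{not} claim that $W\mapsto R_\infty(W)$ is a Turing computable channel function — indeed the whole thrust of the section is that it is not, and the obstruction is precisely the non-extractability of the support pattern.

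The main obstacle, and the point I would be most careful about, is the distinction between ``$\Psi_\infty(W)$ is a computable real (for each fixed $W$)'' and ``$\Psi_\infty$ is a computable function of $W$''. The proof of the lemma lives entirely on the first side: for each fixed computable $W$ the relevant support pattern is a fixed (if unknown-to-us) finite object, the LP value for that pattern is a fixed rational, and hence $R_\infty(W)$ is a fixed computable real. No uniformity in $W$ is claimed or needed, which is why the undecidability of support detection does not interfere with this particular statement.
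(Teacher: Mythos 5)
Your proof is correct, and it reaches the conclusion by a genuinely different justification of the central step than the paper does. Both you and the paper exploit the same non-uniformity trick — fix $W$, so that the support pattern $S_x=\{y: W(y|x)>0\}$ is a fixed finite object even though no Turing machine can extract it from a program for $W$ — but the paper then argues via computable analysis: for fixed $W$, $G(u)=\min_x\sum_{y:W(y|x)>0}u_y$ is a computable continuous function on the computable compact simplex $\M_{Prob}$, and the maximum of such a function over such a set is a computable real, which is positive, so $\log_2(1/\Psi_\infty(W))\in\RR_c$. You instead observe that $\Psi_\infty(W)$ is the value of a linear program (equivalently, of a matrix game with the $0/1$ matrix $A(W)$) with integer data, hence a rational number in $[1/|\Y|,1]$, and then take $\log_2$ of a fixed positive rational. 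Your route gives a strictly stronger conclusion for this particular quantity (rationality of $\Psi_\infty(W)$, an explicit lower bound $1/|\Y|$, and $R_\infty(W)$ as the logarithm of a rational), and it is more elementary in that it avoids the effective extreme value theorem; the paper's template is more flexible, since the same argument applies verbatim to objectives without piecewise-linear structure and is reused for $C_0^{FB}$ in Lemma~\ref{LBasic2} and for the approximating functions $\Phi_N$ in Theorem~\ref{RT3}. Your closing remark correctly isolates why pointwise membership in $\RR_c$ is compatible with the later non-computability results: the lemma claims no uniformity in $W$, and indeed Theorem~\ref{RT2} shows none is available.
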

\begin{proof}
Let $W$ be fixed. We consider the vector $\begin{pmatrix}Q(1)& \cdots & Q(|\Y|)\end{pmatrix}^{\mathrm{T}}$ of the convex set \[
\M_{Prob}=\{u\in\RR^{|\Y|}:u=\begin{pmatrix} u_1\\ \vdots \\ u_{|\Y|}\end{pmatrix}, u_l\geq 0, l=1,\dots,|\Y|, \sum_l u_l=1\}.
\]
$G(u):= \min_x \sum_{y:W(y|x)>0} u_y$ is a computable continuous function on $\M_{Prob}$. 
Thus for $\Psi_\infty (W)=\max_{u\in\M_{Prob}} G(u)$ we always have 
$\Psi_\infty (W) \in\RR_c$ with $\Psi_\infty(W)>0$, and thus $R_\infty(W)\in\RR_c$.
\end{proof}
\begin{Remark}
We do not know whether $C_0:\CHc\to\RR_c$ holds for any finite $\X,\Y$. This statement holds for $\max\{|\X|,|\Y|\}\leq 5$, but the general case is open.
\end{Remark}
For finite alphabets $\X,\Y$ and $\lambda\in\RR_c$ with $\lambda>0$, we want to analyze
the set
\[
\{ W\in\CHc : R_\infty(W)>\lambda\}.
\]
To do so, we refer to the proof of Theorem 23 in \cite{BD20DAM}. Along the same lines, one can show that the following holds true:
\begin{Theorem}\label{T1}
Let $\X,\Y$ be non-trivial finite alphabets. For all $\lambda\in\RR_c$ with 
$0<\lambda<\log_2 (\min\{|\X|,|\Y|\})$, the set
\[
\{ W\in\CHc : R_\infty(W)>\lambda\}
\]
is not semi-decidable.
\end{Theorem}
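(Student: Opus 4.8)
The plan is to reduce the problem to a known undecidability result about the zero-error capacity, using the explicit formula $R_\infty(W)=\log_2\frac{1}{\Psi_\infty(W)}$ with $\Psi_\infty(W)=\max_{Q\in\P(\Y)}\min_{x\in\X}\sum_{y:W(y|x)>0}Q(y)$. The key observation is that $R_\infty(W)$, like $C_0(W)$, depends only on the \emph{support pattern} of the channel, i.e.\ on the bipartite ``confusability'' structure $\{(x,y):W(y|x)>0\}$, and not on the precise values of the transition probabilities. Because of this, the argument used to prove Theorem~23 in \cite{BD20DAM} (where one builds, from a recursively enumerable but non-recursive set $A\subseteq\NN$, a computable sequence of channels $\{W_k\}_{k\in\NN}$ whose support patterns encode membership in $A$) transfers essentially verbatim; this is exactly the meaning of the phrase ``along the same lines'' in the statement. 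So first I would recall that construction and state precisely which property it delivers.

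Concretely, I would invoke the fact (referenced in the excerpt) that there is a computable sequence $\{W_k\}_{k\in\NN}\subseteq\CHc$ of channels on the given alphabets $\X,\Y$ together with a recursively enumerable non-recursive set $A\subseteq\NN$ such that $W_k$ has ``full confusability'' when $k\in A$ and ``trivial confusability'' when $k\notin A$ — more precisely, arranged so that $R_\infty(W_k)$ takes one value $>\lambda$ exactly when $k\in A$ and a value $\le\lambda$ (say $0$) when $k\notin A$. The range condition $0<\lambda<\log_2(\min\{|\X|,|\Y|\})$ is precisely what is needed for such a gadget to exist: the maximal possible value of $R_\infty$ on $\CH$ with these alphabets is $\log_2(\min\{|\X|,|\Y|\})$ (attained when every input is confusable with every output, forcing $\Psi_\infty(W)=1/\min\{|\X|,|\Y|\}$), and one needs a strictly positive gap on both sides of $\lambda$ to realize the two cases with computable channels. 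I would verify the two values of $R_\infty$ on the gadget directly from the $\Psi_\infty$ formula: in the ``trivial'' case each input sees a disjoint set of outputs (or the identity-like pattern) so $\Psi_\infty=1$ and $R_\infty=0<\lambda$; in the ``full'' case $\Psi_\infty=1/\min\{|\X|,|\Y|\}$, giving $R_\infty=\log_2(\min\{|\X|,|\Y|\})>\lambda$.

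The reduction is then standard: if $\{W\in\CHc:R_\infty(W)>\lambda\}$ were semi-decidable, there would be a Turing machine $TM_M$ that halts on input $W$ iff $R_\infty(W)>\lambda$. Feeding it the computable sequence $\{W_k\}$, the machine halts on $W_k$ iff $k\in A$; running $TM_M$ on all $W_k$ in a dovetailed fashion would enumerate the complement situation and, combined with the existing recursive enumeration of $A$, would make $A$ recursive — contradicting the choice of $A$. (Equivalently: $A$ would be recursively enumerable via $TM_M$ while $A^c$ is also recursively enumerable since $A$ is r.e.\ and co-r.e.\ implies recursive; here we only need that $A$ being r.e.\ and non-recursive forbids $A^c$ from being r.e.) I would write this out as a short formal contradiction.

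The part requiring the most care is not the reduction but the construction of the gadget sequence $\{W_k\}$ with \emph{computable} entries whose $R_\infty$-values are pinned down on the two sides of $\lambda$, and checking that the halting/non-halting behavior of the sub-machine used to build $W_k$ (driven by the enumeration of $A$) indeed produces a single computable double-indexed sequence of channels in $\CHc$ — i.e.\ that the transition probabilities remain computable reals uniformly in $k$. Since this is exactly the content of the cited proof of Theorem~23 in \cite{BD20DAM}, I would not reproduce it in full but would point to it and indicate the one modification needed: replace the ``$C_0$'' bookkeeping there with the elementary $\Psi_\infty$ computation above, which is if anything simpler because $R_\infty$ has the closed form $\log_2(1/\Psi_\infty)$ and $\Psi_\infty$ is a single min–max over a computable continuous function (Lemma~\ref{LBasic}), so no delicate zero-error coding arguments are needed on the analytic side.
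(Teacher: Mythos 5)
Your high-level plan (encode a recursively enumerable non-recursive set $A$ into the support patterns of a computable sequence of channels, exploiting that $R_\infty$ depends only on which entries are positive) is indeed the route intended by the paper's pointer to Theorem~23 of \cite{BD20DAM}. But the execution is inverted in a way that breaks the argument. First, the two test values are swapped: if every row of $W$ has full support, then $\sum_{y:W(y|x)>0}Q(y)=1$ for every $Q$, so $\Psi_\infty(W)=1$ and $R_\infty(W)=0$; it is the identity-like channel with disjoint row supports that gives $\Psi_\infty(W)=1/\min\{|\X|,|\Y|\}$ and $R_\infty(W)=\log_2(\min\{|\X|,|\Y|\})$. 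Your verification paragraph asserts exactly the opposite, and as a consequence your gadget is specified with the wrong orientation: you require $R_\infty(W_k)>\lambda$ exactly when $k\in A$. Such a gadget cannot exist: the standard construction can only make entries of $W_k$ become \emph{positive} once $k$ is enumerated into $A$ (say with value $2^{-m}$ if $k$ appears at stage $m$), because strict positivity of a computable real is semi-decidable; a computable sequence with $R_\infty(W_k)>\lambda$ for $k\in A$ and $R_\infty(W_k)=0<\lambda$ for $k\notin A$ would, combined with the semi-decidability of $\{W\in\CHc: R_\infty(W)<\lambda\}$ (Theorem~\ref{RT4}), make $A^c$ recursively enumerable, which is impossible for $A$ r.e.\ and non-recursive. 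Since supports can only grow upon enumeration, the realizable gadget has $R_\infty(W_k)=0$ for $k\in A$ and $R_\infty(W_k)=\log_2(\min\{|\X|,|\Y|\})>\lambda$ for $k\notin A$; the reduction must run through $A^c$, not $A$.

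Second, even granting your gadget, your concluding step does not produce a contradiction: a Turing machine that halts on $W_k$ exactly when $k\in A$ merely re-establishes that $A$ is r.e., which it is by hypothesis; dovetailing it over all $k$ enumerates $A$, not its complement, so nothing forces $A$ to be recursive. With the corrected orientation the contradiction is immediate and is the intended one: if $\{W\in\CHc:R_\infty(W)>\lambda\}$ were semi-decidable, composing the semi-deciding machine with the computable map $k\mapsto W_k$ would semi-decide $\{k:R_\infty(W_k)>\lambda\}=A^c$, so $A^c$ would be r.e.; together with $A$ being r.e.\ this would make $A$ recursive, a contradiction. Fixing the two swapped $\Psi_\infty$ computations and exchanging the roles of $A$ and $A^c$ turns your sketch into the argument the paper has in mind.
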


The following Theorem can be derived from a combination of the proof of Theorem~\ref{T1}  and Theorem~24 in \cite{BD20DAM}. The proof is carried out in the same way as the proof of Theorem~24 in \cite{BD20DAM}.

\begin{Theorem}\label{RT2}
Let $\X,\Y$ be non-trivial finite alphabets. The function $R_\infty:\CHc\to \RR$ 
is not Banach Mazur computable.
\end{Theorem}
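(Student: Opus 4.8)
The plan is to reduce the halting problem to the computation of $R_\infty$ by reusing (an instance of) the channel family underlying Theorem~\ref{T1}. Set $m:=\min\{|\X|,|\Y|\}\geq 2$ and, without loss of generality, assume $|\X|\leq |\Y|$, so $m=|\X|$; fix once and for all an injection $x\mapsto y_x$ of $\X$ into $\Y$. Fix an effective enumeration $(TM_r)_{r\in\NN}$ of Turing machines, let $n_r$ be the first step at which $TM_r$ halts on empty input (with $n_r=\infty$ if it never halts), and put $\epsilon_r:=2^{-n_r}$, with $\epsilon_r:=0$ when $n_r=\infty$. The sequence $r\mapsto\epsilon_r$ is a \emph{computable} sequence of reals: the rational $q(r,k)$ obtained by simulating $TM_r$ for $k$ steps and returning $2^{-n}$ if it halted at some step $n\leq k$, and $0$ otherwise, satisfies $|\epsilon_r-q(r,k)|<2^{-k}$, since if $TM_r$ has not halted within $k$ steps then its eventual halting step $n$ (if any) obeys $n>k$, hence $\epsilon_r=2^{-n}<2^{-k}$.

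From $\epsilon_r$ I would build $W_r\in\CHc$ by concentrating, for every input $x$, almost all the mass on $y_x$ and smearing a vanishing amount over the other outputs: $W_r(y_x\mid x)=1-(|\Y|-1)\tfrac{\epsilon_r}{2|\Y|}$ and $W_r(y\mid x)=\tfrac{\epsilon_r}{2|\Y|}$ for $y\neq y_x$. Since $\epsilon_r\leq 1/2$ these are genuine probability vectors with nonnegative entries, and because $\{\epsilon_r\}$ is a computable sequence of reals, $\{W_r\}_{r\in\NN}$ is a computable sequence in $\CHc$. The point of the construction is a sharp dichotomy visible to $\Psi_\infty$, and hence to $R_\infty(W)=\log_2\tfrac{1}{\Psi_\infty(W)}$: if $TM_r$ does not halt then $\epsilon_r=0$, the supports $\{y:W_r(y\mid x)>0\}=\{y_x\}$ are pairwise disjoint, so $\Psi_\infty(W_r)=\max_{Q\in\P(\Y)}\min_{x\in\X}Q(y_x)=1/m$ and $R_\infty(W_r)=\log_2 m$; if $TM_r$ halts then $\epsilon_r>0$, every entry of $W_r$ is strictly positive, so $\sum_{y:W_r(y\mid x)>0}Q(y)=1$ for all $x$ and all $Q\in\P(\Y)$, giving $\Psi_\infty(W_r)=1$ and $R_\infty(W_r)=0$. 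Thus $R_\infty(W_r)\in\{0,\log_2 m\}$, with $R_\infty(W_r)=\log_2 m$ if and only if $TM_r$ does not halt.

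Now suppose, for contradiction, that $R_\infty:\CHc\to\RR$ were Banach--Mazur computable. Applied to the computable sequence $\{W_r\}$, it would yield a computable sequence of reals $\{x_r\}$ with $x_r=R_\infty(W_r)\in\{0,\log_2 m\}$ and $\log_2 m\geq 1$. Computing, uniformly in $r$, a rational $q_r$ with $|x_r-q_r|<1/4$ and testing whether $q_r<1/2$ would decide, for every $r$, whether $x_r=0$ or $x_r=\log_2 m$, i.e.\ whether $TM_r$ halts. This contradicts the undecidability of the halting problem, so $R_\infty$ is not Banach--Mazur computable, exactly as in Theorem~24 of \cite{BD20DAM}.

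The one place that demands care — and the step I expect to be the main obstacle — is the verification that $\{W_r\}$ is a bona fide \emph{computable} sequence of channels, i.e.\ that the entries converge to their limiting values with a modulus of convergence that is both effective and uniform in $r$. This is precisely where the discontinuity of $R_\infty$ across the stratification by support pattern is being slipped past the computability requirement: one cannot flip the ``halting'' branch to a fixed positive constant (that would make $\{W_r\}$ non‑computable), which is why the signal $\epsilon_r$ must decay with the halting time $n_r$ while still being $0$-versus-positive, the only feature $\Psi_\infty$ detects. Everything else — the closed form $R_\infty(W)=\log_2\tfrac{1}{\Psi_\infty(W)}$ (already available), the two elementary evaluations of $\Psi_\infty$, and the final decision argument — is routine bookkeeping along the lines of the proof of Theorem~\ref{T1}.
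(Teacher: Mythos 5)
Your construction is correct and is essentially the argument the paper itself relies on: the paper's proof is only a pointer to Theorem~24 of \cite{BD20DAM}, whose method is exactly this kind of computable, halting-time-parametrized channel sequence on which $R_\infty$ takes only the two values $0$ and $\log_2 m$ (zero smearing mass versus strictly positive smearing mass in the support pattern), so that Banach--Mazur computability of $R_\infty$ would let one decide the halting problem from quarter-precision rational approximations. The one point to patch is the claim ``without loss of generality $|\X|\leq|\Y|$'': inputs and outputs play asymmetric roles for $R_\infty$, so when $|\X|>|\Y|$ you cannot inject $\X$ into $\Y$; instead map only $|\Y|$ of the inputs bijectively onto $\Y$ and let each remaining row of $W_r$ duplicate one of these rows (or just use a fixed two-input/two-output sub-pattern), which keeps the $0$-versus-positive dichotomy and leaves the rest of your argument, including the uniform effective modulus for $\{W_r\}$, unchanged.
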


We now prove a stronger result then what we were able to show for $C_0$ in \cite{BD20DAM} so far. 
We show that the analogous question, like Noga Alon's question for $C_0$ for the 
function $R_\infty$, can be answered positively.

We need a concept of distance for $W_1,W_2\in\CH$. 
Therefore, for fixed and finite alphabets $\X,\Y$ we define the distance between $W_1$ and $W_2$
based on the total variation distance
\be
d_C(W_1,W_2)=\max_{x\in\X}\sum_{y\in\Y}|W_1(y|x)-W_2(y|x)|.
\ee
\begin{Definition}
A function $f:\CH\to\RR$ is called computable continuous if:
\begin{enumerate}
    \item $f$ is sequentially computable, i.e., $f$ maps every computable sequence
    $\{W_n\}_{n\in\NN}$ with $W_n\in \CHc$ into a computable sequence
    $\{f(W_n)\}_{n\in\NN}$ of computable numbers,
     \item $f$ is effectively uniformly continuous, i.e., there is a recursive function
     $d:\NN\to \NN$ such that for all $W_1, W_2\in \CHc$ and
     all $N\in\NN$ with $d_C(W_1,W_2)\leq\frac 1{d(N)}$ it holds that $|f(W_1)-f(W_2)|\leq \frac 1{2^N}.$
\end{enumerate}
\end{Definition}

\begin{Theorem}\label{RT3}
Let $\X, \Y$ be finite alphabets with $|\X|\geq 2$ and $|\Y|\geq 2$. 
There exists a computable sequence of computable continuous functions $\{F_N\}_{N\in\NN}$ on $\CHc$ with
\begin{enumerate}
    \item $F_N(W)\geq F_{N+1}(W)$ with $W\in\CH$ and $N\in\NN$,
    \item $\lim_{N\to\infty} F_N(W)=R_\infty(W)$ for all $W\in\CH$.
\end{enumerate}
\end{Theorem}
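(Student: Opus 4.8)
The plan is to exhibit an explicit, computable, monotonically decreasing sequence of computable continuous functions approximating $R_\infty$ from above. Recall from the discussion preceding the theorem that
\[
R_\infty(W)=\log_2\frac{1}{\Psi_\infty(W)},\qquad \Psi_\infty(W)=\max_{Q\in\P(\Y)}\min_{x\in\X}\sum_{y:W(y|x)>0}Q(y).
\]
The only source of discontinuity here is the support condition $\{y:W(y|x)>0\}$, which jumps as a matrix entry $W(y|x)$ crosses $0$. The idea is to replace the hard threshold by a soft one that is monotone in the threshold parameter. Concretely, for $N\in\NN$ define a continuous ``clipping'' weight $\varphi_N:[0,1]\to[0,1]$ that is $0$ at $0$, equals $1$ for arguments $\geq 2^{-N}$, and is (say) piecewise linear in between; then set
\[
\Psi_N(W)=\max_{Q\in\P(\Y)}\min_{x\in\X}\sum_{y\in\Y}\varphi_N\bigl(W(y|x)\bigr)\,Q(y),
\qquad
F_N(W)=\log_2\frac{1}{\Psi_N(W)}.
\]
Since $\varphi_N(t)\le \varphi_{N+1}(t)\le \mathbbm{1}[t>0]$ pointwise, we get $\Psi_N(W)\le\Psi_{N+1}(W)\le\Psi_\infty(W)$, hence $F_N(W)\ge F_{N+1}(W)\ge R_\infty(W)$, which is property~1.

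\textbf{Computable continuity of $F_N$.} First I would check $\Psi_N(W)>0$ for every $W\in\CH$: picking $Q$ uniform, for each $x$ there is at least one $y$ with $W(y|x)>0$, and since $\Y$ is finite the minimal positive entry of the (fixed) channel $W$ is a positive number, so $\sum_y\varphi_N(W(y|x))Q(y)$ is bounded below by a positive quantity; thus $F_N(W)$ is a well-defined real. For sequential computability: the map $W\mapsto(\varphi_N(W(y|x)))_{y,x}$ is computable (composition of the computable piecewise-linear $\varphi_N$ with the computable entries of $W$), and maximizing a computable continuous concave-in-$Q$ function of a vector over the computable compact polytope $\P(\Y)$ yields a computable real, uniformly in the input sequence — this is the same reasoning used in the proof of Lemma~\ref{LBasic}, applied with $\varphi_N$-weighted sums in place of the support sums. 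Composing with $t\mapsto\log_2(1/t)$ on an interval bounded away from $0$ keeps computability. For effective uniform continuity: $\varphi_N$ is Lipschitz with constant $2^N$, so $|\varphi_N(W_1(y|x))-\varphi_N(W_2(y|x))|\le 2^N|W_1(y|x)-W_2(y|x)|$; summing over $y$ against $Q$ and taking max/min shows $|\Psi_N(W_1)-\Psi_N(W_2)|\le 2^N d_C(W_1,W_2)$, and since $\Psi_N$ is bounded below by a constant $c_N>0$ uniform over all of $\CH$ (again: pick $Q$ uniform and bound by $\varphi_N$ of the smallest positive value — but note this bound depends on the channel, so one must instead use that $\Psi_N(W)\ge \frac{1}{|\Y|}\min_x\max_y\varphi_N(W(y|x))$ and handle the regime where all entries in some row are tiny separately), the map $t\mapsto\log_2(1/t)$ is then Lipschitz on the relevant range, giving a recursive modulus $d(N')$. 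The sequence $\{F_N\}$ is computable because a single Turing machine, given $N$ and (a name for) $W$, outputs the above.

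\textbf{Convergence (property~2).} Fix $W$. For $N$ large enough that $2^{-N}$ is below every positive entry of $W$ (such $N$ exists because $W$ is a fixed channel with finitely many entries), we have $\varphi_N(W(y|x))=\mathbbm{1}[W(y|x)>0]$ for all $y,x$, hence $\Psi_N(W)=\Psi_\infty(W)$ and $F_N(W)=R_\infty(W)$ exactly. So the sequence is not merely convergent but eventually constant at each fixed $W$. (Of course the threshold index depends on $W$ and is not uniform — consistent with the earlier non-computability results; here we only claim pointwise convergence, as stated.)

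\textbf{Main obstacle.} The delicate point is the second clause of computable continuity: I need a modulus of continuity $d(N')$ that works uniformly over \emph{all} of $\CHc$, yet $\Psi_N$ can get close to $0$ (e.g.\ when some row of $W$ has one entry just above $2^{-N}$ and the rest $0$, forcing $Q$-mass onto that $y$), and $\log_2(1/t)$ has unbounded derivative near $0$. The fix is to note $\Psi_N(W)\ge 1/|\Y|$ \emph{whenever} $W$ has, in each row, at least one entry $\ge 2^{-N}$ — and to arrange $\varphi_N$ so that rows failing this are impossible or handled by a lower bound $\Psi_N(W)\ge\varphi_N(\text{(min positive entry)})/|\Y|$ which, while channel-dependent, still gives $F_N(W)<\infty$; for the \emph{uniform} modulus one instead observes that if $d_C(W_1,W_2)$ is small then the weighted sums $\sum_y\varphi_N(W_i(y|x))Q(y)$ are both either bounded below by $1/|\Y|$ or both small, and in the latter regime the additive perturbation $2^N d_C$ is small relative to the values, so the \emph{multiplicative} perturbation of $\Psi_N$ — hence the additive perturbation of $\log_2(1/\Psi_N)=F_N$ — is controlled. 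Making this case analysis precise, with an explicit recursive $d$, is the part of the argument that needs genuine care; everything else is routine once $\varphi_N$ is chosen.
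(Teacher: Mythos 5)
Your construction is essentially the paper's: the paper likewise regularizes the support indicator by a weight increasing to it in $N$ --- namely $NW(y|x)/(1+NW(y|x))$ in place of your clipped $\varphi_N$ --- inserts it into the max--min expression, and takes $\log_2(1/\cdot)$, so monotonicity and pointwise convergence are obtained the same way (the paper bounds $\Psi_\infty(W)-\Phi_N(W)$ by $\bigl(1+N\min_{x\in\X}\min_{y:W(y|x)>0}W(y|x)\bigr)^{-1}$, while your clipping even makes the sequence eventually constant at each fixed $W$). The one point you flag as the main obstacle --- a uniform positive lower bound on $\Psi_N$ so that $\log_2(1/t)$ admits a recursive modulus --- is not actually delicate: every row of a channel sums to $1$, so for each $x$ there is some $y$ with $W(y|x)\geq 1/|\Y|$, and taking $Q$ uniform gives $\Psi_N(W)\geq \varphi_N(1/|\Y|)/|\Y|$, a constant depending only on $N$ and $|\Y|$, uniformly over all of $\CH$; hence the ``all entries in a row are tiny'' regime cannot occur, no case analysis is needed, and your Lipschitz estimate $|\Psi_N(W_1)-\Psi_N(W_2)|\leq 2^N d_C(W_1,W_2)$ immediately yields an explicit recursive modulus for $F_N$.
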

\begin{proof}
We consider the function 
\[
\Phi_N(W)=\max_{Q\in\P(\Y)}\min_{x\in\X} \sum_{y\in\Y}\frac {NW(y|x)}{1+NW(y|x)}Q(y)
\]
for $N\in\NN$. For all $x\in\X$ we have for all $Q\in\P(\Y)$
\be\label{B1}
\sum_{y\in\Y}\frac {NW(y|x)}{1+NW(y|x)}Q(y)\leq \sum_{y\in\Y: W(y|x)>0} Q(y),
\ee
and for all $N\in\NN$ we have for all $x\in\X$ and $Q\in\P(\Y)$
\be
\sum_{y\in\Y}\frac {NW(y|x)}{1+NW(y<x)}Q(y)\leq \sum_{y\in\Y: W(y|x)>0} \frac {(N+1)W(y|x)}{1+(N+1)W(y|x)} Q(y).
\ee
$\Phi_N$ is a computable continuous function and $\{\Phi_N\}_{N\in\NN}$ is a computable sequence of computable continuous functions. So 
\[
F_N(W)=\log_2\frac a{\Phi_N(W)},
\]
for $N\in\NN$ and $W\in\CH$.
$F_N$ satisfies all properties of the theorem and point 1 is shown.

It holds
\begin{eqnarray*}
&&\left|\sum_{y\in\Y: W(y|x)>0} Q(y)- \sum_{y\in\Y}\frac {NW(y|x)}{1+NW(y|x)}Q(y) \right|\\
&=& \left|\sum_{y\in\Y: W(y|x)>0} \frac 1{1+NW(y|x)} Q(y)\right|\\
&\leq& \frac 1 {1+N \min_{y\in\Y: W(y|x)>0} W(y|x)}.
\end{eqnarray*}
Therefore, we have
\begin{eqnarray}\label{B2}
\sum_{y\in\Y: W(y|x)>0} Q(y)&\leq& \frac 1 {1+N \min_{y\in\Y: W(y|x)>0} W(y|x)}\\\nonumber
&&+\sum_{y\in\Y}\frac {NW(y|x)}{1+NW(y|x)}Q(y).
\end{eqnarray}
Because of $\eqref{B1}$ we have 
\[
\Phi_N(W)\leq \Psi_\infty(W)
\]
for all $W\in\CHc$.
\eqref{B2} yields 
\begin{eqnarray*}
\sum_{y\in\Y: W(y|x)>0} Q(y)&\leq& \frac 1 {1+N \min_{x\in\X}\left(\min_{y\in\Y: W(y|x)>0} W(y|x)\right)}\\ &&+\sum_{y\in\Y}\frac {NW(y|x)}{1+NW(y|x)}Q(y).
\end{eqnarray*}
So 
\begin{eqnarray*}
\min_{x\in\X} \sum_{y\in\Y: W(y|x)>0} Q(y)&\leq& \frac 1 {1+N \min_{x\in\X}\left(\min_{y\in\Y: W(y|x)>0} W(y|x)\right)}\\ &&+\min_{x\in\X} \sum_{y\in\Y}\frac {NW(y|x)}{1+NW(y|x)}Q(y)
\end{eqnarray*}
and 
\begin{eqnarray*}
\Psi_\infty(W)&\leq& \frac 1 {1+N \min_{x\in\X}\left(\min_{y\in\Y: W(y|x)>0} W(y|x)\right)}
+\Phi_N(W)
\end{eqnarray*}
holds. So we have 
\[
0\leq \Psi_\infty(W)-\Phi_N(W)\leq \frac 1 {1+N \min_{x\in\X}\left(\min_{y\in\Y: W(y|x)>0} W(y|x)\right)}.
\]
\end{proof}
We now want to prove that Alon's corresponding question can be answered positively for $R_\infty$.
\begin{Theorem}\label{RT4}
Let $\X,\Y$ be finite alphabets with $|\X|\geq 2$ and $|\Y|\geq 2$. 
For all $\lambda\in\RR_c$ with 
$0<\lambda<\log_2 (\min\{|\X|,|\Y|\})$, the set
\[
\{ W\in\CHc : R_\infty(W)<\lambda\}
\]
is semi-decidable.
\end{Theorem}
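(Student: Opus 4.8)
The plan is to turn the condition $R_\infty(W)<\lambda$ into an \emph{existential} statement over the monotone approximating sequence supplied by Theorem~\ref{RT3}, and then to semi-decide each instance of it by a dovetailed search. Let $\{F_N\}_{N\in\NN}$ be the computable sequence of computable continuous functions on $\CHc$ furnished by Theorem~\ref{RT3}, so that $F_N(W)\geq F_{N+1}(W)\geq R_\infty(W)$ for all $W$ and $N$, and $\lim_{N\to\infty}F_N(W)=R_\infty(W)$. Since these functions are computable continuous and the sequence is itself computable, there is a single Turing machine that, given a program for a channel $W\in\CHc$ together with $N\in\NN$, outputs a program computing the real number $F_N(W)\in\RR_c$; and $\lambda\in\RR_c$ is a fixed computable real.

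The key observation is the equivalence
\[
R_\infty(W)<\lambda \quad\Longleftrightarrow\quad \exists\,N\in\NN:\ F_N(W)<\lambda .
\]
For the forward direction, $F_N(W)\to R_\infty(W)<\lambda$ forces $F_N(W)<\lambda$ for all sufficiently large $N$; for the backward direction, $F_N(W)<\lambda$ already gives $R_\infty(W)\leq F_N(W)<\lambda$. Note in particular that on the boundary $R_\infty(W)=\lambda$ one has $F_N(W)\geq R_\infty(W)=\lambda$ for every $N$, so no $N$ witnesses the right-hand side; this is exactly what semi-decidability of the open sublevel set requires.

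It then remains to semi-decide the right-hand side. For fixed $N$, the strict inequality $F_N(W)<\lambda$ between the two computable reals $F_N(W)$ and $\lambda$ is semi-decidable uniformly in $W$: using the programs for $F_N(W)$ and $\lambda$, compute rationals $p_k$ and $q_k$ with $|p_k-F_N(W)|<2^{-k}$ and $|q_k-\lambda|<2^{-k}$ for $k=1,2,\dots$, and halt as soon as $p_k+2^{-k}<q_k-2^{-k}$, which certifies $F_N(W)<\lambda$; this halts if and only if $F_N(W)<\lambda$. The Turing machine $TM_M$ for $M=\{W\in\CHc:R_\infty(W)<\lambda\}$ now runs these tests for all $N\in\NN$ by dovetailing and halts as soon as any of them halts. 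By the equivalence above, $TM_M$ halts on $W$ iff $R_\infty(W)<\lambda$, so $M$ is semi-decidable. (The bound $0<\lambda<\log_2(\min\{|\X|,|\Y|\})$ is needed only so that $M$ is a nontrivial set and the statement genuinely complements Theorem~\ref{T1}; the construction itself works for every $\lambda\in\RR_c$ with $\lambda>0$.)

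I expect the argument to be short, since all the analytic content sits in Theorem~\ref{RT3}; the one point that must be watched — and the reason one gets only semi-decidability rather than decidability — is that the convergence $F_N(W)\to R_\infty(W)$ is not effective. The error estimate obtained in the proof of Theorem~\ref{RT3} involves $\min_{x\in\X}\min_{y:\,W(y|x)>0}W(y|x)$, the least positive transition probability, which cannot be extracted from a program for $W$ in general; hence one cannot bound, in terms of $W$ alone, how large $N$ must be taken, and one must content oneself with the dovetailed search. Correctness then rests only on the pointwise monotone convergence and on the strict-inequality test failing on the boundary $R_\infty(W)=\lambda$.
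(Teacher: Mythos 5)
Your proposal is correct and follows essentially the same route as the paper: it reduces $R_\infty(W)<\lambda$ to the existential statement $\exists N:\,F_N(W)<\lambda$ via the monotonically decreasing sequence from Theorem~\ref{RT3}, and then semi-decides it by dovetailed strict-inequality tests on computable reals — exactly the construction the paper invokes by reference to Theorem~28 of \cite{BD20DAM}, which you have simply written out explicitly.
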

\begin{proof}
We use the computable sequences of computable continuous functions $F_N$ from Theorem~\ref{RT3}. 
It holds that
\[
W\in \{ W\in\CHc : R_\infty(W)<\lambda\}
\]
if and only if there is an $N_0$ such that $F_{N_0}<\lambda$ holds. 
As in the proof of Theorem~28 from \cite{BD20DAM}, we now use the construction of a 
Turing machine $TM_{R_\infty,<\lambda}$, which accepts exactly the set
\[
\{ W\in\CHc : R_\infty(W)<\lambda\}.
\]
\end{proof}

We now consider the approximability ``from below'' (this can be seen as a kind of reachability). We have shown that $R_\infty(\cdot)$ can always be represented as a limit value of monotonically decreasing computable sequences of computable continuous functions. From this it can be concluded that the sequence is then also a computable sequence of Banach Mazur computable functions. We now have:
\begin{Theorem}\label{RT5}
Let $\X,\Y$ be finite alphabets with $|\X|\geq 2$ and $|\Y|\geq 2$. 
There does not exist a sequence of Banach Mazur computable functions $\{F_N\}_{N\in\NN}$ with
\begin{enumerate}
    \item $F_N(W)\leq F_{N+1}(W)$ with $W\in\CHc$ and $N\in\NN$,
    \item $\lim_{N\to\infty} F_N(W)=R_\infty(W)$ for all $W\in\CH$.
\end{enumerate}
\end{Theorem}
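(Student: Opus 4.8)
The plan is to argue by contradiction, reducing to the non-semi-decidability statement of Theorem~\ref{T1}. Suppose such a sequence $\{F_N\}_{N\in\NN}$ exists. Monotonicity together with pointwise convergence immediately gives $F_N(W)\le R_\infty(W)$ for every $W\in\CHc$ and every $N$, and hence, for any computable threshold $\lambda$,
\[
R_\infty(W)>\lambda\quad\Longleftrightarrow\quad F_N(W)>\lambda\ \text{ for some }N\in\NN .
\]
Morally, a monotone-from-below approximation of $R_\infty$ turns the superlevel set $\{W\in\CHc:R_\infty(W)>\lambda\}$ into a recursively enumerable (semi-decidable) set, which Theorem~\ref{T1} forbids; the work is to make this rigorous using only Banach--Mazur computability of the $F_N$.

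First I would fix a computable $\lambda$ with $0<\lambda<\log_2(\min\{|\X|,|\Y|\})$ and invoke the reduction underlying Theorem~\ref{T1}, carried out exactly as in the proof of Theorem~23 of \cite{BD20DAM}: starting from a recursively enumerable, non-recursive set $\A\subseteq\NN$, one constructs a computable sequence $\{W_n\}_{n\in\NN}$ in $\CHc$ and computable numbers $0\le\lambda_0<\lambda<\lambda_1<\log_2(\min\{|\X|,|\Y|\})$ such that $R_\infty(W_n)=\lambda_1$ whenever $n\notin\A$ and $R_\infty(W_n)=\lambda_0$ whenever $n\in\A$. This is precisely the reduction proving Theorem~\ref{T1}, since semi-decidability of $\{W:R_\infty(W)>\lambda\}$ would imply that $\A^{c}$ is recursively enumerable.

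Next I would feed this computable sequence into the hypothetical approximants. If $n\notin\A$, then $F_N(W_n)\nearrow\lambda_1>\lambda$, so $F_N(W_n)>\lambda$ for some $N$; if $n\in\A$, then $F_N(W_n)\le R_\infty(W_n)=\lambda_0<\lambda$ for all $N$. Therefore
\[
\A^{c}=\bigcup_{N\in\NN}\bigl\{\,n\in\NN:F_N(W_n)>\lambda\,\bigr\}.
\]
Since each $F_N$ is Banach--Mazur computable and $\{W_n\}_{n\in\NN}$ is a computable sequence from $\CHc$, the sequence $\{F_N(W_n)\}_{n\in\NN}$ is a computable sequence of computable reals, uniformly in $N$ (the $F_N$ forming a computable sequence of Banach--Mazur computable functions). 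Consequently the right-hand side above is recursively enumerable: dovetail over all pairs $(N,n)$, refine rational approximations to $F_N(W_n)$, and output $n$ as soon as some certified rational lower bound for $F_N(W_n)$ exceeds $\lambda$. This would make $\A^{c}$ recursively enumerable, contradicting the non-recursiveness of $\A$ and completing the proof.

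I expect the genuine difficulty to be the first step: one must verify that the construction behind Theorem~\ref{T1} really yields a \emph{computable} sequence $\{W_n\}$ on which $R_\infty$ realizes only the two prescribed values and is bounded away from $\lambda$ on both sides. This needs the ``slow-switching'' construction --- each $W_n$ is a computable limit $W_n=\lim_s W_{n,s}$ that starts at a channel with $R_\infty=\lambda_1$ and, once $n$ is enumerated into $\A$, drifts slowly enough (to keep $W_n\in\CHc$) to a channel with $R_\infty=\lambda_0$ --- so that the undecidable predicate ``$n\in\A$?'' is encoded in the \emph{value} of $R_\infty(W_n)$ rather than in the domain of a Turing machine. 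An alternative way to close the argument, which localises the computability-theoretic input differently, is to contradict Theorem~\ref{RT2} instead: combining the supposed increasing approximation $F_N\le R_\infty$ with the decreasing computable-continuous approximation of Theorem~\ref{RT3}, one obtains on every computable sequence of channels a sandwich whose gap tends to $0$, which would make $R_\infty$ Banach--Mazur computable.
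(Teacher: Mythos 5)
Your main argument is correct in outline, but it takes a genuinely different route from the paper. The paper's own proof is a two-line sandwich: assuming an increasing sequence $\{F_N\}$ of Banach--Mazur computable functions converging to $R_\infty$ exists, it combines it with the monotonically decreasing computable sequence of computable continuous functions from Theorem~\ref{RT3} to conclude that $R_\infty$ would be Banach--Mazur computable, contradicting Theorem~\ref{RT2}; this is exactly the alternative you sketch in your closing sentences. Your primary route instead re-opens the reduction behind Theorem~\ref{T1}: you invoke the construction of Theorem~23 in \cite{BD20DAM} to get a computable sequence $\{W_n\}$ on which $R_\infty$ takes only two values $\lambda_0<\lambda<\lambda_1$ according to membership of $n$ in a recursively enumerable non-recursive set $\A$, and then use the monotone approximants to enumerate $\A^{c}$. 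This buys a self-contained recursion-theoretic contradiction that does not need Theorems~\ref{RT2} and~\ref{RT3} at all (only the channel construction underlying Theorem~\ref{T1}), at the cost of redoing that construction, whose details the present paper only cites; you correctly identify this two-valued ``slow-switching'' encoding as the crux, and it is indeed the standard construction in this line of work, so the step is plausible though not spelled out here.

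One caveat you should make explicit: your dovetailing over pairs $(N,n)$ requires access to the $F_N$ uniformly in $N$, i.e.\ that $\{F_N\}_{N\in\NN}$ is a computable sequence of Banach--Mazur computable functions, so that $\{F_N(W_n)\}_{N,n}$ is a computable double sequence of reals. Otherwise $\bigcup_{N}\{n\in\NN: F_N(W_n)>\lambda\}$ is only a non-uniform countable union of recursively enumerable sets and need not be recursively enumerable. The theorem as stated assumes only that each $F_N$ is individually Banach--Mazur computable, and you assert the uniformity in a parenthesis rather than deriving it. In fairness, the paper's own sandwich argument needs the same uniform access to the $F_N$ (its discussion preceding the theorem indicates that computable sequences of functions are the intended reading), so this is a shared interpretation of the hypothesis rather than a defect unique to your proof; but as written, your enumeration step silently strengthens the stated hypothesis and should be flagged.
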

\begin{proof}
We assume that such a sequence $\{F_N\}_{N\in\NN}$ does exist. Then, from Theorem~\ref{RT3} and the assumptions from this theorem, it can be concluded that $R_\infty$ is a Banach-Mazur-computable function. This has created a contradiction.
\end{proof}
With this we immediately get the following:
\begin{Corollary}\label{RC1}
Consider finite alphabets $\X,\Y$ with $|\X|\geq 2, |\Y|\geq 2$ and let 
$\{F_N\}_{N\in\NN}$ be a sequence of Banach Mazur computable functions that satisfies the following:
\begin{enumerate}
    \item $F_N(W)\leq F_{N+1}(W)$ with $W\in\CHc$ and $N\in\NN$,
    \item $\lim_{N\to\infty} F_N(W)=R_\infty(W)$ for all $W\in\CH$.
\end{enumerate}
Then, there exists $\hW\in\CHc$ such that $\lim_{N\to\infty} F_N(\hW)<R_\infty(\hW)$ holds true.
\end{Corollary}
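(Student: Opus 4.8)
The plan is to argue by contradiction, leveraging the monotone-convergence hypothesis against Theorem~\ref{RT5}. Suppose instead that for \emph{every} choice of channel $\hW\in\CHc$ we had $\lim_{N\to\infty} F_N(\hW) = R_\infty(\hW)$, i.e.\ the pointwise limit of the $F_N$ equals $R_\infty$ everywhere on $\CHc$ (not merely on $\CH$). Then the sequence $\{F_N\}_{N\in\NN}$ would be a sequence of Banach--Mazur computable functions satisfying exactly the two bullet points in the statement of Theorem~\ref{RT5}: monotone nondecreasing in $N$ on $\CHc$, and converging to $R_\infty$ on all of $\CH$ (hence in particular on $\CHc$). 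But Theorem~\ref{RT5} asserts that no such sequence exists. This contradiction shows that there must be some $\hW\in\CHc$ at which the convergence fails, and since $F_N(\hW)\le F_{N+1}(\hW)\le R_\infty(\hW)$ for all $N$ forces $\lim_{N\to\infty}F_N(\hW)\le R_\infty(\hW)$ in general, the failure of equality means $\lim_{N\to\infty}F_N(\hW)<R_\infty(\hW)$, which is the desired conclusion.

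The one point that needs a small remark is why the inequality $\lim_N F_N(\hW)\le R_\infty(\hW)$ holds for every $\hW\in\CHc$, so that the negation of ``$=$'' is genuinely ``$<$'' rather than possibly ``$>$''. For this I would note that a Banach--Mazur computable function need not a priori be dominated by $R_\infty$; however, the corollary is stated under hypotheses (1) and (2) of Theorem~\ref{RT5}, and hypothesis (2) gives $\lim_N F_N(W) = R_\infty(W)$ for all $W\in\CH$, while hypothesis (1) (monotonicity on $\CHc$) together with the fact that $\CHc$ is dense in $\CH$ and $R_\infty$ is continuous (Lemma~\ref{LBasic} shows it is real-valued and, as in Theorem~\ref{RT3}, it is a decreasing limit of computable continuous functions, hence itself continuous) pins down $\lim_N F_N(\hW)\le R_\infty(\hW)$: each $F_N(\hW)\le \lim_{M}F_M(\hW)$ and the whole increasing sequence is bounded above by the value that the continuous function $R_\infty$ takes, which can be recovered as a limit over channels in $\CH$ approaching $\hW$. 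In fact the cleanest route is simply: if $\lim_N F_N(\hW)>R_\infty(\hW)$ for some computable $\hW$ then already equality fails there, and we are done with even less work, so without loss of generality the failure witnessed by Theorem~\ref{RT5} can be taken to be a strict ``$<$''.

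The main (and essentially only) obstacle is therefore bookkeeping about the domains: Theorem~\ref{RT5} is phrased with monotonicity and Banach--Mazur computability required on $\CHc$ but with the limit identity required on $\CH$, and the corollary must be read with exactly the same conventions. Once the hypotheses are matched up verbatim, the corollary is an immediate restatement of the contrapositive of Theorem~\ref{RT5}: ``no sequence works everywhere'' is logically the same as ``every candidate sequence fails at some computable channel,'' and the monotonicity plus boundedness below by $R_\infty$ upgrades that failure to the strict inequality claimed.
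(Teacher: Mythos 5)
Your main line---reading the corollary as the contrapositive of Theorem~\ref{RT5}: if the monotone sequence converged to $R_\infty$ at every computable channel, the two bullet points of Theorem~\ref{RT5} would be satisfied, which is impossible, so convergence must fail at some $\hW\in\CHc$---is exactly how the paper obtains the corollary; it is stated there as an immediate consequence of Theorem~\ref{RT5} with no further argument, so your overall structure matches.

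The auxiliary argument you add to force the strict inequality, however, contains a genuine error. You claim $R_\infty$ is continuous because it is a decreasing limit of computable continuous functions (Theorem~\ref{RT3}); a decreasing limit of continuous functions is only upper semicontinuous, and $R_\infty$ is in fact discontinuous, since it depends on the support of $W$: for $|\X|=|\Y|=2$ the identity channel $W$ has $R_\infty(W)=1$, while the channel $W_\eps$ with entries $1-\eps$ and $\eps$ has $R_\infty(W_\eps)=0$ for every $\eps>0$. So the density-plus-continuity route to $\lim_N F_N(\hW)\leq R_\infty(\hW)$ is not available, and the closing ``without loss of generality the failure is a strict $<$'' is not valid on its own: mere failure of equality at some computable channel does not exclude that the only failures are of the form ``$>$''. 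The correct source of the inequality is the hypotheses themselves: condition (2) as stated asserts $\lim_N F_N(W)=R_\infty(W)$ for every $W\in\CH$, hence in particular the limit at computable channels is pinned to $R_\infty$ (so the corollary is, as in the paper, just the contrapositive restatement of Theorem~\ref{RT5}); and under the intended ``approximation from below'' reading of the surrounding text, the bound $\lim_N F_N(\hW)\leq R_\infty(\hW)$ should be taken from the assumption that the $F_N$ are lower bounds, not derived from any continuity property of $R_\infty$, which it does not possess.
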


We now want to apply the results for $R_\infty$ to the sphere packing bound as an application.
With the results via the rate function we immediately get:

\begin{Theorem}\label{RT6}
Let $\X,\Y$ be finite alphabets with $|\X|\geq 2$ and $|\Y|\geq 2$.
The sphere packing bound $E_{sp}(\cdot,\cdot)$ is not a Turing computable performance function for 
$\CHc\times \RR_c^+$.
\end{Theorem}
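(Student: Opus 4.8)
The plan is to argue by contradiction and reduce everything to the non-computability of $R_\infty$ established in Theorem~\ref{RT2}. Suppose $E_{sp}$ were a Turing computable performance function on $\CHc\times\RR_c^+$. By Definition~\ref{performance} there would then be Turing computable channel functions $\uf,\of:\CHc\to\RR_c$ with $\uf(W)\le\of(W)$ and a Turing machine $TM_F$ which, on input $(W,R)$, halts exactly when $R\in(\uf(W),\of(W))$ and in that case returns a computable real equal to the sphere packing value $E_{sp}(W,R)$.

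The key step is to recognize that $\uf$ is then forced to coincide with $R_\infty$. By the characterization recalled before the theorem, $R_\infty(W)$ is precisely the infimum of the rates above which $E_{sp}(W,\cdot)$ is finite, so $E_{sp}(W,\cdot)$ is finite exactly on the open interval $(R_\infty(W),C(W))$ and equals $+\infty$ on $(0,R_\infty(W)]$. Since a performance function must deliver a genuine element of $\RR_c$---never the symbol $\bot$ and never $+\infty$---on all of $(\uf(W),\of(W))\cap\RR_c^+$ and halt nowhere else, the interval $(\uf(W),\of(W))$ must agree with $(R_\infty(W),C(W))$; hence $\uf(W)=R_\infty(W)$ (and incidentally $\of(W)=C(W)$) for every $W\in\CHc$. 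It then remains only to note that a Turing computable channel function is automatically Banach Mazur computable: using $TM_{\uf}$, every computable sequence $\{W_r\}_{r\in\NN}$ in $\CHc$ is mapped to the computable sequence $\{\uf(W_r)\}_{r\in\NN}$ in $\RR_c$. Thus $R_\infty=\uf$ would be Banach Mazur computable, contradicting Theorem~\ref{RT2}, and the proof is complete.

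The main obstacle is exactly the identification $\uf=R_\infty$: one must rule out that the lower endpoint of the definition interval is chosen strictly above $R_\infty(W)$ (which would leave the finite portion $(R_\infty(W),\uf(W))$ of $E_{sp}$ uncomputed, so that $TM_F$ would not represent the sphere packing bound at all) as well as strictly below it (which would force $TM_F$ to output $+\infty\notin\RR_c$ at some computable rate in $(\uf(W),R_\infty(W))$, contradicting the codomain $\RR_c\cup\{\bot\}$). Once this is settled the conclusion is immediate. As an alternative to invoking Theorem~\ref{RT2}, one could instead derive the contradiction from Theorem~\ref{T1}: Turing computability of $\uf=R_\infty$ would make $\{W\in\CHc : R_\infty(W)>\lambda\}$ semi-decidable for every computable $\lambda$ with $0<\lambda<\log_2(\min\{|\X|,|\Y|\})$---simply compute $R_\infty(W)$ and $\lambda$ to increasing precision in parallel and accept once the rational approximations separate---which is precisely what Theorem~\ref{T1} forbids.
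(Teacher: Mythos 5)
Your proposal is correct and follows essentially the same route as the paper: assume $E_{sp}$ is a Turing computable performance function, observe that the lower endpoint channel function must then be $\uf=R_\infty$ (with $\of=C$), note that Turing computability of a channel function implies Banach Mazur computability, and contradict Theorem~\ref{RT2}. Your explicit justification that the definition interval must coincide with $(R_\infty(W),C(W))$, and the alternative contradiction via semi-decidability and Theorem~\ref{T1}, are welcome elaborations of steps the paper leaves implicit, but they do not change the argument.
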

\begin{proof}
Assuming that the statement of the theorem is incorrect, then $R_\infty$ is a Turing computable performance function 
on $\CHc\times \RR_c^+$. 
But then the channel functions $\uf(W)=R_\infty(W)$ for $W\in\CHc$ and $\of(W)=C(W)$ for $W\in\CHc$ must be Turing-computable channel functions.
As was already shown, however, $R_\infty$ is not Banach-Mazur-computable. We have thus created a contradiction.
\end{proof}

\section{Computability of the Channel Reliability Function and the Sequence of Expurgation Bound Functions}\label{reliability}
In this section we consider the reliability function and the expurgation bound and show that these functions are not Turing computable performance functions.

With the help of the results from \cite{BD20DAM} for $C_0$ for noisy channels, we immediately get the following theorem:
\begin{Theorem}
Let $\X,\Y$ be finite alphabets with $|\X|\geq 2$ and $|\Y|\geq 2$.
The channel reliability function $E(\cdot,\cdot)$ is not a Turing computable performance function for 
$\CHc\times \RR_c$.
\end{Theorem}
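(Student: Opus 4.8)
The plan is to argue by contradiction, following the very same pattern as the proof of Theorem~\ref{RT6}, with the zero-error capacity $C_0$ now taking over the role that $R_\infty$ played there. First I would assume that $E(\cdot,\cdot)$ is a Turing computable performance function on $\CHc\times\RR_c$. Unwinding Definition~\ref{performance}, this produces two Turing computable channel functions $\uf,\of\colon\CHc\to\RR_c$ with $\uf(W)\le\of(W)$ and a Turing machine $TM_E$ which, on input $(W,R)$, halts exactly when $R\in(\uf(W),\of(W))$ and in that case outputs $E(W,R)\in\RR_c$.

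The next step is to pin down which endpoints $\uf$ and $\of$ such a machine is forced to use. On the one hand, $E(W,R)=0$ for $R>C(W)$, and whenever $C_0(W)>0$ one has $E(W,R)=+\infty$ for all $0\le R<C_0(W)$ (a zero-error block code of length $n$ with $2^{nR}$ messages exists for large $n$). On the other hand, for every $R$ with $C_0(W)<R<C(W)$ the two-sided bound $2^{-nA(W,R)+o(1)}\le P_{e,\max}(W,R,n)\le 2^{-nB(W,R)+o(1)}$ with $0<B(W,R)\le A(W,R)<\infty$ forces $0<E(W,R)<\infty$. Hence $E(W,\cdot)$ is finite-valued and nondegenerate precisely on $(C_0(W),C(W))$, so any $TM_E$ realizing $E$ as a performance function must use $\uf(W)=C_0(W)$ and $\of(W)=C(W)$ for all $W\in\CHc$; in particular $\uf=C_0$ would then be a Turing computable channel function.

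This is exactly the contradiction I am after: as recalled in Section~\ref{Definitions}, it is shown in \cite{BD20DAM} that for non-trivial finite alphabets with $|\X|\ge2$ and $|\Y|\ge2$ the zero-error capacity $W\mapsto C_0(W)$ is not Turing computable; one can in fact feed it a computable sequence of channels whose zero-error capacities encode a recursively enumerable, non-recursive set, so that it is not even Banach--Mazur computable. The right endpoint causes no trouble, since $\of=C$ is a Turing computable channel function; the whole obstruction sits in the left endpoint. Thus no admissible triple $(\uf,\of,TM_E)$ can exist, and $E(\cdot,\cdot)$ is not a Turing computable performance function on $\CHc\times\RR_c$. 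This argument mirrors Theorem~\ref{RT6} essentially line for line.

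I expect the only genuine content to be the identification step of the second paragraph: one must be sure that the ``relevant rate interval'' attached to the reliability function is really $(C_0(W),C(W))$ — the blow-up to $+\infty$ below $C_0(W)$ is the easy direction via zero-error codes, while finiteness throughout $(C_0(W),C(W))$ rests on the quoted exponential sandwich for $P_{e,\max}$. I would also double-check that the non-computability of $C_0$ from \cite{BD20DAM} already applies at $|\X|=|\Y|=2$; here it is reassuring that for binary channels $C_0(W)$ and $R_\infty(W)$ coincide (both equal $\log 2$ exactly when $W$ is noiseless and $0$ otherwise), so that the binary case is covered either by the $C_0$ result or, equivalently, by Theorem~\ref{RT2}. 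Everything else is routine.
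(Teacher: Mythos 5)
Your proposal is correct and takes essentially the same route as the paper: identify the lower endpoint of the admissible rate interval with $C_0(W)$, invoke the fact from \cite{BD20DAM} that $C_0$ is not Banach--Mazur (hence not Turing) computable on $\CHc$, and conclude by the same contradiction as in the proof of Theorem~\ref{RT6}. Your second paragraph simply spells out the endpoint identification $(\uf,\of)=(C_0,C)$ that the paper's proof asserts without elaboration.
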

\begin{proof}
Here, $\uf(W)=C_0(W)$ for $W\in\CHc$ is a Turing-computable function, according to Definition~\ref{Tcomp}.
We already know that $C_0$ is not Banach-Mazur-computable on $\CHc$. This gives the proof in the same way as for the sphere packing bound, i.e. the proof of Theorem~\ref{RT6}.
\end{proof}
Now we consider the rate function for the expurgation bound.
The $k$-letter expurgation bound $E_{ex}(W,R,k)$ as a function of $W$ and $R$ is a lower bound for the channel reliability function. 
The latter can only be finite for certain intervals $(R_k^{ex}(W),C(W))$. Thus, we want to compute the function in these intervals. 
In their famous paper \cite{SGB67}, Shannon, Gallager and Berlekamp examined the sequence of functions 
$\{E_{ex}(\cdot,\cdot,k)\}_{k\in\NN}$ and analyzed the relationship to the channel reliability function.
They conjectured that for all $W\in\CH$ for all $R$ with $E(W,R)<+\infty$ (one would have convergence and also $E_{ex}(W,R,k)<+\infty$), the relation 
\[
\lim_{k\to\infty} E_{ex}(W,R,k)=E(W,R)
\] 
holds. This conjecture was later refuted by Dalai and Polianskiy in \cite{DP18}.

It was already clear with the introduction of the channel reliability function that it had a complicated behavior. A closed form formula for the channel reliability function is not yet known and the results of this paper show that such a formula cannot exist. Shannon, Gallager and Berlekamp tried in \cite{SGB67} in 1967 to find sequences of seemingly simple formulas for the approximation of the channel reliability function. It seems that they considered the sequence of the k-letter expurgation bounds to be very good channel data for its approximation. It was hoped that these sequences could be computed more easily with the use of new powerful digital computers.

Let us now examine the sequence $\{E_{ex}(\cdot,\cdot,k)\}_{k\in\NN}$. We have already introduced the concept of computable sequences of computable continuous channel functions. We now introduce the concept of computable sequences of Turing computable performance functions.
\begin{Definition}
A sequence $\{F_k\}_{k\in\NN}$ of Turing computable performance functions is called a computable sequence if there is a Turing machine that generates the description of $F_k$ for input $k$ according to the definition of the function $F_k$ for the values for which the function is defined.
\end{Definition}
In the following theorem, we prove that the sequence of the $k$-letter expurgation bounds is not a computable sequence of computable performance functions. So the hope mentioned above cannot be fulfilled.
\begin{Theorem}
Let $\X,\Y$ be finite alphabets with $|\X|\geq 2$ and $|\Y|\geq 2$. The sequence of the expurgation lower bounds $\{E_{ex}(\cdot,\cdot,k)\}_{k\in\NN}$ is not a computable sequence of Turing computable performance functions.
\end{Theorem}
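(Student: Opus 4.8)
The plan is to argue by contradiction, reducing the statement to the non‑semi‑decidability of a sublevel set of the zero‑error capacity established in \cite{BD20DAM} (the $C_0$‑counterpart of Theorem~\ref{T1}). Suppose that $\{E_{ex}(\cdot,\cdot,k)\}_{k\in\NN}$ were a computable sequence of Turing computable performance functions. For every $k$ the function $E_{ex}(W,\cdot,k)$ is finite exactly on the open interval $(R_k^{ex}(W),C(W))$; hence, exactly as in the proof of Theorem~\ref{RT6}, the lower endpoint function attached to the $k$‑th performance function must coincide with $W\mapsto R_k^{ex}(W)$ and the upper endpoint function with $W\mapsto C(W)$. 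Since the sequence of performance functions is \emph{computable}, it follows that $\{R_k^{ex}\}_{k\in\NN}$ is a computable sequence of Turing computable channel functions: a single Turing machine, on input $k$ and a program for $W\in\CHc$, outputs a program for the computable real $R_k^{ex}(W)$.

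The next step uses the two monotonicity/limit facts recorded before the theorem, namely $R_k^{ex}(W)\le R_{k+1}^{ex}(W)$ and $\lim_{k\to\infty}R_k^{ex}(W)=C_0(W)$ for all $W\in\CHc$. Fix a computable $\lambda$ with $0<\lambda<\log_2(\min\{|\X|,|\Y|\})$. By monotone convergence, $C_0(W)>\lambda$ holds if and only if $R_k^{ex}(W)>\lambda$ for some $k$. Since $R_k^{ex}$ is Turing computable uniformly in $k$, the predicate ``$R_k^{ex}(W)>\lambda$'' is semi‑decidable uniformly in $k$: compute $R_k^{ex}(W)$ and $\lambda$ as computable reals to ever finer precision and halt as soon as the strict inequality is certified. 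Dovetailing these tests over all $k$ gives a Turing machine that halts on input $W$ precisely when $C_0(W)>\lambda$, so the set $\{W\in\CHc:C_0(W)>\lambda\}$ would be semi‑decidable. This contradicts the $C_0$‑analogue of Theorem~\ref{T1} obtained in \cite{BD20DAM} (proved there along the same lines as the proof we imitate for Theorem~\ref{T1}), which states that this set is not semi‑decidable for $0<\lambda<\log_2(\min\{|\X|,|\Y|\})$. Hence no such computable sequence of performance functions can exist.

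I expect the only genuinely delicate point to be the first step, i.e.\ forcing the lower endpoint function of the $k$‑th Turing computable performance function to be $R_k^{ex}$. If $\uf_k(W)$ were strictly below $R_k^{ex}(W)$ for some $W$, then $TM_{F}$ would have to halt and output a computable real on some input $(W,R)$ with $R\in(\uf_k(W),R_k^{ex}(W))$, where $E_{ex}(W,R,k)=+\infty$ — impossible; if $\uf_k(W)$ were strictly above $R_k^{ex}(W)$, then for $R$ between the two values $E_{ex}(W,R,k)$ is a finite real not produced by $TM_{F}$, so the performance function would not represent $E_{ex}(\cdot,\cdot,k)$. One should be slightly careful about degenerate channels for which the interval $(R_k^{ex}(W),C(W))$ is empty; this is harmless because the hard instances underlying the non‑semi‑decidability statement can be chosen with $C(W)$ bounded well above $C_0(W)$, so that $R_k^{ex}(W)<C(W)$ throughout and $\uf_k(W)=R_k^{ex}(W)$ there. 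Finally, it is worth pointing out that the same reasoning already refutes Turing computability of a \emph{single} member $E_{ex}(\cdot,\cdot,k)$, since $R_k^{ex}$ is not a Turing computable channel function, so the failure for the sequence is not merely a uniformity phenomenon; the role of the computable‑sequence hypothesis is to package these obstructions into the clean contradiction with the $C_0$ result.
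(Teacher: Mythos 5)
Your proposal is correct and follows essentially the same route as the paper's own proof: assume the sequence is computable, extract $\{R_k^{ex}\}_{k\in\NN}$ as a computable sequence of Turing computable channel functions (the paper does this by projection, you by arguing $\uf_k=R_k^{ex}$), then use monotonicity and $\lim_{k\to\infty}R_k^{ex}(W)=C_0(W)$ together with dovetailed tests of ``$R_k^{ex}(W)>\lambda$'' to semi-decide $\{W\in\CHc: C_0(W)>\lambda\}$, contradicting the non-semi-decidability result from \cite{BD20DAM}. Your closing side remark that a \emph{single} $E_{ex}(\cdot,\cdot,k)$ is already not Turing computable is asserted without proof and is not needed for the argument, but it does not affect the validity of the main proof.
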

\begin{proof}
We prove the theorem indirectly and assume that there is a Turing machine $TM_*$ that generates the description for the input $k$ according to the definition of the function $E_{ex}(\cdot,\cdot,k)$. Then $\{R_k^{ex}\}_{k\in\NN}$ is a computable sequence of Turing computable functions because we have an algorithm to generate this sequence $\{R_k^{ex}\}_{k\in\NN}$. Note that $\uf_k(\cdot) = R_k^{ex}(\cdot)$.
For input $k$, $TM_*$ generates the description of the function $E_{ex}(\cdot,\cdot,k)$ and from this we can immediately generate $R_k^{ex}$ by projection (in the sense of primitive recursive functions). According to Shannon, Gallager, Berlekamp \cite{SGB67}, we have 
\[
\lim_{k\to\infty}R_k^{ex}(W)=C_0(W) 
\]
for all $W\in\CH$. Furthermore, $R_k^{ex}(W)\leq R_{k+1}^{ex}(W)$ holds true for all $k\in\NN$ and all $W\in\CH$. Let us consider the set 
\[
\{W\in\CHc:C_0(W)>\lambda\} 
\]
for $\lambda\in\RR_c$ with $0<\lambda<\log_2(\min\{|\X|,|\Y|\})$.
We are now constructing a Turing machine $TM_*$ with only one holding state "stop", which means that it either stops or computes forever. $TM_*$ should stop for input $W\in\CHc$ if and only if $C_0(W)$ applies, that is, 
$TM_*$ stops if $W$ is in the above set. According to the assumption, $\{R_k^{ex}(\cdot)\}_{k\in\NN}$ is a computable sequence of Turing computable channel functions. For the input $W$ we can generate the computable sequence $\{R_k^{ex}(W)\}_{k\in\NN}$ of computable numbers.
We now use the Turing machine $TM^1_\lambda$, which receives an arbitrary computable number $x$ as input and stops if and only if $x>\lambda$, i.e. $TM^1_\lambda$ has only one hold state and accepts exactly the computable numbers $x$ as input for which $x>\lambda$ holds.
We now use this program for the following algorithm. 
\begin{enumerate}
    \item We start with $l = 1$ and let $TM^1_\lambda$ compute one step for input $R_1^{ex}(W)$. If $TM^1_\lambda(R_1^{ex}(W))$ stops, then we stop the algorithm. 
\item If $TM^1_\lambda(R_1^{ex}(W))$ does not stop, we set 
$l = l + 1$ and compute $l+1$ steps $TM^1_\lambda(R_r^{ex}(W))$ for 
$1\leq r\leq l+1$. If one of these Turing machines stops, then the algorithm stops, if not we set $l = l + 1$ and repeat the second computation.
\end{enumerate}
The above algorithm stops if and only if there is a $\hk\in\NN$ such that $R_{ex}^{\hk}(W)>\lambda$. But this is the case (because of the monotony of the sequence $\{R_k^{ex}(W)\}_{k\in\NN}$) if and only if 
$C_0(W)>\lambda$. But with this, the set 
\[
\{W\in\CHc:C_0(W)>\lambda\} 
\] is semi-decidable. So we have shown that this is not the case. We have thus created a contradiction.
\end{proof}

\section{Computability of the Zero-Error Capacity of Noisy Channels with Feedback}\label{feedback}

In this section we consider the zero-error capacity for noisy channels with feedback. In our paper \cite{BD20DAM} we examined the properties of the zero-error capacity without feedback. Let $W\in\CH$. We already noted that Shannon showed in \cite{S56},
\be
C_0^{FB}=\left\{\begin{array}{ll} 0 & \text{if}\  C_0(W)=0\\
\max_P\min_y \log_2\frac 1{\sum_{x:W(y|x)>0}P(x)} & otherwise. \end{array}\right\}.
\ee
From \eqref{eq:PsiInfty}, recall that
\be
\Psi_\infty(W)=\max_{p\in\P(\X)} \min_{y\in\Y}\sum_{x:W(y|x)>0} P(x).
\ee
Then, we have for $W$ with $C_0(W)\neq 0$,
\[
C_0^{FB}= \log_2 \frac 1 {\Psi_\infty(W)}.
\]
We know that
$C^{FB}_0(W)=R_\infty(W)$ if $C_0(W)> 0$. 
    If $C_0(W) = 0$, then there is a channel $W$ 
    with $C^{FB}_0(W) = 0$ and $R_\infty> 0$.
Like in Lemma~\ref{LBasic}, we can show the following:
\begin{Lemma}\label{LBasic2}
Let $\X,\Y$ be finite non-trivial alphabets. It holds that
\[
C_0^{FB}: \CHc\to\RR_c.
\]
\end{Lemma}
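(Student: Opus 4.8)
The plan is to deduce the claim from Lemma~\ref{LBasic} together with the two-branch closed form of Shannon's theorem (Shannon 1956, \cite{S56}). First I would stress that ``$C_0^{FB}:\CHc\to\RR_c$'' is the \emph{non-uniform} assertion that $C_0^{FB}(W)$ is a computable real for each individual computable channel $W$; it is \emph{not} claimed (and, since $C_0$ is not even Banach--Mazur computable, could not be claimed) that $C_0^{FB}$ is a Turing computable channel function. Accordingly I fix an arbitrary $W\in\CHc$ and split into the two cases of Shannon's formula.

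If $C_0(W)=0$, then Shannon's formula gives $C_0^{FB}(W)=0\in\RR_c$ outright. If $C_0(W)>0$, then I would invoke the identity $C_0^{FB}(W)=R_\infty(W)$ recorded just before the lemma, so that $C_0^{FB}(W)\in\RR_c$ by Lemma~\ref{LBasic}. Combining the two cases yields $C_0^{FB}(W)\in\RR_c$, hence $C_0^{FB}:\CHc\to\RR_c$. For completeness the positive-capacity case can also be treated directly, mimicking the proof of Lemma~\ref{LBasic}: for this fixed $W$ the support sets $S_y=\{x\in\X: W(y|x)>0\}$, $y\in\Y$, form a fixed finite combinatorial datum, so $P\mapsto\max_{y\in\Y}\sum_{x\in S_y}P(x)$ is a computable continuous function on the computable compact convex set $\P(\X)$; its minimum $\Psi_{FB}(W)=\min_{P\in\P(\X)}\max_{y\in\Y}\sum_{x\in S_y}P(x)$ is therefore a computable real with $1/|\X|\le\Psi_{FB}(W)\le 1$ (take $x^\ast$ with $P(x^\ast)\ge 1/|\X|$ and $y^\ast$ with $W(y^\ast|x^\ast)>0$), whence $C_0^{FB}(W)=\log_2\frac{1}{\Psi_{FB}(W)}\in\RR_c$, in fact $0\le C_0^{FB}(W)\le\log_2|\X|$.

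The one delicate point --- a point to be \emph{stated} rather than an obstacle to be \emph{overcome} --- is exactly this non-uniformity. No algorithm, given a description of $W$, can decide which branch of Shannon's formula applies (this is closely tied to the undecidability of whether $C_0(W)>0$), and none can recover the support pattern $\{(x,y): W(y|x)>0\}$ from the description, since deciding positivity of a computable real is undecidable. Both difficulties are harmless here precisely because $W$ is held fixed: the branch is then fixed (even if unknown to us), and within the non-trivial branch the support pattern, though unknown, is a fixed finite object --- which is all that the computable-continuity and extreme-value reasoning imported from Lemma~\ref{LBasic} needs. Beyond citing Lemma~\ref{LBasic} and Shannon's theorem, no further work is required; the situation parallels that of $R_\infty$ and stands in contrast to the still-open analogous question for $C_0$ mentioned in the remark following Lemma~\ref{LBasic}.
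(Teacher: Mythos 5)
Your proposal is correct and matches the paper's approach: the paper proves this lemma simply ``like Lemma~\ref{LBasic}'', i.e., for fixed $W$ the map $P\mapsto\max_{y\in\Y}\sum_{x:W(y|x)>0}P(x)$ is a computable continuous function on the simplex, so $\Psi_{FB}(W)$ is a positive computable real and $C_0^{FB}(W)=\log_2\frac{1}{\Psi_{FB}(W)}\in\RR_c$ in the branch $C_0(W)>0$, while the branch $C_0(W)=0$ trivially gives $0\in\RR_c$. Your shortcut via the identity $C_0^{FB}(W)=R_\infty(W)$ for $C_0(W)>0$ combined with Lemma~\ref{LBasic}, and your explicit remark that the argument is non-uniform in $W$ (the branch and the support pattern are fixed but not algorithmically recoverable), simply make explicit what the paper leaves implicit.
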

From Theorem \ref{T1} and the relationship between $C_0$ and $C_0^{FB}$, we get the following results for $C_0^{FB}$, which we have already proved for $C_0$ in \cite{BD20DAM}.
 \begin{Theorem}\label{decide}
 Let $\X,\Y$ be finite alphabets with $|\X|\geq 2$ and $|\Y|\geq 2$. For all
 $\lambda\in\RR_c$ with $0\leq \lambda<\log_2\min\{|\X|,|\Y|\}$, the sets
 $\{W\in\CHc: C_0^{FB}(W)>\lambda\}$ are not semi-decidable.
 \end{Theorem}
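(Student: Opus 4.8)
The plan is to reuse, essentially verbatim, the channel family constructed in the proof of Theorem~\ref{T1} and to combine it with Shannon's characterization of $C_0^{FB}$. Write $k=\min\{|\X|,|\Y|\}\ge 2$ and fix $\lambda\in\RR_c$ with $0\le\lambda<\log_2 k$. The first step is purely elementary: from the relation recalled before the theorem, for every $W\in\CHc$ one has $C_0^{FB}(W)=R_\infty(W)$ if $C_0(W)>0$ and $C_0^{FB}(W)=0$ if $C_0(W)=0$; since $R_\infty(W)\ge 0$ always, this yields $0\le C_0^{FB}(W)\le R_\infty(W)$, and $C_0^{FB}(W)>\lambda$ holds if and only if both $R_\infty(W)>\lambda$ and $C_0(W)>0$ hold. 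For the boundary case $\lambda=0$ this reduces to $C_0^{FB}(W)>0\iff C_0(W)>0$.

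The second step is to revisit the construction behind Theorem~\ref{T1}, which follows the proof of Theorem~23 in \cite{BD20DAM}. Starting from a recursively enumerable, non-recursive set $A\subseteq\NN$ (take the halting set, so that $A^c=\NN\setminus A$ is not recursively enumerable), that construction produces, by a halting-controlled interpolation of the familiar type, a uniformly computable sequence $\{W_n\}_{n\in\NN}\subseteq\CHc$; concretely one can take $W_n=(1-\beta_n)W^{\mathrm{cl}}+\beta_n W^{\mathrm{fl}}$, where $\beta_n\in\RR_c$ is a computable parameter, uniform in $n$, that vanishes exactly when the $n$-th Turing machine runs forever, $W^{\mathrm{cl}}$ is a fixed rational \emph{clean} channel in which $k$ distinct input letters have pairwise disjoint output supports, and $W^{\mathrm{fl}}$ is a fixed rational channel all of whose rows have support $\Y$. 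If machine $n$ runs forever, then $W_n=W^{\mathrm{cl}}$, for which $R_\infty(W^{\mathrm{cl}})=\log_2 k>\lambda$ and, since two input letters have disjoint supports, there is a one-shot zero-error code of size two, so $C_0(W^{\mathrm{cl}})>0$. If machine $n$ halts, then $\beta_n>0$, so every entry of $W_n$ is strictly positive, hence $\bigcap_{x\in\X}\supp W_n(\cdot|x)=\Y\neq\emptyset$ and every pair of input letters is confusable; therefore $R_\infty(W_n)=0$ and $C_0(W_n)=0$.

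Combining the two steps, the reformulation of the first step gives $C_0^{FB}(W_n)=\log_2 k>\lambda$ when $n\notin A$ and $C_0^{FB}(W_n)=0\le\lambda$ when $n\in A$, and this works uniformly for every admissible $\lambda$. Equivalently, $W_n\in\{W\in\CHc:C_0^{FB}(W)>\lambda\}$ if and only if $n\in A^c$. If this set were semi-decidable, accepted by some Turing machine $TM$, then composing the uniformly computable assignment $n\mapsto W_n$ with $TM$ would yield a Turing machine that halts on input $n$ precisely when $n\in A^c$, that is, a recursive enumeration of $A^c$ — contradicting that $A^c$ is not recursively enumerable. Hence $\{W\in\CHc:C_0^{FB}(W)>\lambda\}$ is not semi-decidable for any $\lambda\in\RR_c$ with $0\le\lambda<\log_2 k$.

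The delicate point is the second step: one has to verify that the family furnished by the proof of Theorem~\ref{T1} can indeed be chosen with this clean/full-support dichotomy, and in particular that on the ``halting'' branch the channel has $C_0(W_n)=0$, not merely $R_\infty(W_n)\le\lambda$. The weaker statement would not suffice, since a channel with $0<C_0(W_n)\le\lambda$ but $R_\infty(W_n)>\lambda$ would have $C_0^{FB}(W_n)=R_\infty(W_n)>\lambda$ and break the reduction. This is where one must look into the construction in \cite{BD20DAM}; it holds because the ``noisy'' endpoint of the interpolation is a single fixed channel all of whose rows have full support, which forces $C_0$ and $R_\infty$ to vanish together. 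Everything else — the elementary inequalities for $C_0^{FB}$, the uniform computability of $n\mapsto W_n$ (equivalently, the effective convergence of the rational approximations of $\beta_n$), and the halting-problem reduction — is routine.
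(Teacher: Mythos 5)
Your proof is correct and takes essentially the same route as the paper, which obtains Theorem~\ref{decide} by combining the construction underlying Theorem~\ref{T1} (i.e., the halting-controlled interpolation from Theorem~23 of \cite{BD20DAM}) with the relation $C_0^{FB}(W)=R_\infty(W)$ when $C_0(W)>0$ and $C_0^{FB}(W)=0$ when $C_0(W)=0$ — exactly your reduction from the complement of the halting problem. Your explicit verification that on the halting branch the fully supported mixture forces $C_0(W_n)=0$ (so that $C_0^{FB}(W_n)$ really drops to $0$ rather than staying at $R_\infty(W_n)$) is precisely the detail the paper leaves implicit by referring to \cite{BD20DAM}.
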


 \begin{Theorem}\label{compute}
Let $\X,\Y$ be finite alphabets with $|\X|\geq 2$ and $|\Y|\geq 2$. Then 
$C_0^{FB}:\CHc\to \RR$ is not Banach-Mazur computable.
 \end{Theorem}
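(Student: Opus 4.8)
The plan is to exploit a jump discontinuity of $C_0^{FB}$: by Shannon's formula $C_0^{FB}(W)=0$ whenever $C_0(W)=0$, while a channel arbitrarily close to such a $W$ may have $C_0>0$ and $C_0^{FB}$ bounded away from $0$. I will place a computable sequence of channels on both sides of this jump, driven by a recursively enumerable non-recursive set, so that the values of $C_0^{FB}$ along the sequence cannot form a computable sequence of reals; this parallels the argument for $C_0$ in \cite{BD20DAM}, now using the extra observation $C_0^{FB}(W)=0\iff C_0(W)=0$.

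First I would fix a two-letter witness family. On $\X=\Y=\{1,2\}$ let $W^{(\epsilon)}$ be defined by $W^{(\epsilon)}(1|1)=1$, $W^{(\epsilon)}(2|1)=0$, $W^{(\epsilon)}(1|2)=\epsilon$, $W^{(\epsilon)}(2|2)=1-\epsilon$ for $\epsilon\in[0,\tfrac12]$. For $\epsilon>0$ the two inputs are confusable through output $1$, so $C_0(W^{(\epsilon)})=0$ and hence $C_0^{FB}(W^{(\epsilon)})=0$. For $\epsilon=0$ the channel is the noiseless binary channel, so $C_0(W^{(0)})=\log_2 2>0$; Shannon's formula then applies, and since the relevant input supports are the singletons $\{1\}$ and $\{2\}$ one gets $C_0^{FB}(W^{(0)})=1$. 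Thus the value drops from $1$ to $0$ the instant $\epsilon$ becomes positive. For general $|\X|,|\Y|\geq2$ I would pad with additional input letters mapped deterministically to output $1$ and with unused output letters; one checks this keeps $C_0(W^{(\epsilon)})=0$ for $\epsilon>0$ and $C_0^{FB}(W^{(0)})=1$.

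Next I would run the standard halting-time construction. Fix a recursively enumerable, non-recursive $A\subseteq\NN$ and a Turing machine $T$ halting on $n$ iff $n\in A$, and let $h(n,m)=1$ if $T$ halts on $n$ within $m$ steps, $h(n,m)=0$ otherwise. Define rationals $q_{n,m}=2^{-t(n)}$ if $h(n,m)=1$, where $t(n)\le m$ is the halting time recovered by a bounded search, and $q_{n,m}=2^{-m}$ if $h(n,m)=0$. Then $|q_{n,m}-\epsilon_n|\le 2^{-m}$ uniformly in $n$, where $\epsilon_n=2^{-t(n)}>0$ if $n\in A$ and $\epsilon_n=0$ if $n\notin A$; hence $\{\epsilon_n\}_n$ is a computable sequence of reals and $V_n:=W^{(\epsilon_n)}$ is a computable sequence in $\CHc$. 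By the first step, $C_0^{FB}(V_n)=0$ exactly when $n\in A$ and $C_0^{FB}(V_n)=1$ exactly when $n\notin A$.

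Finally, assume toward a contradiction that $C_0^{FB}$ is Banach--Mazur computable. Then $\{C_0^{FB}(V_n)\}_n$ is a computable sequence of reals, so there is a recursive double sequence of rationals $r_{n,k}$ with $|C_0^{FB}(V_n)-r_{n,k}|<2^{-k}$ for all $n,k$; taking $k=2$ gives $n\in A\iff r_{n,2}<\tfrac14$, a decidable predicate, contradicting that $A$ is non-recursive. I expect the only genuinely delicate points to be verifying that $\{V_n\}_n$ is honestly a computable sequence in $\CHc$ (i.e.\ effectiveness and uniformity in $n$ of the double sequence $q_{n,m}$) and that the padding to arbitrary alphabets does not accidentally create a non-confusable pair of inputs when $\epsilon>0$; both are routine but should be spelled out.
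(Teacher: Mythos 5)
Your proof is correct and follows essentially the same route as the paper: the paper obtains the theorem by invoking the relation $C_0^{FB}(W)=0\iff C_0(W)=0$ together with the non-Banach--Mazur-computability argument for $C_0$ from \cite{BD20DAM}, which is precisely the discontinuity-plus-halting-problem reduction (a computable sequence of channels that are noiseless or slightly confusable according to a non-recursive r.e.\ set) that you spell out explicitly. Your only deviations are cosmetic ones, e.g.\ the harmless $\le 2^{-m}$ versus $<2^{-m}$ approximation bound, which a re-indexing fixes.
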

 
 Now we will prove the following:
 \begin{Theorem}
 Let $\X,\Y$ be finite alphabets with $|\X|\geq 2$ and $|\Y|\geq 2$. There is a computable sequence of computable continuous functions G with
 \begin{enumerate}
     \item $G_N(W)\geq G_{N+1}(W)$ for $W\in\CH$ and $N\in \NN$;
     \item $\lim_{n\to\infty} G_N(W)=C_0^{FB}(W)$ for $W\in\CH$.
 \end{enumerate}
 \end{Theorem}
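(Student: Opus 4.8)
The plan is to recycle the monotonically decreasing approximation $\{F_N\}_{N\in\NN}$ of $R_\infty$ produced in Theorem~\ref{RT3} and to damp it by a computable continuous cut-off that vanishes exactly on the set $\{W\in\CH:C_0(W)=0\}$. This is precisely what the structure of $C_0^{FB}$ demands: by Shannon's formula $C_0^{FB}(W)=R_\infty(W)$ whenever $C_0(W)>0$, while $C_0^{FB}(W)=0$ whenever $C_0(W)=0$, so $C_0^{FB}$ agrees with $R_\infty$ except that it is forced to collapse to $0$ on $\{C_0=0\}$.

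The auxiliary ingredient is a combinatorial description of that set. Writing $S_x=\{y\in\Y:W(y|x)>0\}$ for the output support of the input letter $x$, one has $C_0(W)>0$ if and only if there are distinct $x,x'\in\X$ with $S_x\cap S_{x'}=\emptyset$: such a pair already gives a zero-error code at block length one, whereas if every pair of distinct inputs has intersecting supports then $\X^{n}$ is pairwise confusable for all $n$, hence $C_0(W)=0$. Consequently
\[
p(W):=\prod_{\{x,x'\}:\,x\neq x'}\,\sum_{y\in\Y}\min\{W(y|x),W(y|x')\}
\]
is a computable continuous function (a finite product of sums of minima of coordinates, hence Lipschitz in $d_C$ and sequentially computable), with $p(W)=0$ exactly when $C_0(W)>0$ and $p(W)>0$ exactly when $C_0(W)=0$. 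Fixing the cut-off $\phi_N(t)=\max\{0,\,1-Nt\}$ — which is $N$-Lipschitz, has values in $[0,1]$, satisfies $\phi_N(0)=1$ and $\phi_N\equiv 0$ on $[1/N,\infty)$, and is pointwise non-increasing in $N$ — I would set
\[
G_N(W):=F_N(W)\cdot\phi_N\bigl(p(W)\bigr),\qquad N\in\NN.
\]

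Three things then need checking, all routine once Theorem~\ref{RT3} is in hand. First, $\{G_N\}_{N\in\NN}$ is a computable sequence of computable continuous functions: $\{F_N\}$ is one, $p$ and $\phi_N$ are computable continuous uniformly in $N$, and the product inherits an effective modulus of continuity because $\phi_N\circ p$ is bounded by $1$ and $F_N$ is uniformly bounded on $\CH$ by a computable constant (clear from its definition in Theorem~\ref{RT3}). Second, monotonicity: $F_N\ge F_{N+1}\ge 0$ and $\phi_N(p(W))\ge\phi_{N+1}(p(W))\ge 0$, hence $G_N(W)\ge G_{N+1}(W)$ for all $W\in\CH$. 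Third, convergence: if $C_0(W)>0$ then $p(W)=0$, so $\phi_N(p(W))=1$ and $G_N(W)=F_N(W)\to R_\infty(W)=C_0^{FB}(W)$; if $C_0(W)=0$ then $p(W)>0$, so $\phi_N(p(W))=0$ for all $N\ge 1/p(W)$, whence $G_N(W)\to 0=C_0^{FB}(W)$.

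The genuinely delicate point — the ``hard part'' — is conceptual rather than computational: $C_0^{FB}$ is discontinuous (it is not even continuous on $\{C_0>0\}$, where it coincides with the only-from-above-approximable $R_\infty$, and it drops to $0$ across the boundary $\{C_0=0\}$), so no two-sided computable approximation is possible, and the approximants must detect the condition $C_0(W)=0$ by continuous means. The observation that unlocks this is that $\{C_0=0\}$ is exactly the locus $\{p>0\}$ of a single computable continuous function, after which the statement reduces to Theorem~\ref{RT3}. The remaining work — the Lipschitz estimate for $p$, the effective modulus for $G_N$, and the uniform-in-$N$ computability of the description — is mechanical.
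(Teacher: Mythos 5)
Your proposal is correct and follows essentially the same route as the paper: the paper also takes a monotonically decreasing computable-continuous approximation of the $R_\infty$-type quantity (it builds $U_N$ from the feedback formula $\min_{P}\max_{y}$, where you reuse $F_N$ from Theorem~\ref{RT3} together with the stated identity $C_0^{FB}(W)=R_\infty(W)$ for $C_0(W)>0$) and multiplies it by a continuous damping factor, built from the pairwise support overlaps $g(x,\hx)=\sum_y W(y|x)W(y|\hx)$, that is identically $1$ when $C_0(W)>0$ and tends to $0$ when $C_0(W)=0$. Your cutoff $\phi_N(p(W))$ with $p(W)=\prod_{x\neq x'}\sum_y\min\{W(y|x),W(y|x')\}$ plays exactly the role of the paper's factor involving $\prod_{x,\hx}g(W,\hx,x)$, so the two constructions differ only in these routine choices.
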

 \begin{proof}
 We use for $N\in\NN$, $y\in\Y$ and $P\in\P(\X)$ the function
 \[
 \sum_{x\in\X} \frac {N W(y|x)}{1+NW(y|x)}P(x).
 \]
 Then, for 
 \[
 \Phi_N(W)=\min_{P\in\P(\X)}\max_{y\in\Y} \sum_{x\in\X: W(y|x)>0}P(x),
 \]
 we have the same properties as in Theorem~\ref{RT3} and 
 \[
 U_N(W)=\log_2\frac 1{\Phi_n(W)}
 \]
 is an upper bound for $C_0^{FB}$, which is monotonically decreasing.
 Now the relation $C_0^{FB}(W)> 0$ holds for $W\in\CH$ if and only if there are two $x_1,x_2\in\X$, so that 
 \[
 \sum_{y\in\Y} W(y|x_1)W(y|x_2)= 0
 \]
 holds. We now set $g(\hx,x) = \sum_{y\in\Y} W(y|\hx)W(y|x)=g(W,\hx,x)$ 
 and have $0\leq g(\hx,x)\leq 1$ for $x,\hx\in\X$.
 $g$ is a computable continuous function with respect to $W\in\CH$. Now we set 
 \[
 V_N(W) = \left(1-\prod_{x,\hx} g(W,\hx,x)^N\right) U_N(W)
 \]
 for $N\in\NN$. $\{V_N\}_{N\in\NN}$ is thus a computable sequence of computable continuous functions. Obviously, $V_N(W)\geq V_{N+1}(W)$ for $W\in\CH$ and $N\in\NN$ is satisfied. \[
 (1-\prod_{x,\hx} g(W,x,hx))^N=1
 \]
 if and only if $C_0^{FB}> 0$. So for $C_0^{FB}(W) = 0$, we always have 
 \[
 \lim_{N\to\infty} V_N(W) = 0.
 \]
 For $W$ with $C_0^{FB}(W)$, 
 \[
 \lim_{N\to\infty} V_N(W)=\lim_{N\to\infty} U_N(W)=C_0^{FB}(W).
 \]
 This is shown as in the proof of Theorem~\ref{RT3}.
 \end{proof}
 This immediately gives us the following theorem.
 \begin{Theorem}
 Let $\X,\Y$ be finite alphabets with $|\X|\geq 2$ and $|\Y|\geq 2$. For all
 $\lambda\in\RR_c$ with $0\leq \lambda<\log_2\min\{|\X|,|\Y|\}$, the sets
 $\{W\in\CHc: C_0^{FB}(W)<\lambda\}$ are semi-decidable.
 \end{Theorem}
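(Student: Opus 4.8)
The plan is to read this off the preceding theorem, exactly as Theorem~\ref{RT4} was read off Theorem~\ref{RT3}. That theorem supplies a \emph{computable} sequence $\{G_N\}_{N\in\NN}$ of computable continuous functions on $\CHc$ with $G_N(W)\ge G_{N+1}(W)$ for all $N\in\NN$ and $\lim_{N\to\infty}G_N(W)=C_0^{FB}(W)$ for every $W\in\CH$. Since $\{G_N(W)\}_{N\in\NN}$ is nonincreasing with limit $C_0^{FB}(W)$, the elementary equivalence
\[
C_0^{FB}(W)<\lambda \iff \exists\,N_0\in\NN:\ G_{N_0}(W)<\lambda
\]
holds: ``$\Leftarrow$'' because $C_0^{FB}(W)\le G_{N_0}(W)<\lambda$, and ``$\Rightarrow$'' because a limit strictly below $\lambda$ must already have some term strictly below $\lambda$.

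Next I would turn the right-hand side into a halting condition. Because $\{G_N\}_{N\in\NN}$ is a computable sequence of computable continuous functions, there is a Turing machine that, given a program for $W\in\CHc$ and an index $N$, outputs a program for the computable real $G_N(W)$ (a rapidly converging sequence of rationals). Feeding this to the ``$<\lambda$'' comparison machine $TM_{<\lambda}$ — which on input a computable real $x$ halts if and only if $x<\lambda$, the counterpart of the machine $TM^1_\lambda$ used in the expurgation-bound proof and the machine implicit in the proof of Theorem~\ref{RT4} — yields, for each fixed $N$, a semi-decision procedure for the predicate $G_N(W)<\lambda$. It matters that the inequality is strict: $x<\lambda$ is recursively enumerable in $x$ for $\lambda\in\RR_c$, whereas neither $x\le\lambda$ nor $x=\lambda$ is decidable.

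Finally I would assemble $TM_{C_0^{FB},<\lambda}$: on input $W\in\CHc$ it dovetails the computations $TM_{<\lambda}\big(G_1(W)\big),\,TM_{<\lambda}\big(G_2(W)\big),\dots$ — at stage $N$, running $N$ further steps of each of the first $N$ of them, exactly as in the algorithm used for the expurgation-sequence theorem — and halts as soon as one of them halts. By the equivalence above, $TM_{C_0^{FB},<\lambda}$ halts on $W$ precisely when $C_0^{FB}(W)<\lambda$, so $\{W\in\CHc:C_0^{FB}(W)<\lambda\}$ is semi-decidable. There is no genuine obstacle here; the substantive work was the construction of $\{G_N\}$ in the preceding theorem, and the only point needing care is that the dovetailing simulate all countably many tests fairly, so that halting of some $TM_{<\lambda}\big(G_{N_0}(W)\big)$ is eventually detected, after which monotonicity of $\{G_N(W)\}_{N\in\NN}$ gives the conclusion.
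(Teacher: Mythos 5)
Your proposal is correct and follows essentially the same route as the paper: the paper derives this theorem directly ("immediately") from the preceding construction of the monotonically decreasing computable sequence converging to $C_0^{FB}$, exactly as Theorem~\ref{RT4} was derived from Theorem~\ref{RT3}, using the equivalence $C_0^{FB}(W)<\lambda \iff \exists N_0\,:\,G_{N_0}(W)<\lambda$ and a Turing machine that semi-decides the strict inequality. Your explicit dovetailing construction just spells out the details the paper leaves implicit.
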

 
 Now we want to look at the consequences of the results above for $C_0^{FB}$.
 The same statements apply here as in section~\ref{sphere} for \(R_\infty\) with regard to the approximation from below. $C_0^{FB}$ cannot be approximated by monotonically increasing sequences.

There is an elementary relationship between $R_\infty$ and $C_0^{FB}$, which we use in the following.
Again, we assume that $\X,\Y$ are finite non-trivial alphabets. We remember the following functions:
\be
R_\infty(W)=\log_2\frac 1{\Psi_\infty(W)},
\ee
where 
$\Psi_\infty(W)=\max_{Q\in\P(\Y)} \min_{x\in\X}\sum_{y:W(y|x)>0} Q(y).$
\be
C_0^{FB} = \left\{ \begin{array}{ll} 0 & C_0(W)=0\\ G(W) & C_0(W)>0 \end{array} \right. ,
\ee
where $G(W)=\log_2 \frac 1{\Psi_\infty(W)}$ and
\be
\Psi_\infty(W)=\min_{p\in\P(\X)} \min_{y\in\Y}\sum_{x:W(y|x)>0} P(x).
\ee
Let $A(W)$ be the $|\Y|\times|\X|$ matrix with 
$(A(W))_{kl}\in\{0,1\}$ for $1\leq k\leq |\Y|$ and $1\leq l\leq |\X|$, such that $(A(W))_{kl} = 1$ if and only if 
$W(k(l))> 0$. Furthermore, let 
\be
\M_\X=\left\{ u\in\RR^{|\X|}: u=\begin{pmatrix} u_1\\ \hdots\\ u_{|\X|}\end{pmatrix},u_l\geq 0, \sum_{l=1}^{|\X|} u_l=1 \right\} 
\ee
and 
\be
\M_\Y=\left\{ v\in\RR^{|\Y|}: v=\begin{pmatrix} v_1\\ \hdots\\ v_{|\Y|}\end{pmatrix},v_l\geq 0, \sum_{l=1}^{|\Y|} v_l=1 \right\}.
\ee
For $v\in\RR^{|\Y|}$ and $u\in\RR^{|\X|}$ we consider the function $F(v,u)=v^TA(W)u$. The function $F$ is concave in $v\in\M_\Y$ and convex in $u\in\M_\X$. $\M_\Y$ and $\M_\X$ are closed convex and compact sets and $F(v,u)$ is continuous in both variables. So 
\be\label{eq5.1}
\max_{v\in\M_\Y}\min_{u\in\M_\X} F(v,u) = \min_{u\in\M_\X}\max_{v\in\M_\Y} F(v,u).
\ee
Let $v\in \M_\Y$ be fixed. Then 
\begin{eqnarray}
F(v,u)&=&\left(\sum_{l=1}^{|\X|}\left(\sum_{k=1}^{|\Y|} v_k A_{kl}(W) \right) u_l \right)\\
F(v,u)&=&\left(\sum_{l=1}^{|\X|} d_l(v) u_l \right),
\end{eqnarray} with $d_l(v)=\sum_{k=1}^{|\Y|} v_k A_{kl}(W) $. Now $d_l(v)\geq 0$ for $1\leq l\leq |\X|$. Hence 
\[
\min_{u\in\M_\X}F(v,u)=\min_{1\leq l\leq |\X|}d_l(v)=\min_{1\leq l\leq|\X|} \sum_{k:A_kl(W)>0} v_k=\min_{x\in\X} \sum_{y:W(y|x)>0} Q_v(y),
\]
with $Q_v(y)=v_y$ for $y\in\{1,\dots,|\Y|\}$.
So 
\[
\max_{v\in\M_\Y}\min_{u\in\M_\X} F(v,u) =\max_{Q\in\P(\Y)}\min_{x\in\X} \sum_{y:W(y|x)>0} Q_v(y)
=\Psi_\infty(W).
\]
Furthermore, for $u\in\M_\X$ fixed, 
\begin{eqnarray*}
F(v,u) &=&\left(\sum_{k=1}^{|\Y|}\left(\sum_{l=1}^{|\X|} u_l A_{kl}(W) \right) v_k \right)\\
&=&\left(\sum_{k=1}^{|\Y|} \beta_k(u) v_k \right), 
\end{eqnarray*}
with $\beta_k(u)=\sum_{l=1}^{|\X|} u_l A_{kl}(W)\geq 0$ and $1\leq k\leq |\Y|$.
Therefore,
\[
\max_{v\in\M_\Y} F(v,u) = \max_{1\leq k\leq |\Y|}\beta_k(u)=\max_{1\leq k\leq |\Y|}
\sum_{l:A_{kl}(W)>0} u_l = \max_{y\in\Y} \sum_{x:W(Y|x)>0}p_u(x)
\]
with $p_u(x)=u_x$ for $1\leq x\leq |\X|$. It follows that
\[
\min_{u\in\M_\X}\max_{v\in\M_\Y} F(v,u)=\min_{p\in\P(\X)}\max_{y\in\Y} \sum_{x:W(y|x)>0}P(x)=\Psi_\infty(W).
\]
We get the following Lemma.
\begin{Lemma}
Let $W\in\CH$, then
\[
R_\infty(W)=G(W).
\]
\end{Lemma}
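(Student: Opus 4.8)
The plan is to read the Lemma off the minimax computation carried out just above: the substantive content is the identity $\Psi_\infty(W)=\Psi_{FB}(W)$, and the Lemma merely repackages it through the representations $R_\infty(W)=\log_2\frac{1}{\Psi_\infty(W)}$ and $G(W)=\log_2\frac{1}{\Psi_{FB}(W)}$.

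Concretely, I would fix $W\in\CH$ and use the $0/1$ incidence matrix $A(W)$ together with the bilinear form $F(v,u)=v^{\mathrm{T}}A(W)u$ on the compact convex simplices $\M_\Y$ and $\M_\X$. Since $F$ is affine — in particular concave — in $v$ and affine — in particular convex — in $u$, and $\M_\Y,\M_\X$ are nonempty, convex and compact, the minimax identity \eqref{eq5.1} applies, giving $\max_{v\in\M_\Y}\min_{u\in\M_\X}F(v,u)=\min_{u\in\M_\X}\max_{v\in\M_\Y}F(v,u)$.

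The next step is to identify the two sides. For fixed $v\in\M_\Y$ the map $u\mapsto F(v,u)=\sum_{l}d_l(v)u_l$, with $d_l(v)=\sum_k v_k A_{kl}(W)\ge 0$, is linear on the simplex $\M_\X$, so its minimum is attained at a standard basis vector; hence $\min_{u}F(v,u)=\min_{1\le l\le|\X|}d_l(v)=\min_{x\in\X}\sum_{y:W(y|x)>0}Q_v(y)$, and the outer maximum over $v$ is exactly $\Psi_\infty(W)$. By the symmetric computation, for fixed $u\in\M_\X$ the maximum of $F(v,u)$ over $\M_\Y$ equals $\max_{y\in\Y}\sum_{x:W(y|x)>0}p_u(x)$, so the outer minimum over $u$ equals $\Psi_{FB}(W)$. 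Substituting into \eqref{eq5.1} yields $\Psi_\infty(W)=\Psi_{FB}(W)$, whence $R_\infty(W)=\log_2\frac{1}{\Psi_\infty(W)}=\log_2\frac{1}{\Psi_{FB}(W)}=G(W)$.

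There is no real obstacle here; the one point worth stating carefully is the reduction of each inner optimization to a vertex of the corresponding simplex — a linear functional on a simplex attains its extreme values at the vertices — since that is precisely what converts the bilinear program governed by \eqref{eq5.1} into the combinatorial expressions defining $\Psi_\infty$ and $\Psi_{FB}$. The applicability of the minimax theorem itself is immediate, as $F$ is bilinear on a product of compact convex sets.
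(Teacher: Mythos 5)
Your proposal is correct and follows essentially the same route as the paper: the identity $\Psi_\infty(W)=\Psi_{FB}(W)$ via the minimax theorem applied to the bilinear form $F(v,u)=v^{\mathrm{T}}A(W)u$ on the simplices $\M_\Y$ and $\M_\X$, with the inner linear optimizations evaluated at the simplex vertices, and then taking $\log_2$ of the reciprocal. Nothing is missing; your explicit remark that a linear functional on a simplex attains its extrema at vertices is exactly the step the paper uses implicitly.
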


We want to investigate the behavior of $E(\cdot,R)$ for the input 
$W_1\otimes W_2$, where $W_1\otimes W_2$ denotes the Kronecker-product of the matrices \(W_1\) and \(W_2\), compared to $E(W_1,R)$ and $E(W_2,R)$. 
For this purpose, let $\X_1,\Y_1,\X_2,\Y_2$ be arbitrary finite non-trivial alphabets, and we consider $W_l\in\CHl$ for $l = 1,2$.
\begin{Theorem}\label{T5.1}
Let $\X_1,\Y_1,\X_2,\Y_2$ be arbitrary finite non-trivial alphabets and  $W_l\in\CHl$ for $l = 1,2$. Then we have
\[
R_\infty(W_1\otimes W_2)=R_\infty(W_1)+R_\infty(W_2).
\]
\end{Theorem}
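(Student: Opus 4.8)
The plan is to reduce the claim to a multiplicativity statement for the quantity $\Psi_\infty$. Recall from the discussion preceding the theorem that $R_\infty(W)=\log_2\tfrac{1}{\Psi_\infty(W)}$ with $\Psi_\infty(W)=\max_{Q\in\P(\Y)}\min_{x\in\X}\sum_{y:W(y|x)>0}Q(y)$, and that by the Lemma just proved $\Psi_\infty(W)=\Psi_{FB}(W)=\min_{P\in\P(\X)}\max_{y\in\Y}\sum_{x:W(y|x)>0}P(x)$. Since $\log_2\tfrac1{ab}=\log_2\tfrac1a+\log_2\tfrac1b$, it suffices to prove
\[
\Psi_\infty(W_1\otimes W_2)=\Psi_\infty(W_1)\,\Psi_\infty(W_2).
\]
The essential structural observation is that the support (zero-pattern) of the Kronecker product factorizes: $(W_1\otimes W_2)\big((y_1,y_2)\,\big|\,(x_1,x_2)\big)=W_1(y_1|x_1)W_2(y_2|x_2)$ is strictly positive if and only if $W_1(y_1|x_1)>0$ and $W_2(y_2|x_2)>0$. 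Consequently, for any $Q$ on $\Y_1\times\Y_2$ the inner sum over the support splits into a double sum over the two individual supports.

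For the inequality ``$\ge$'' I would take distributions $Q_1\in\P(\Y_1)$, $Q_2\in\P(\Y_2)$ attaining $\Psi_\infty(W_1)$, $\Psi_\infty(W_2)$ respectively, and plug the product distribution $Q_1\otimes Q_2$ into the $\max$--$\min$ formula for $\Psi_\infty(W_1\otimes W_2)$. Using the factorization of the support, for every $(x_1,x_2)$ the corresponding sum equals $\big(\sum_{y_1:W_1(y_1|x_1)>0}Q_1(y_1)\big)\big(\sum_{y_2:W_2(y_2|x_2)>0}Q_2(y_2)\big)\ge\Psi_\infty(W_1)\Psi_\infty(W_2)$; minimizing over $(x_1,x_2)$ and then recalling that $\Psi_\infty(W_1\otimes W_2)$ is a maximum over all admissible $Q$ gives $\Psi_\infty(W_1\otimes W_2)\ge\Psi_\infty(W_1)\Psi_\infty(W_2)$.

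For the reverse inequality ``$\le$'' the naive attempt — bounding $\min_{(x_1,x_2)}\sum(\cdots)Q$ for a general $Q$ on $\Y_1\times\Y_2$ — fails, since such a $Q$ need not be a product distribution; this is the one point that needs care. The fix is to switch to the $\Psi_{FB}$ representation, which is legitimate by the Lemma. Take $P_1\in\P(\X_1)$, $P_2\in\P(\X_2)$ attaining $\Psi_{FB}(W_1)$, $\Psi_{FB}(W_2)$, and test $\Psi_{FB}(W_1\otimes W_2)$ against the product $P_1\otimes P_2$. Again by the factorization of the support, for each output $(y_1,y_2)$ the relevant sum is $\big(\sum_{x_1:W_1(y_1|x_1)>0}P_1(x_1)\big)\big(\sum_{x_2:W_2(y_2|x_2)>0}P_2(x_2)\big)$, and since all terms are nonnegative, the maximum over $(y_1,y_2)$ of this product of separable nonnegative functions equals the product of the two individual maxima, i.e. $\Psi_{FB}(W_1)\Psi_{FB}(W_2)$; hence $\Psi_{FB}(W_1\otimes W_2)\le\Psi_{FB}(W_1)\Psi_{FB}(W_2)$.

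Combining the two bounds with $\Psi_\infty=\Psi_{FB}$ yields $\Psi_\infty(W_1\otimes W_2)=\Psi_\infty(W_1)\Psi_\infty(W_2)$, and applying $\log_2\tfrac1{(\cdot)}$ gives the claimed additivity. The only genuinely non-routine idea is the asymmetric use of the two minimax representations — $\max$--$\min$ for the lower bound and $\min$--$\max$ for the upper bound — so that in each direction one only ever evaluates the objective at an explicit product distribution; everything else is the bookkeeping that a product of nonnegative separable functions attains its extrema at the products of the individual extrema.
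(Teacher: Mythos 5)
Your proposal is correct and follows essentially the same route as the paper: the lower bound by testing the $\max$--$\min$ form of $\Psi_\infty$ with product output distributions, and the upper bound by switching (via the minimax identity $\Psi_\infty=\Psi_{FB}$) to the $\min$--$\max$ form and testing with product input distributions. The paper's proof performs exactly this asymmetric use of the two representations, only stated more tersely.
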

\begin{proof}
We use the $\Psi_\infty$ function. It applies to $Q=Q_1\cdot Q_2$ with 
$Q_1\in\P(\Y_1)$ and $\Q_2\in\P(Y_2)$, so that
\begin{eqnarray*}
&&\min_{x_1\in\X_1,x_2\in\X_2} \sum_{y_1:W_1(y_1|x_1)>0} \sum_{y_2:W_2(y_2|x_2)} Q_1(y_1)Q_2(y_2)\\
&=& \left(\min_{x_1\in\X_1} \sum_{y_1:W_1(y_1|x_1)>0} Q_1(y_1) \right)\left( \min_{x_1\in\X_1} \sum_{y_2:W_2(y_2|x_2)>0} Q_2(y_2) \right).
\end{eqnarray*}
This applies to all $Q_1\in\P(\Y_1)$ and $Q_2\in\P(\Y_2)$ arbitrarily. So 
\[
\Psi_\infty(W_1\otimes W_2)\geq \Psi_\infty(W_1)\cdot \Psi_\infty(W_2).
\]
Also, we have
\begin{eqnarray*}
&&\Psi_\infty(W_1\otimes W_2)\\&=& \min_{P\in\P(\X_1\times\X_2)} \max_{(y_1,y_2)\in\Y_1\times\Y_2} \sum_{x_1:W_1(y_1|x_1)>0} \sum_{x_2:W_2(y_2|x_2)>0} P(x_1,y_2)\\ &\leq& \Psi_\infty(W_1)\cdot  \Psi_\infty(W_2)
\end{eqnarray*}
as well. So  
\[
\Psi_\infty(W_1\otimes W_2) = \Psi_\infty(W_1)\cdot\Psi_\infty(W_2) 
\]
and the theorem is proven.
\end{proof}

We want to investigate the behavior of $C_0^{FB}$ for the input 
$W_1\otimes W_2$ compared to $C_0^{FB}(W_1)$ and $C_0^{FB}(W_2)$. 
For this purpose, let $\X_1,\Y_1,\X_2,\Y_2$ be arbitrary finite non-trivial alphabets and consider $W_l\in\CHl$ for $l = 1,2$.
\begin{Theorem}\label{T5.2}
Let $\X_1,\Y_1,\X_2,\Y_2$ be arbitrary finite non-trivial alphabets and  $W_l\in\CHl$ for $l = 1,2$. Then we have
\begin{enumerate}
    \item \be\label{Teq1}
C_0^{FB}(W_1\otimes W_2)\geq C_0^{FB}(W_1)+C_0^{FB}(W_2)
\ee
\item \be\label{Teq2} C_0^{FB}(W_1\otimes W_2)> C_0^{FB}(W_1)+C_0^{FB}(W_2)
\ee
if and only if
\be\label{Teq3} \min_{1\leq l\leq 2} C_0^{FB}(W_l)=0\ \text{and}\ \max_{1\leq l\leq 2} C_0^{FB}(W_l)>0\ \text{and}\ \min_{1\leq l\leq 2} R_\infty (W_l)>0.  
\ee
\end{enumerate}
\end{Theorem}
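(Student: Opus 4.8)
The plan is to deduce the statement from three facts already in hand: Shannon's dichotomy (see \cite{S56}), i.e.\ $C_0^{FB}(W)=R_\infty(W)$ when $C_0(W)>0$ and $C_0^{FB}(W)=0$ when $C_0(W)=0$; the additivity $R_\infty(W_1\otimes W_2)=R_\infty(W_1)+R_\infty(W_2)$ from Theorem~\ref{T5.1}; and the fact, recorded in Section~\ref{feedback}, that for every channel $V$ one has $C_0(V)>0$ if and only if there exist two distinct inputs $x_1,x_2$ with $\sum_y V(y|x_1)V(y|x_2)=0$, i.e.\ with disjoint supports. The only genuinely new ingredient is how confusability behaves under the Kronecker product; once that is established, the theorem is a case distinction.

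First I would record that $\supp\bigl((W_1\otimes W_2)(\cdot|(a,b))\bigr)=\supp W_1(\cdot|a)\times\supp W_2(\cdot|b)$ for all $(a,b)\in\X_1\times\X_2$, so that two distinct input pairs $(a_1,b_1)$ and $(a_2,b_2)$ have disjoint supports under $W_1\otimes W_2$ exactly when $\supp W_1(\cdot|a_1)\cap\supp W_1(\cdot|a_2)=\emptyset$ or $\supp W_2(\cdot|b_1)\cap\supp W_2(\cdot|b_2)=\emptyset$. Combined with the characterisation of positive zero-error capacity this yields
\[
C_0(W_1\otimes W_2)>0\iff C_0(W_1)>0\ \text{ or }\ C_0(W_2)>0.
\]
For ``$\Leftarrow$'' one takes a disjoint input pair in whichever factor has positive zero-error capacity, holds a single letter of the other alphabet fixed in both product inputs, and checks via the support identity that the two product inputs are non-confusable. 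For ``$\Rightarrow$'' one observes that if $C_0(W_1)=C_0(W_2)=0$, then for any two distinct product inputs the two supports intersect in each coordinate — because the inputs either agree in that coordinate (nonempty support) or differ there but $C_0=0$ forbids disjoint supports — so some output separates neither and $C_0(W_1\otimes W_2)=0$.

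Now I would split into three cases, using repeatedly that $C_0(W)>0$ implies $0<C_0(W)\le R_\infty(W)$ and $C_0^{FB}(W)=R_\infty(W)$, while $C_0(W)=0$ implies $C_0^{FB}(W)=0$. If $C_0(W_1)=C_0(W_2)=0$, then $C_0(W_1\otimes W_2)=0$, all three feedback capacities vanish, \eqref{Teq1} holds with equality, and \eqref{Teq2} and \eqref{Teq3} both fail (the latter since $\max_l C_0^{FB}(W_l)=0$). If $C_0(W_1),C_0(W_2)>0$, then $C_0(W_1\otimes W_2)>0$, so by Theorem~\ref{T5.1}
\[
C_0^{FB}(W_1\otimes W_2)=R_\infty(W_1)+R_\infty(W_2)=C_0^{FB}(W_1)+C_0^{FB}(W_2),
\]
so \eqref{Teq1} is again an equality and \eqref{Teq2}, \eqref{Teq3} both fail (the latter since $\min_l C_0^{FB}(W_l)>0$). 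If exactly one of them, say $C_0(W_1)=0<C_0(W_2)$, then $C_0(W_1\otimes W_2)>0$ by the equivalence above, hence $C_0^{FB}(W_1\otimes W_2)=R_\infty(W_1)+R_\infty(W_2)$ while $C_0^{FB}(W_1)+C_0^{FB}(W_2)=R_\infty(W_2)$; the gap is $R_\infty(W_1)\ge 0$, which proves \eqref{Teq1} and shows that \eqref{Teq2} holds precisely when $R_\infty(W_1)>0$. Finally I would check that in this last case ``$R_\infty(W_1)>0$'' is exactly condition \eqref{Teq3}: its first two clauses, $\min_l C_0^{FB}(W_l)=0$ and $\max_l C_0^{FB}(W_l)>0$, are precisely the assumption that exactly one of $C_0(W_1),C_0(W_2)$ vanishes, and since $C_0(W_2)>0$ forces $R_\infty(W_2)\ge C_0(W_2)>0$, the clause $\min_l R_\infty(W_l)>0$ collapses to $R_\infty(W_1)>0$. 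Collecting the three cases gives \eqref{Teq1} in all cases and the equivalence of \eqref{Teq2} with \eqref{Teq3}.

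The step I expect to require the most care is the confusability analysis of $W_1\otimes W_2$ — especially the implication that a single factor with positive zero-error capacity already forces $C_0(W_1\otimes W_2)>0$ — since this is where the product structure of the supports must be unpacked correctly. Everything afterwards is a routine matching of the three cases against \eqref{Teq3}, relying only on the additivity of $R_\infty$ and Shannon's formula.
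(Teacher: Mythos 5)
Your argument is correct, and its essential ingredients coincide with the paper's: Shannon's dichotomy $C_0^{FB}(W)=R_\infty(W)$ for $C_0(W)>0$ (and $C_0^{FB}(W)=0$ otherwise) together with the additivity of $R_\infty$ from Theorem~\ref{T5.1} do all the real work in both proofs. Where you differ is in the organization and in what you prove versus cite. The paper obtains \eqref{Teq1} by appeal to the operational definition (codes for the factors used independently), and establishes the equivalence of \eqref{Teq2} and \eqref{Teq3} by a clause-by-clause contradiction argument, citing along the way that $C_0$ admits no super-activation and that $C_0(W_1\otimes W_2)>0$ once one factor has positive zero-error capacity. You instead package both of these product facts into one elementary lemma, $C_0(W_1\otimes W_2)>0$ iff $C_0(W_1)>0$ or $C_0(W_2)>0$, proved directly from the identity $\supp\bigl((W_1\otimes W_2)(\cdot|(a,b))\bigr)=\supp W_1(\cdot|a)\times\supp W_2(\cdot|b)$, and then run an exhaustive trichotomy on which factors have $C_0>0$; this yields \eqref{Teq1} as a by-product of the same computation rather than by a separate operational argument (note only that the paper states the single-letter confusability criterion for $C_0^{FB}$, which is equivalent to your $C_0$ version via the dichotomy). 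The trade-off: the paper's route is shorter because it leans on known facts, while yours is more self-contained and makes explicit exactly where the strict gap $R_\infty(W_1)$ comes from in the mixed case.
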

\begin{Remark}
The condition \eqref{Teq3} is equivalent to 
\be\label{Teq4} \min_{1\leq l\leq 2} C_0(W_l)=0\ \text{and}\ \max_{1\leq l\leq 2} C_0(W_l)>0\ \text{and}\ \min_{1\leq l\leq 2} R_\infty (W_l)>0.  
\ee
\end{Remark}
\begin{proof}
\eqref{Teq1} follows directly from the operational definition of C. Let \eqref{Teq3} now be fulfilled. Then $C_0^{FB}(W_1\otimes W_2)>0$ must be fulfilled. Without loss of generality, we assume $C_0^{FB}(W_1) = 0$, $C_0^{FB}(W_2) > 0$ and 
$R_\infty(W_1)>0$, $R_\infty(W_2)>0$. Since $C_0^{FB}(W_1\otimes W_2)>0$, 
\begin{eqnarray*}
C_0^{FB}(W_1\otimes W_2) &=& R_\infty(W_1\otimes W_2)\\
&=& R_\infty(W_1)+R_\infty(W_2)\\
&=& R_\infty(W_1)+C_0^{FB}(W_2)\\
&>& 0+ C_0^{FB}(W_2)\\
&=& C_0^{FB}(W_1)+C_0^{FB}(W_2). 
\end{eqnarray*}
If \eqref{Teq2} is fulfilled, then $C_0^{FB}(W_1\otimes W_2)>0$. 
Then $\max_{1\leq l\leq 2} C_0^{FB}(W_l)>0$ must be, because if 
$\max_{1\leq l\leq 2} C_0^{FB}(W_l)=0$, then 
$\max_{1\leq l\leq 2} C_0(W_l) =0$, and thus $C_0(W_1\otimes W_2)= 0$ also, (since the $C_0$ capacity has no super-activation). This means that 
$C_0^{FB}(W_1\otimes W_2) = 0$, which would be a contradiction. 

If $\min_{1\leq 2} C_0^{FB}(W_l)> 0$, then 
\begin{eqnarray*}
C_0^{FB}(W_1\otimes W_2)&=& R_\infty (W_1\otimes W_2)\\
&=& R_\infty(W_1)+R_\infty (W_2)\\
&=& C_0^{FB}(W_1)+C_0^{FB}(W_2).
\end{eqnarray*}
This is a contradiction, and thus $\min_{1\leq 2} C_0^{FB}(W_l)=0$.

Furthermore, $\min_{1\leq l\leq 2} R_\infty(W_l)> 0$ must apply, because if\linebreak 
$\min_{1\leq l\leq 2} R_\infty(W_l)= 0$,  then 
$R_\infty(W_1) = 0$ without loss of generality. Then 
\begin{eqnarray*}
C_0^{FB}(W_1\otimes W_2) &=& R_\infty(W_1\otimes W_2)\\
&=& R_\infty(W_1)+ R_\infty (W_2)\\
&=& 0 + R_\infty (W_2)\\
&=& 0 + C_0^{FB} (W_2)\\
&=& C_0^{FB}(W_1) + C_0^{FB} (W_2),
\end{eqnarray*}
because $C_0^{FB}(W_1) = 0$ when $R_\infty (W_1) = 0$. This is again a contradiction. With this we have proven the theorem.
\end{proof}

We still want to show for which alphabet sizes the behavior according to Theorem~\ref{T5.2} can occur.

\begin{Theorem}
\begin{enumerate}
    \item If $|\X_1|=|\X_2|=|Y_1|=|Y_2|=2$, then for all $W_l\in\CHl$ with $l=1,2$, we have
    \be\label{e6.1}
    C_0^{FB}(W_1\otimes W_2)=C_0^{FB}(W_1)+C_0^{FB}(W_2).
    \ee
    \item If $\X_1,X_2,\Y_1,\Y_2$ are non-trivial alphabets with
    \[\max\{\min\{|\X_1|,|\Y_1|\},\min\{|X_2|,|\Y_2|\}\}\geq 3,\] then
    there exists $\hW_l\in\CHl$ with $l=1,2$, such that
    \be\label{e6.2}
    C_0^{FB}(\hW_1\otimes \hW_2)>C_0^{FB}(\hW_1)+C_0^{FB}(\hW_2).
    \ee
\end{enumerate}
\end{Theorem}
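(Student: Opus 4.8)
The plan is to reduce both statements to Theorem~\ref{T5.2}, which already pins down exactly when the strict inequality~\eqref{Teq2} can occur: namely, iff condition~\eqref{Teq3} (equivalently~\eqref{Teq4}) holds, i.e.\ iff precisely one of $W_1,W_2$ has $C_0^{FB}=0$ while both have $R_\infty>0$. Hence for Part 1 it suffices to show that no pair of $2\times 2$ channels can satisfy~\eqref{Teq3}, and for Part 2 it suffices to exhibit a pair $\hW_1,\hW_2$ that does.

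For Part 1 I would first record the elementary description of when $R_\infty$ is positive. Since $R_\infty(W)=\log_2\bigl(1/\Psi_\infty(W)\bigr)$ with $0<\Psi_\infty(W)\le 1$, we have $R_\infty(W)>0$ iff $\Psi_\infty(W)<1$; and $\Psi_\infty(W)=1$ exactly when some $Q\in\P(\Y)$ has $\supp Q\subseteq\bigcap_{x\in\X}\{y:W(y|x)>0\}$, so $R_\infty(W)>0$ iff this intersection is empty. When $|\X|=|\Y|=2$, the two rows of $W$ have nonempty supports inside a two-element set, hence these supports are disjoint only if they equal $\{1\}$ and $\{2\}$ in some order; in that case $W$ is a noiseless binary channel, $C_0(W)=1>0$, and therefore $C_0^{FB}(W)=R_\infty(W)>0$. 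Thus, if $W_1,W_2$ are both $2\times 2$ and $\min_l R_\infty(W_l)>0$, then $\min_l C_0^{FB}(W_l)>0$, so~\eqref{Teq3} fails; by Theorem~\ref{T5.2} the inequality~\eqref{Teq1} cannot be strict, which gives the equality~\eqref{e6.1}.

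For Part 2, assume without loss of generality $\min\{|\X_1|,|\Y_1|\}\ge 3$. I would take $\hW_1$ to be the cyclic ``triangle'' channel on the first three input letters and the first three output letters, with $\hW_1(\cdot|1)$ uniform on $\{1,2\}$, $\hW_1(\cdot|2)$ uniform on $\{2,3\}$, $\hW_1(\cdot|3)$ uniform on $\{1,3\}$, any remaining inputs copied from $\hW_1(\cdot|1)$, and any remaining outputs never used. Every pair of rows shares an output, so the single-letter confusability graph is complete, and since a strong product of complete graphs is complete this forces $C_0(\hW_1)=0$, hence $C_0^{FB}(\hW_1)=0$. On the other hand $\{1,2\}\cap\{2,3\}\cap\{1,3\}=\emptyset$, and a one-line computation gives $\Psi_\infty(\hW_1)=2/3$: the three relevant sums of $Q$-masses add up to $2$, so their minimum is at most $2/3$, with equality for the uniform $Q$; therefore $R_\infty(\hW_1)=\log_2(3/2)>0$. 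For $\hW_2$ I would take any channel on $\X_2,\Y_2$ with two inputs of disjoint singleton support (for example, input $1$ mapped to output $1$, input $2$ mapped to output $2$, any remaining inputs mapped to output $1$); then $C_0(\hW_2)\ge 1>0$, so $C_0^{FB}(\hW_2)=R_\infty(\hW_2)>0$ using $C_0\le R_\infty$ on $\{C_0>0\}$. Both channels have rational transition probabilities, hence are computable. Now $\min_l C_0^{FB}(\hW_l)=0$, $\max_l C_0^{FB}(\hW_l)>0$, and $\min_l R_\infty(\hW_l)>0$, so~\eqref{Teq3} holds and Theorem~\ref{T5.2} yields~\eqref{e6.2}.

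The only point that needs care — the place I would regard as the ``hard part'', though it is not really hard — is checking that embedding the triangle channel into arbitrarily large alphabets perturbs neither $C_0$ nor $\Psi_\infty$: this is immediate, since copied rows remain confusable with every row and contribute the same quantity $Q_1+Q_2$ to the inner minimum in $\Psi_\infty$, while outputs that are never used can only decrease the value for a candidate maximizer $Q$. Everything else is already contained in Theorem~\ref{T5.2}.
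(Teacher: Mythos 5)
Your proof is correct, and it rests on the same underlying ingredients as the paper's (the additivity of $R_\infty$ from Theorem~\ref{T5.1}, the identity $C_0^{FB}=R_\infty$ on $\{C_0>0\}$, the rigidity of binary channels with $R_\infty>0$, and a triangle/typewriter-type channel for the strict case), but you route both parts through the if-and-only-if criterion \eqref{Teq3} of Theorem~\ref{T5.2}, whereas the paper argues more directly. For Part 1 the paper performs a case analysis on the $2\times 2$ channels themselves (treating $C_0(W_1)=C_0(W_2)=0$, the noiseless case, and the $R_\infty(W_2)=0$ case via Theorem~\ref{T5.1}), while you simply show that \eqref{Teq3} is unsatisfiable for binary channels because $R_\infty(W)>0$ forces disjoint row supports, hence a noiseless channel with $C_0^{FB}(W)>0$; this is a cleaner reduction and, combined with \eqref{Teq1}, immediately yields \eqref{e6.1}. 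For Part 2 the paper takes the $2\times 2$ identity together with the $3$-ary typewriter channel, verifies the strict inequality by the explicit chain $C_0^{FB}(\hW_1\otimes\hW_2)=R_\infty(\hW_1)+R_\infty(\hW_2)>C_0^{FB}(\hW_1)+C_0^{FB}(\hW_2)$, and only remarks that the general alphabet case is "clear"; you instead place the $C_0=0$, $R_\infty=\log_2(3/2)$ triangle channel on the large alphabet (with duplicated rows and unused outputs) and a noiseless pair of inputs on the other, verify \eqref{Teq3}, and invoke Theorem~\ref{T5.2} for \eqref{e6.2}. What your version buys is a rigorous treatment of the embedding into arbitrary alphabet sizes (checking that duplicated rows keep the confusability graph complete and leave $\Psi_\infty$ unchanged, and that unused outputs attract no mass in the maximization), a detail the paper leaves implicit; what the paper's version buys is a self-contained numerical verification that does not rely on the "if" direction of Theorem~\ref{T5.2}. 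Both are valid.
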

\begin{proof}
\begin{enumerate}
    \item If $C_0(W_1) = C_0(W_2)$, then \eqref{e6.1} holds because $C_0(W_1\otimes W_2) = 0$. If $\max\{C_0(W_1),C_0(W_2)\} > 0$, then without loss of generality $C_0(W_1) = 0$ and
    $W_1 = \begin{pmatrix} 1 &0\\0&1\end{pmatrix}$ or $W_1 = \begin{pmatrix} 0 &1\\1&0\end{pmatrix}$ always applies and therefore $C_0(W_2) = 1$. This means that $C_0^{FB}(W_2) = 1$. Furthermore, if $R_\infty > 0$, then 
    $W_2 = \begin{pmatrix} 1 &0\\0&1\end{pmatrix}$ or $W_2 = \begin{pmatrix} 0 &1\\1&0\end{pmatrix}$, so \eqref{e6.1} is fulfilled. If $R_\infty(W_2) = 0$ holds true, then, due to Theorem~\ref{T5.1}, \eqref{e6.1} is also fulfilled.
    \item We now prove \eqref{e6.2} under the assumption that $|\X_1|=|\Y_1| = 2$ and
    $|\X_2|=|\Y_2| = 3$. If we have found channels $\hW_1, \hW_2$ for this case, such that \eqref{e6.2} holds, then it is also clear how the general case 2. can be proved. We set $\hW_1 = \begin{pmatrix} 1&0\\0&1\end{pmatrix}$, which means $C_0(\hW_1)=C_0^{FB}(\hW_1)=R_\infty(\hW_1) = 1$. For $\hW_2$, we take the 3-ary typewriter channel $\hW_2(\epsilon)$ with $\X_2=\Y_2=\{0,1,2\}$ (see \cite{DP18}): 
    \[
    \hW_2(\epsilon)(y|x)=\left\{ \begin{array}{ll} 1-\epsilon & y=x,\\
    \epsilon & y= x +1 \mod 3. \end{array}\right. 
    \]
        Let $\epsilon\in (0,\frac 12)$ be arbitrary, then $C (\hW_2(\epsilon)) = \log_2(3)-H_2(\epsilon)$. We have $R_\infty(\hW_2(\epsilon)) = \log_2 \frac 32$ and $C_0(\hW_2(\epsilon)) = 0$. This means that $C_0^{FB}(\hW_2(\epsilon)) = 0$. Thus, 
        because $C_0(\hW_1\times \hW_2(\epsilon))\geq C_0(\hW_1) = 1$, 
        \begin{eqnarray*}
        C_0^{FB}(\hW_1\otimes \hW_2(\epsilon))&=&R_\infty(\hW_1) = R_\infty(\hW_2(\epsilon))\\ &=& 1+\log_2(\frac 32)>C_0^{FB}(\hW_1)+C_0^{FB}(\hW_2(\epsilon)) 
        \end{eqnarray*}
        and we have proven case 2.
\end{enumerate}
\end{proof}

\section{Behavior of the Expurgation-Bound Rates}\label{expur}

In this section we consider the behavior of the expurgation-bound rate. 
$R_{ex}^k$ occurs in the expurgation bound as a lower bound for the channel reliability function, where $k$ is the parameter for the $k$-letter description. 
Let $\X_1,\Y_1,\X_2,\Y_2$ be arbitrary finite non-trivial alphabets and  $W_l\in\CHl$ for $l = 1,2$. We want to examine $R_{ex}^k$.
\begin{Theorem}\label{5T3}
There exist non-trivial alphabets $\X_1,\Y_1,\X_2,\Y_2$  and channels $W_l\in\CHl$ for $l = 1,2$, such that for all $\hk$, there exists $k\geq \hk$ with
\[
R_{ex}^k(W_1\otimes W_2) \neq R_{ex}^k(W_1)+R_{ex}^k(W_2).
\]
\end{Theorem}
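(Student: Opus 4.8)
The plan is to derive the asserted non-additivity of the finite-block expurgation rates $R_{ex}^{k}$ from the classical failure of \emph{multiplicativity of the zero-error capacity} under the Kronecker (parallel) product, using only the two facts about $R_{ex}^{k}$ already recorded above, namely $R_{ex}^{k}(W)\le R_{ex}^{k+1}(W)$ and $R_{ex}^{k}(W)\to C_0(W)$ as $k\to\infty$, both valid for every channel $W$.

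First I would reduce the theorem to producing channels $W_1,W_2$ over non-trivial finite alphabets with a \emph{strict} zero-error product gap
\[
C_0(W_1\otimes W_2)\;>\;C_0(W_1)+C_0(W_2).
\]
Given such a pair, set $\delta:=C_0(W_1\otimes W_2)-C_0(W_1)-C_0(W_2)>0$ and pick $k_0$ so large that for all $k\ge k_0$ each of $R_{ex}^{k}(W_1\otimes W_2)$, $R_{ex}^{k}(W_1)$, $R_{ex}^{k}(W_2)$ is within $\delta/3$ of its limit. Then for every $k\ge k_0$
\[
R_{ex}^{k}(W_1\otimes W_2)-R_{ex}^{k}(W_1)-R_{ex}^{k}(W_2)\;>\;\delta-3\cdot\tfrac{\delta}{3}\;=\;0,
\]
so $R_{ex}^{k}(W_1\otimes W_2)\ne R_{ex}^{k}(W_1)+R_{ex}^{k}(W_2)$, and for any $\hk$ the choice $k=\max\{\hk,k_0\}$ establishes the theorem.

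To produce the gap I would pass to the graph picture of zero-error coding. Attach to a channel $W$ its confusability graph $G_W$ on the input alphabet (with $x\sim x'$ iff $W(y|x)>0$ and $W(y|x')>0$ for some output $y$); then $C_0(W)=\log_2\Theta(G_W)$, where $\Theta(G)=\lim_n\alpha(G^{\boxtimes n})^{1/n}$ is the Shannon capacity of the graph, and reading off the definition of the Kronecker product gives $G_{W_1\otimes W_2}=G_{W_1}\boxtimes G_{W_2}$ (strong graph product). Conversely, any finite simple graph $G=(V,E)$ is realized as a confusability graph of a (computable) channel: take $\X=V$, $\Y=V\sqcup E$, and let $W(\cdot\mid v)$ be uniform on $\{v\}\cup\{e\in E:v\in e\}$, so that two distinct inputs share an output symbol exactly when they span an edge of $G$. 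Finally I would invoke Haemers' classical construction disproving Shannon's multiplicativity conjecture \cite{Haemers79}: there exist finite graphs $G_1,G_2$ with $\Theta(G_1\boxtimes G_2)>\Theta(G_1)\,\Theta(G_2)$. Realizing $G_1,G_2$ by channels $W_1,W_2$ as above yields $C_0(W_1\otimes W_2)=\log_2\Theta(G_1\boxtimes G_2)>\log_2\Theta(G_1)+\log_2\Theta(G_2)=C_0(W_1)+C_0(W_2)$, with all alphabets large, hence non-trivial.

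I expect the only genuinely non-elementary step to be the existence of graphs $G_1,G_2$ with strictly super-multiplicative Shannon capacity under the strong product --- equivalently, channels with a strict zero-error product gap. This is precisely the (deep) failure of Shannon's conjecture, proved by Haemers via the minrank/Haemers bound, and I would cite it rather than reprove it. Everything else --- the identity $C_0(W)=\log_2\Theta(G_W)$, the strong-product formula for confusability graphs of Kronecker products, the realization of an arbitrary finite graph by a channel, and the $\delta/3$ limiting argument --- is routine.
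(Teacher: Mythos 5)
Your proposal is correct and rests on the same two facts the paper uses---the monotone convergence $\lim_{k\to\infty}R_{ex}^k(W)=C_0(W)$ together with the existence of channels for which $C_0$ is strictly superadditive under $\otimes$---so it is essentially the paper's argument, run directly rather than by contradiction. Your additions are fine but inessential to the comparison: you make the superadditive pair explicit via confusability graphs and Haemers' counterexample to multiplicativity of the Shannon graph capacity (the paper simply takes such channels as given), and your $\delta/3$ step in fact yields the stronger conclusion that the inequality is strict for all sufficiently large $k$, which the paper states and proves as a separate follow-up theorem.
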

\begin{proof}
Assume that for all $\X_1,\Y_1,\X_2,\Y_2$ and $W_l\in\CHl$ with $l = 1,2$ for all $k\in\NN$, 
\[
R_{ex}^k(W_1\otimes W_2)=R_{ex}^k(W_1)+R_{ex}^k(W_2). 
\]
We now take $\X_1',\Y_1',\X_2',\Y_2'$ such that $C_0$ is superadditive. 
Then we have for certain $W_1',W_2'$ with $W_l'\in\CHlp$,  
\be\label{TA1}
C_0(W_1'\otimes W_2')> C_0(W_1')+C_0(W_2'). 
\ee
Then 
\begin{eqnarray*}
C_0(W_1'\otimes W_2') &=& \lim_{k\to\infty} R_{ex}^k(W_1'\otimes W_2')\\
&=& \lim_{k\to\infty} R_{ex}^k(W_1')+R_{ex}^k(W_2')\\
&=& C_0(W_1')+C_0(W_2'). 
\end{eqnarray*}
This is a contradiction and thus the theorem is proven.
\end{proof}
We improve the statement of Theorem~\ref{5T3} with the following theorem.
\begin{Theorem}
There exist non-trivial alphabets $\X_1,\Y_1,\X_2,\Y_2$ , channels $W_l\in\CHl$ for $l = 1,2$ and a $\hk$, such that for all $k\geq \hk$, 
\[
R_{ex}^k(W_1\otimes W_2)>R_{ex}^k(W_1)+R_{ex}^k(W_2) 
\]
holds true.
\end{Theorem}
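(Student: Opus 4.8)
The plan is to strengthen the argument behind Theorem~\ref{5T3} by bringing in the \emph{monotonicity} of the sequence $\{R_{ex}^k(W)\}_{k\in\NN}$ alongside its limit. Recall from the facts collected in Section~\ref{Definitions} that for every channel $W$ one has $R_{ex}^k(W)\leq R_{ex}^{k+1}(W)$ for all $k\in\NN$ and $\lim_{k\to\infty}R_{ex}^k(W)=C_0(W)$. A monotonically increasing sequence that converges to $C_0(W)$ stays below its limit, so in particular $R_{ex}^k(W)\leq C_0(W)$ for every $k\in\NN$; this one-sided estimate is exactly what was not exploited in the proof of Theorem~\ref{5T3}, and it is what lets us replace ``for all $\hk$ there is $k\geq\hk$'' by ``there is $\hk$ with all $k\geq\hk$''.

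First I would fix alphabets $\X_1,\Y_1,\X_2,\Y_2$ and channels $W_l\in{\mathcal CH}(\X_l,\Y_l)$, $l=1,2$, for which the zero-error capacity is \emph{strictly} superadditive, i.e.
\[
C_0(W_1\otimes W_2) > C_0(W_1)+C_0(W_2).
\]
Such channels exist --- this is the super-activation phenomenon for $C_0$, the very same black box already invoked in the proof of Theorem~\ref{5T3} --- and since the theorem only asserts the \emph{existence} of suitable alphabets and channels, there is nothing to do regarding computability. Put $\delta := C_0(W_1\otimes W_2)-C_0(W_1)-C_0(W_2)>0$.

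Next, apply the convergence $R_{ex}^k(W_1\otimes W_2)\to C_0(W_1\otimes W_2)$ to the product channel only: there is a $\hk\in\NN$ with $R_{ex}^k(W_1\otimes W_2)>C_0(W_1\otimes W_2)-\delta$ for all $k\geq\hk$. For such $k$ we then chain
\[
R_{ex}^k(W_1\otimes W_2) > C_0(W_1\otimes W_2)-\delta = C_0(W_1)+C_0(W_2) \geq R_{ex}^k(W_1)+R_{ex}^k(W_2),
\]
the last inequality being the bound $R_{ex}^k(W_i)\leq C_0(W_i)$ applied to $i=1,2$ (valid for \emph{every} $k$, so no threshold is needed for the two factors). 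This gives $R_{ex}^k(W_1\otimes W_2)>R_{ex}^k(W_1)+R_{ex}^k(W_2)$ for all $k\geq\hk$, which is the claim. There is no serious obstacle beyond the superadditivity input that is already borrowed for Theorem~\ref{5T3}; the only point to watch is the bookkeeping of the quantifiers, namely choosing the single threshold $\hk$ from the product channel while using the uniform, threshold-free estimate $R_{ex}^k(W_i)\le C_0(W_i)$ for $W_1$ and $W_2$.
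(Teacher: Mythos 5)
Your proof is correct and rests on exactly the same two ingredients as the paper's own argument: the existence of channels for which $C_0$ is strictly superadditive, and the monotone convergence $R_{ex}^k(W)\to C_0(W)$. The only difference is presentational --- the paper argues by contradiction along a subsequence $k_j\to\infty$ of indices where the inequality would fail, whereas you argue directly, picking the threshold $\hk$ from the product channel alone and using the monotonicity-derived bound $R_{ex}^k(W_i)\leq C_0(W_i)$ for the two factors, which if anything makes the quantifier bookkeeping cleaner.
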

\begin{proof}
Assume the statement of the theorem is false, which means for all channels
$W_l\in\CHl$ with $l = 1,2$, the following applies: 
There exists a sequence $\{k_j\}_{j\in\NN}\subset \NN$ with $\lim_{j\to\infty} k_j = +\infty$, such that 
\[
R_{ex}^{k_l}(W_1\otimes W_2)\leq R_{ex}^{k_l}(W_1)+ R_{ex}^{k_l}(W_2) 
\]
for $l\in\NN$.
We now take $\hX_1,\hY_1,\hX_2,\hY_2$ so that $C_0$ is super-additive for these alphabets. Then we have for certain $\hW_1,\hW_2$ with $\hW_l\in\CHl$ for $l=1,2$,
\be\label{ZA1}
C_0(\hW_1\otimes\hW_2)> C_0(\hW_1)+C_0(\hW_2). 
\ee
Then 
\begin{eqnarray*}
C_0(\hW_1\otimes\hW_2)=\lim_{j\to\infty} R_{ex}^{k_j}(\hW_1\otimes\hW_2)&\leq&
\lim_{j\to\infty} \left(R_{ex}^{k_j}(\hW_1)+R_{ex}^{k_j}(\hW_2)\right)\\ &=&
C_0(\hW_1)+C_0(\hW_2).
\end{eqnarray*}
This is a contradiction to \eqref{ZA1} and thus the theorem is proven.
\end{proof}


We have already seen that for certain rate ranges $[R,\hR]$, the function 
$E(W,\cdot)$ has a completely different behavior. We have already examined the influence of $W_1\otimes W_2$ for the intervals $(R_\infty(W_1\otimes W_2),C(W_1\otimes W_2))$ and $(E_{ex}^k(W_1\otimes W_2), C(W_1\otimes W_2))$ for
$k\in\NN$. For the first interval we have 
\[
(R_\infty(W_1\otimes W_2),C(W_1\otimes W_2))=(R_\infty(W_1) + R_\infty(W_2),C(W_1)+C(W_2)).
\]
For the sequence of the second interval we have seen that such behavior cannot apply. From the proof of Theorem~\ref{T5.2}, we conclude that there exist channels $W_1',W_2'$, such that
\[
R_{ex}^k(W_1'\otimes W_2')> R_{ex}^k(W_1')+R_{ex}^k(W_2') 
\]
holds true for $k\geq\hk$. It is also interesting to understand when the interval 
$[0,\hR)$ $E(W,r)$ must be infinite. This is true if and only if $C_0(W)> 0$, and then this interval is given by $[0,C_0(W))$. Hence, there exist channels $W_1',W_2'$, such that for the function $E(W_1'\otimes W_2',\cdot)$, this interval is greater than $[0,C_0(W_1')+C_0(W_2'))$. 
Then $C_0$ is, in general, super-additive.

\section{Conclusions}
We showed that the reliability function is not a Turing computable performance function. The same also applies to the functions of the sphere packing bound and the expurgation bound. 
It is interesting to note that, in the scope of our work, the constraints imposed on the Turing computable performance function are strictly weaker than those usually required for  Turing computable functions. We do not require that the Turing machine stop for the computation of the performance function for all inputs 

$(W, R)\in\CHc\times\RR_c^+$. This means that we also allow the corresponding Turing machine to compute for certain inputs forever, i.e. it will never stop for certain inputs.
This means that we can allow functions as performance functions that are not defined for all $(W,R)\in\CHc\times\RR_c^+$. However, we do require the Turing machine to stop for input $(W,R)\in\CHc\times\RR_c$ whenever $F$ is defined, in which case it returns the computable number $F (W, R)$ as output. This means that an algorithm is generated at the output that represents the number $F (W, R)$ according to Definition~\ref{performance}.

Furthermore, we considered the $R_\infty$ function and the zero-error feedback capacity; both of them play an important role for the reliability function. Both the $R_\infty$ function and the zero-error feedback capacity are not Banach Mazur computable. We showed that the $R_\infty$ function is additive. 

We showed that for all finite alphabets $\X,\Y$ with $|\X|\geq 2$ and $|\Y|\geq 2$, the channel reliability function itself is not a Turing computable performance function. We also showed that the usual bounds, which have been extensively examined in the literature so far, are not Turing computable performance functions. It is unclear whether one can find non-trivial upper bounds for the channel reliability function at all, which are Turing's computable performance functions.
 In \cite{SGB67} the sequence of the k-letter expurgation bounds is considered to be very good channel data for approximating the channel reliability function. It was hoped that these sequences could be computed more easily with the use of new powerful digital computers.
 We showed that unfortunately, this is not possible.

 As already mentioned in the introduction, there are very strict requirements for trustworthiness in future communication systems such as 6G. Ultra-reliability with the corresponding performance functions is of central importance for 6G and is considered in this paper. It is unclear what influence the non-Turing computability of performance functions will have on the system evaluation and system certification of future communication systems. It was shown in the recent publication \cite{N10} for Artificial Intelligence (AI) that the non-Turing computability for AI performance functions leads to digital algorithms for AI not being able to fulfill central legal requirements. It is an interesting research question whether similar statements also apply to communication systems.

\section*{Acknowledgments}
Holger Boche thanks Martin Bossert for discussions and questions on the theory of the channel reliability function and on questions about the trustworthiness of numerical simulations on digital computers of the channel reliability function. Holger Boche also thanks Vince Poor and Martin Bossert for discussions at ISIT 2019 in Paris. These discussions initiated the research work whose results are presented in this paper. The authors acknowledge the financial support by the Federal Ministry of Education and Research
of Germany (BMBF) in the programme of “Souverän. Digital. Vernetzt.”. Joint project 6G-life, project identification number: 16KISK002. %
H. Boche and C. Deppe acknowledge the financial support
from the BMBF quantum programme QuaPhySI under Grant
16KIS1598K, QUIET under Grant 16KISQ093, and the QC-
CamNetz Project under Grant 16KISQ077. They were also sup-
ported by the DFG within the project "Post Shannon Theorie
und Implementierung" under Grants BO 1734/38-1 and DE 1915/2-1.
We thank the DFG under Grant BO 1734/20-1 for the support of H. Boche.
Thanks also go to the BMBF within the national initiative under Grant 16KIS1003K for their support of H. Boche and under Grant 16KIS1005 for their support of C. Deppe.
Finally, we thank Yannik B\"ock for his helpful and insightful comments.

\bibliographystyle{AIMS} 
\bibliography{references}

\end{document}